\documentclass[11pt]{article}
 \usepackage[letterpaper, left=0.6in, right=0.6in, top=1in, bottom=1in]{geometry}
\usepackage{graphicx}
\usepackage{physics}
\usepackage{amsmath}
\usepackage{amssymb}
\usepackage{amsthm}
\usepackage{enumitem}
\usepackage{framed}
\usepackage{caption}
\usepackage{subfig}
\usepackage{adjustbox}
\usepackage{multirow}
\usepackage{multicol}
\usepackage{listings}
\usepackage{tikz}
\usepackage{xcolor}
\usepackage{bm}
\usepackage{mathtools}
\usepackage{algorithm}
\usepackage{algpseudocode,eqparbox}
\usepackage[T1]{fontenc}
\usepackage{wrapfig}
\usetikzlibrary{arrows,decorations.pathmorphing,decorations.footprints,decorations.pathreplacing,fadings,calc,trees,mindmap,shadows,decorations.text,patterns,positioning,shapes,matrix,fit}
\usetikzlibrary{shapes.misc}

\mathtoolsset{showonlyrefs}
\allowdisplaybreaks
\usepackage[colorlinks,citecolor=red,urlcolor=blue,bookmarks=false,hypertexnames=true]{hyperref}

\usepackage{letltxmacro}
\LetLtxMacro{\originaleqref}{\eqref}
\renewcommand{\eqref}{Eq.~\originaleqref}

\definecolor{codegreen}{rgb}{0,0.6,0}
\definecolor{codegray}{rgb}{0.5,0.5,0.5}
\definecolor{codepurple}{rgb}{0.58,0,0.82}
\definecolor{backcolour}{rgb}{0.95,0.95,0.92}

\usepackage{color}
\definecolor{bitcolor}{rgb}{0.99,0.93,0.0}
\definecolor{checkcolor}{rgb}{0.52941,0.80784,1}
\definecolor{channelcolor}{rgb}{0.67,0.88,0.69}
\definecolor{infocolor}{rgb}{0.82,0.62,0.91}
\definecolor{cqcolor}{rgb}{0.99,0.56,0.67}
\lstdefinestyle{mystyle}{
    backgroundcolor=\color{backcolour},   
    commentstyle=\color{codegreen},
    keywordstyle=\color{magenta},
    numberstyle=\tiny\color{codegray},
    stringstyle=\color{codepurple},
    basicstyle=\ttfamily\footnotesize,
    breakatwhitespace=false,         
    breaklines=true,                 
    captionpos=b,                    
    keepspaces=true,                 
    numbers=left,                    
    numbersep=5pt,                  
    showspaces=false,                
    showstringspaces=false,
    showtabs=false,                  
    tabsize=2
}

\usepackage{stmaryrd}

\usepackage[style=ieee,backend=biber,url=false,doi=false,eprint=false]{biblatex}

\bibliography{ref}

\usepackage{ifthen}

\newif\ifarxiv
\arxivtrue

\newif\ifisit
\isitfalse

\newif\ifextra
\extrafalse

\usepackage{mathtools}

\usetikzlibrary{shapes.geometric, arrows}
\tikzstyle{startstop} = [rectangle,  minimum width=3cm, minimum height=1cm,text centered,text width=10cm, draw=black ,fill=gray!20]
\tikzstyle{process} = [rectangle, minimum width=3cm, minimum height=1cm, text centered,text width=10cm, draw=black,fill=orange!20]
\tikzstyle{arrow} = [thick,->,>=stealth]
\tikzstyle{state}=[shape=circle,draw=blue!50,fill=blue!20]
\tikzstyle{observation}=[shape=rectangle,draw=orange!50,fill=orange!20]
\tikzstyle{lightedge}=[<-,dotted]
\tikzstyle{mainstate}=[state,thick]
\tikzstyle{mainedge}=[<-,thick]
\usepackage{color}
\definecolor{bitcolor}{rgb}{1,0.84314,0}
\definecolor{checkcolor}{rgb}{0.52941,0.80784,1}
\usepackage{pgfplots}
\pgfplotsset{compat=1.18}
\renewcommand{\epsilon}{\varepsilon}

\newtheorem{theorem}{Theorem}

\newtheorem{lem}[theorem]{Lemma}
\newtheorem{defn}[theorem]{Definition}

\newtheorem{example}[theorem]{Example}
\newcommand{\cF}{\mathcal{F}}
\newcommand{\cG}{\mathcal{G}}
\newcommand{\cH}{\mathcal{H}}

\newcommand{\cM}{\mathcal{M}}

\newcommand{\cT}{\mathcal{T}}
\newcommand{\cU}{\mathcal{U}}

\newcommand{\cX}{\mathcal{X}}
\newcommand{\cY}{\mathcal{Y}}

\newcommand{\bg}{\bm{g}}

\newcommand{\bw}{\bm{w}}

\newcommand{\by}{\bm{y}}

\newcommand{\mI}{\mathbb{I}}

\newcommand{\sym}[1]{S_N}

\newcommand{\F}[1]{\widehat{#1}}

\newcommand{\vnop}{\varoast}

\newcommand{\cnop}{\boxast}

\newcommand{\blambda}{\bm{\lambda}}
\newcommand{\psuc}{P_{\text{suc}}}
\newcommand{\perr}{P_{\text{err}}}

\newcommand{\psia}{\psi^{(1)}}
\newcommand{\psib}{\psi^{(2)}}
\newcommand{\lambdaa}{\lambda^{(1)}}
\newcommand{\lambdab}{\lambda^{(2)}}

\newcommand{\bmu}{\bm{\mu}}

\newcommand{\mone}{\bm{1}}
\newcommand{\mzero}{\bm{0}}

\newcommand{\hcG}{\widehat{\cG}}

\newcommand{\hcH}{\widehat{\cH}}

\newcommand{\hcU}{\widehat{\cU}}

\newcommand{\hcGm}{\widehat{\cG^m}}
\newcommand{\hcGma}{\widehat{\cG^{m+1}}}
\newcommand{\chitriv}{\chi_{\text{triv}}}
\newcommand{\image}{\text{Im}}

\usepackage[title]{appendix}
\usepackage{authblk}

\title{Quantum Message Passing for Factor Graphs over Finite Abelian Groups}

\author[1,3]{Avijit Mandal\footnote{avijit.mandal@duke.edu}}
\author[1,2,3]{Henry D. Pfister}
\affil[1]{Department of Electrical and Computer Engineering, Duke University}
\affil[2]{Department of Mathematics, Duke University}
\affil[3]{Duke Quantum Center, Duke University}

\date{}

\begin{document}
\onecolumn
\maketitle

% \begin{abstract}
%     Factor graphs provide the graphical representation for the conditional independence structure. Consider a set of classical random variables from a group and a group covariant pure state channel (PSC) that maps each element in the group $\cG$ to a quantum state, the quantum message passing builds the framework for obtaining sufficient statistics for the quantum states and propagating coherently along factor graphs. This work focuses on abelian groups and provides a rigorous analysis for quantum message passing based on group characters. This generalizes the belief propagation with quantum messages (BPQM) framework to abelian group covariant PSCs. 
% \end{abstract}

\begin{abstract}
% We develop a quantum message-passing framework for factor graphs over finite abelian groups. The channel model is a group-covariant pure-state channel (PSC), where the input alphabet is a finite abelian group $\cG$ and the output states transform covariantly under the group action. For such channels, we show that the Gram matrix is diagonalized by the character basis of the dual group $\hcG$, so the channel is characterized, up to isometric equivalence, by its character-indexed eigen list.

% Based on this representation, we analyze the induced classical-quantum channels associated with check, equality, homomorphism, marginalization, and automorphism factors. For each factor, we derive explicit update rules showing that if the incoming messages are group-covariant PSCs, or finite heralded mixtures of such PSCs, then the outgoing message remains in the same class. This gives a closed quantum message-passing calculus for tree-structured factor graphs assembled from these primitives.

% The framework applies directly to several standard code families over finite abelian groups, including polar codes, LDPC, group codes, and convolutional and turbo codes. It recovers the previously studied \(q\)-ary formulation as the special case \(\cG=\mathbb{Z}_q\), while extending BPQM to non-cyclic alphabets and more general factor-graph constraints described by homomorphisms between products of abelian groups.

% where the states are related by the group action. This setting is naturally described by a
We develop a quantum message-passing framework for factor graphs over finite abelian groups. Our starting point is the task of discriminating between a collection of quantum states indexed by the elements of a finite abelian group $\cG$  whose overlaps respect the structure of a group-covariant pure-state channel (PSC). For such channels, we show that the Gram matrix constructed from the output states is diagonalized by the character basis of the dual group $\hcG$. Hence, the channel is characterized, up to isometric equivalence, by its character-indexed eigen list.

Based on this representation, we analyze the induced classical-quantum channels associated with check, equality, homomorphism, marginalization, and automorphism factors. For each factor, we derive explicit update rules showing that if the incoming messages are heralded mixtures of group-covariant PSCs, then the outgoing message remains in the same class. This provides a closed quantum message-passing framework for tree-structured factor graphs assembled from these primitives.

The framework applies directly to several standard code families over finite abelian groups, including polar codes, LDPC codes, and convolutional and turbo codes. It recovers the previously studied \(q\)-ary formulation as the special case \(\cG=\mathbb{Z}_q\), while extending the belief propagation with quantum messages (BPQM) framework introduced by Renes to non-cyclic alphabets and more general factor-graph constraints described by homomorphisms between products of abelian groups.

\end{abstract}

\section{Introduction}

Factor graphs provide a common language for inference in coding theory, machine learning, statistical physics, and constraint satisfaction. They isolate the local constraints of a model and thereby expose algorithmic structure that can be exploited for efficient message-passing algorithms. In coding theory, this viewpoint is particularly powerful because sparse-graph-based codes (LDPC), trellis-based codes (convolutional and turbo codes), and polar constructions can all be described in terms of local factor operations such as equality constraints, parity constraints, and more general homomorphisms \cite{arikan2009channel, gallager1962low, berrou1993near, mceliece1998turbo, mceliece2002bcjr, bahl2003optimal, forney2002codes, forney2014codes}. For classical channels, belief propagation and related message-passing methods then yield exact marginalization on trees and effective approximate decoding rules on graphs with cycles.

For classical-quantum (CQ) channels, the corresponding decoding problem is substantially more difficult. Holevo, Schumacher, and Westmoreland \cite{holevo1998capacity, schumacher1997sending} showed that reliable communication up to capacity is achievable in principle by collective measurements on long output blocks, but such measurements are generally difficult to describe and implement in a structured way. This motivates the search for quantum decoders that retain some of the architectural advantages of classical message passing while operating coherently on quantum outputs \cite{wilde2012polar,wilde2013towards,mandal2025reed}. Belief propagation with quantum messages (BPQM) is one such approach in which, rather than measuring each channel output separately, it combines quantum sufficient statistics recursively along the factor graph, mirroring the structure of a classical belief-propagation decoder \cite{Renes-njp17,rengaswamy2021belief, brandsen2022belief, mandal2023belief,mandal2024polar,mandal2026belief, piveteau2025efficient}.

% Prior BPQM constructions and analyses have focused mainly on binary settings, and more recently on symmetric $q$-ary pure-state channels \cite{mandal2026belief} whose Gram matrices are circulant. In the $q$-ary case, the key observation is that the effect of local combining rules can be tracked entirely through the eigenvalues of the output Gram matrix, independent of the particular physical realization of the channel states. While this already yields useful recursions for check-node and bit-node combining, many coding and inference problems are more naturally expressed not in terms of a specific cyclic alphabet, but in terms of finite abelian groups, homomorphisms between such groups, and automorphisms acting on them. From the factor-graph point of view, this is the natural level of generality.

% In this work, we develop a quantum message-passing framework for factor graphs over finite abelian groups. We consider $\cG$-covariant pure-state channels (PSCs), where the input alphabet is a finite abelian group $\cG$ and the output states transform covariantly with respect to the group action. Our starting point is that, for such channels, the Gram matrix is diagonalized by the character basis of the dual group $\hcG$. This implies that the channel is characterized, up to isometric equivalence, by its eigen list indexed by characters of $\hcG$. Thus, the natural sufficient statistic for quantum message passing is no longer tied to a particular alphabet representation, but instead to the Fourier analysis of the underlying abelian group.

Prior BPQM constructions and analyses have focused mainly on binary settings \cite{Renes-njp17,brandsen2022belief,mandal2023belief,mandal2024polar}. In \cite{mandal2026belief}, BPQM was extended to symmetric $q$-ary pure-state channels whose Gram matrices are circulant, and it was shown that the effect of the local combining rules can be tracked via the eigen lists of the output Gram matrices, without regard for the particular physical realization of the channel states. 
% In \cite{piveteau2025efficient}, a closely related binary BPQM representation was formulated in terms of probability distributions associated with the residual channels, in the setting of convolutional and turbo decoding over qubit PSCs. 
In \cite{piveteau2025efficient}, a closely related binary BPQM representation was formulated in terms of distribution ensembles associated with BPQM messages, with applications to convolutional and turbo decoding over qubit PSCs.
These developments suggest that the relevant sufficient statistics for quantum message passing can be described  using representations adapted to the symmetry of the underlying alphabet.

In this work, we develop a quantum message-passing framework for factor graphs over finite abelian groups. Our approach is based on $\cG$-covariant pure-state channels (PSCs), where the input alphabet is a finite abelian group $\cG$ and the output states transform covariantly with respect to the group action. Our starting point is that, for such channels, the Gram matrix is diagonalized by the character basis of the dual group $\hcG$. This implies that the channel is characterized, up to isometric equivalence, by its eigen list indexed by characters of $\hcG$. In this sense, the eigen list representation from \cite{mandal2026belief} extends naturally from the cyclic case to the abelian-group setting considered here.

Using this approach, we analyze induced CQ channels associated with basic local factors that arise in abelian-group factor graphs. These are check factors, equality factors, homomorphism factors, marginalization factors, and automorphism factors. Our main goal is to show that these local operations preserve a tractable class of messages. 
% More specifically, we show that when the incoming messages are group-covariant PSCs, or, more generally, finite heralded mixtures of such PSCs, the outgoing message produced by each local factor remains in the same class and can be described explicitly through transformed eigen lists and classical side information. 
Specifically, we show that each local factor preserves this class of messages. If the incoming messages are heralded mixtures of group-covariant PSCs, then the outgoing message remains in the same class and admits an explicit description through transformed eigen lists and classical side information
This gives a closed quantum message-passing calculus for tree-structured factor graphs assembled from these primitives.

The resulting framework unifies several coding constructions within a single formalism. Polar transforms over finite abelian groups can be decomposed into automorphism, check, and equality factors, so their synthetic channels can be tracked recursively using the same eigen list machinery. Tanner graphs for LDPC and more general group codes are built from equality and parity-type homomorphism constraints, and therefore admit local quantum message-passing update rules directly from our factor analysis. Trellis-based constructions such as convolutional and turbo codes also fit naturally into this framework through finite-state-machine descriptions whose branch operations are homomorphisms followed by marginalization.
For example, our approach generalizes some results in~\cite{piveteau2025efficient} for binary convolutional codes on qubit PSCs and extends the previously studied $q$-ary setting~\cite{mandal2026belief} to a broader group-theoretic decoding framework that includes non-cyclic alphabets and more general local constraints.

The main contributions of this paper are as follows. First, we characterize finite abelian group-covariant PSCs through the eigen decomposition of their Gram matrices in the character basis. Second, we derive explicit quantum message-passing update rules for a set of specific local factor types associated with abelian group factor graphs. Third, we show that these rules are closed under composition on trees through the class of finite heralded mixtures of group-covariant PSCs. Finally, we demonstrate how this machinery specializes to standard code families, including polar codes, LDPC, group codes, convolutional codes, and turbo codes.
% For polar and LDPC/group-code constructions, our emphasis is on showing that the existing eigen-list based quantum message-passing description extends once the local constraints are written in terms of abelian-group homomorphism and automorphism factors; the more detailed development is reserved for convolutional and turbo decoding, where the finite-state-machine recursion with marginalization is developed explicitly in the present framework.
Taken together, these results provide a rigorous abelian-group extension of BPQM and identify the dual-group character domain as the natural representation for structured quantum inference on such factor graphs. 

The remainder of the paper is organized as follows. Section~\ref{sec:background} reviews the necessary background on finite abelian groups, characters, and group-covariant PSCs. Section~\ref{sec:quantum message passing factor graph lemmas} develops the local quantum message-passing rules for the factor types considered in the paper and establishes their closure properties. Section~\ref{sec:application to decoding} explains how these rules apply to several standard decoding architectures over finite abelian groups. Section~\ref{sec:DE} outlines how this quantum message-passing framework is used for the density-evolution analysis of tree factor-graph based codes over finite abelian groups.

\section{Background}\label{sec:background}
\subsection{Notation}
The natural numbers are denoted by $\mathbb{N}=\{1,2,\ldots\}$ and
$\mathbb{N}_0=\mathbb{N}\cup\{0\}$. For $m\in\mathbb{N}_0$, we use the shorthand
\[
[m]\coloneqq \{0,\ldots,m-1\}.
\]
We use boldface notation to denote vectors, e.g., $\by$ and $\bw$. The symbols $\mzero$ and $\mone$ denote the all-zero and all-one vector or matrix, respectively, whenever the meaning is clear from the context. The identity matrix is denoted by $\mI$. For a matrix $M$, the entry in row $i$ and column $j$ is denoted by $M_{i,j}$, while $M_{i,:}$ denotes the $i$th row of $M$.
 We use calligraphic letters such as $\cG,\cH,\cU$ to denote finite groups and $\hcG,\hcH,\hcU$ to denote their dual groups. For a group $\cG$, let $\{\ket{g}\}_{g\in\cG}$ denote the canonical orthonormal basis of a $|\cG|$-dimensional Hilbert space.
\subsection{Finite Abelian Groups and Homomorphisms}

Throughout the paper, all groups are finite and abelian. For group elements
$g_1,g_2\in\cG$, the group operation is denoted by $g_1g_2$. 
% \AM{not needed} For the sake of
% simplicity, we do not explicitly write a separate symbol for the group operation.
The identity element of $\cG$ is denoted by $e$, and for $g\in\cG$, the inverse element is denoted by $g^{-1}$. If $\cG_1,\ldots,\cG_d$ are finite abelian groups, then $ \cG_1\times\cdots\times\cG_d$ denotes their direct product, equipped with componentwise group operation. A map $\phi \colon \cG\to\cH$ is a homomorphism if
\[
\phi(g_1g_2)=\phi(g_1)\phi(g_2), \qquad \forall g_1,g_2\in\cG.
\]
Its kernel and image are defined by
\[
\ker \phi \coloneqq \{g\in\cG:\phi(g)=e\},
\qquad
\image \phi \coloneqq \{\phi(g):g\in\cG\}.
\]
An automorphism of $\cG$ is a bijective homomorphism from $\cG$ to itself. If $\phi:\cG\to\cH$ is surjective, then for every $h\in\cH$, the set
\[
\{g\in\cG:\phi(g)=h\}
\]
has cardinality $|\ker\phi|$. In particular,
\[
|\cG| = |\ker\phi|\,|\cH|.
\]
This identity will be used in the normalization of the induced channel associated with a homomorphism factor.
% \subsection{Group Characters}

% For a finite abelian group $\cG$, the dual group $\hcG$ consists of all characters,
% that is, all homomorphisms $\chi:\cG\to\mathbb{C}^{\times}$. The trivial character is
% denoted by $\chitriv$ and satisfies $\chitriv(g)=1$ for all $g\in\cG$.

% For two characters $\chi_1,\chi_2\in\hcG$, define their product by
% \[
% (\chi_1\chi_2)(g)\coloneqq \chi_1(g)\chi_2(g), \qquad g\in\cG.
% \]
% It is easy to check that $\chi_1\chi_2\in\hcG$. For a character $\chi\in\hcG$, its
% inverse character $\chi^{-1}\in\hcG$ is defined by
% \[
% \chi\chi^{-1}=\chitriv.
% \]
% Equivalently,
% \[
% \chi^{-1}(g)=\chi(g^{-1})=\frac{1}{\chi(g)}=\overline{\chi(g)}, \qquad g\in\cG.
% \]
\subsection{Group Characters}
For an abelian group $\cG$, define the dual group $\hcG$ to be the set of characters $\chi$, which are group homomorphisms
$\chi :\cG \to \mathbb{T}$ where $\mathbb{T}=\{z\in \mathbb{C}\colon |z|=1\}$ is the unit circle under multiplication~\cite{luong2009fourier}. It follows that 
\begin{align*}
    \chi(g_{1}g_{2})=\chi(g_1)\chi(g_{2}), \forall g_{1},g_{2}\in \cG,\forall\chi\in\hcG.
\end{align*}
Since $\cG$ is finite, each character $\chi\in\hcG$ satisfies $|\chi(g)|=1$ for all  $g\in \cG$. The group $\hcG$ is equipped with trivial character $\chitriv$ such that $\chitriv(g)=1$ for all $g\in \cG$.
For two characters $\chi_1, \chi_2$, we define $\chi_1\chi_2$ as the character such that
\begin{align*}
   (\chi_{1}\chi_{2})(g) := \chi_{1}(g)\chi_{2}(g).
\end{align*}
One can readily check that $\chi_1\chi_2\in \hcG$. For character $\chi$, let $\chi^{-1}$ denote its inverse which satisfies
\begin{align*}
    \chi\chi^{-1}=\chitriv.
\end{align*}
For a character $\chi\in\hcG$, the inverse character $\chi^{-1}\in \hcG$ is unique and satisfies the following relations $\forall g\in \cG$
\begin{align*}
    \chi^{-1}(g)=\chi(g^{-1})=\frac{1}{\chi(g)}=\overline{\chi(g)}.
\end{align*}
For characters $\chi_{1},\chi_{2}\in \hcG$, the following orthogonality relation holds
    \begin{align*}
        \sum_{g\in\cG}\chi_{1}^{-1}(g)\chi_{2}(g)= \sum_{g\in\cG}\overline{\chi_{1}(g)}\chi_{2}(g)=|\cG|\delta_{\chi_{1},\chi_{2}}.
    \end{align*}
For character $\chi\in \hcG$, define the character state as 
\begin{align*}
    \ket{\chi}\coloneqq\frac{1}{\sqrt{|\cG|}}\sum_{g\in\cG}\chi^{-1}(g)\ket{g}.
\end{align*}
\begin{lem}
    The states $\{\ket{\chi}\}_{\chi\in \hcG}$ form an orthonormal basis
\end{lem}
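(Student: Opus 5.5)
The plan is to compute inner products directly from the definition of the character states and reduce orthonormality to the orthogonality relation for characters stated just above. Then we use a dimension count to upgrade the orthonormal family to a basis.

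\textbf{Step 1 (inner products).} For $\chi_1,\chi_2\in\hcG$, expand both states in the canonical basis $\{\ket{g}\}_{g\in\cG}$ and use $\braket{g}{g'}=\delta_{g,g'}$. This gives
\begin{align*}
\braket{\chi_1}{\chi_2}=\frac{1}{|\cG|}\sum_{g\in\cG}\overline{\chi_1^{-1}(g)}\,\chi_2^{-1}(g)=\frac{1}{|\cG|}\sum_{g\in\cG}\chi_1(g)\,\overline{\chi_2(g)},
\end{align*}
where we use $\chi^{-1}(g)=\overline{\chi(g)}$. By the stated orthogonality relation (applied with the roles conjugated, i.e., taking the complex conjugate of $\sum_g\overline{\chi_1(g)}\chi_2(g)=|\cG|\delta_{\chi_1,\chi_2}$), this equals $\delta_{\chi_1,\chi_2}$. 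Hence the family is orthonormal, and in particular linearly independent.

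\textbf{Step 2 (spanning).} It remains to show $|\hcG|=|\cG|$, since the ambient space has dimension $|\cG|$. This is the only nonroutine point. Orthonormality already gives $|\hcG|\le|\cG|$. For the reverse inequality, I would invoke the structure theorem: write $\cG\cong\mathbb{Z}_{n_1}\times\cdots\times\mathbb{Z}_{n_k}$. For $\mathbb{Z}_n$, the maps $x\mapsto e^{2\pi i a x/n}$ with $a\in[n]$ give $n$ distinct characters. Products of such characters over the factors give $\prod_i n_i=|\cG|$ distinct characters of $\cG$.

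Alternatively, one could cite the standard fact $\hcG\cong\cG$ for finite abelian groups, as in the reference the paper uses for characters. Either way $|\hcG|=|\cG|$, so $|\cG|$ orthonormal vectors in a $|\cG|$-dimensional space form an orthonormal basis.

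I expect Step 2 to be the main (mild) obstacle. The orthogonality relation quoted in the paper does not by itself give the count of characters, so some structural input about finite abelian groups is needed.
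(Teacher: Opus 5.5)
Your proposal is correct and matches the paper's proof: the paper computes $\braket{\chi_1}{\chi_2}=\frac{1}{|\cG|}\sum_{g\in\cG}\chi_1(g)\chi_2^{-1}(g)=\delta_{\chi_1,\chi_2}$ exactly as in your Step 1, and then concludes with the same dimension count. The only difference is that the paper simply asserts $|\hcG|=|\cG|$ as a known fact, whereas your Step 2 actually justifies it, via the structure theorem or by citing $\hcG\cong\cG$, after first noting that orthonormality already gives $|\hcG|\le|\cG|$.
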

\begin{proof}
For $\chi_1,\chi_2\in \hcG$, we have 
\begin{align*}
    \braket{\chi_1}{\chi_2} & =\frac{1}{|\cG|}\sum_{g,g'\in\cG }\chi_{1}(g)\chi_{2}^{-1}(g')\braket{g}{g'}\\
    & = \frac{1}{|\cG|}\sum_{g\in \cG}\chi_{1}(g)\chi_{2}^{-1}(g)\\
    & = \delta_{\chi_1,\chi_2}.
\end{align*}
    Since $|\hcG|=|\cG|$, this orthonormal family has the full dimension of the Hilbert space and hence is an orthonormal basis.
\end{proof}
If $\phi\colon \cG\to\cH$ is a homomorphism, then it induces a dual map $\hat{\phi}:\hcH\to\hcG$ defined by $\hat{\phi}(\chi)\coloneqq \chi\circ\phi$. When $\phi$ is surjective, the map $\hat{\phi}$ is injective.
\subsection{Quantum Preliminaries}
\begin{defn}
    A classical-quantum (CQ) channel $W\colon x\rightarrow \rho_{x}$ takes a classical input $x$ from a finite alphabet $\cX$ and outputs a density matrix $\rho_{x}$. 
\end{defn}

\begin{defn}
    Two CQ channels $W$ and $W'$ with the same input alphabet are isometrically (or unitarily) equivalent if there exists an isometry (or unitary) $V$, such that $W'(x)=VW(x)V^{\dagger}$  for all $x\in \cX$.
\end{defn}

\begin{defn}
    A CQ channel $W\colon x\rightarrow \rho_{x}$ is called a pure state channel (PSC), if the output state $\rho_{x}=\ketbra{\psi_x}{\psi_x}$ is a pure state, i.e. a rank-1 matrix  for all $x\in \cX$.
\end{defn}
\begin{defn}
Consider a finite group $\cG$ and a unitary representation $\{U_{g}\}_{g\in \cG}$ of the group.
    A CQ channel $W\colon g\rightarrow \rho_g$, with input alphabet $\cG$, is called a $\cG$-covariant CQ channel if the output states satisfy the following relation
    \begin{align}\label{eq:CQ group covariance}
        \rho_{g'g}=U_{g'}\rho_{g}U_{g'}^{\dagger},\quad \forall g,g'\in \cG.
    \end{align}
\end{defn}
% If there exists a group which satisfies the relation in \eqref{eq:CQ group covariance} for some group, we will refer to $W$ as a group-covariant CQ channel.
More generally, if there exists some finite group $\cG$ and the unitary representation $\{U_g\}_{g\in \cG}$ for which the covariance relation in \eqref{eq:CQ group covariance} holds, then we refer to $W$ simply as a group-covariant CQ channel.

\begin{defn}
Let $\cG$ be a finite group and let $\{U_g\}_{g\in\cG}$ be a unitary representation of $\cG$.
A pure-state channel $W:g\mapsto \ketbra{\psi_g}{\psi_g}$ is called $\cG$-covariant if the following relation holds
\begin{align*}
    \ket{\psi_{g'g}}=U_{g'}\ket{\psi_g},\qquad \forall g,g'\in \cG.
\end{align*}

\end{defn}
\begin{defn}
    A matrix $M\in \mathbb{C}^{|\cG|\times |\cG|}$ indexed by group elements $g\in \cG$, is called $\cG$-circulant, if the elements of $M$ satisfy the following relation 
    \begin{align*}
        M_{gg_1,gg_2}=M_{g_1,g_2}, \qquad \forall g,g_1,g_2\in \cG.
    \end{align*}
\end{defn}
\begin{defn}
    For a $\cG$-covariant PSC $W\colon g\rightarrow\ketbra{\psi_g}{\psi_g}$, the Gram matrix $\Gamma_{W}\in \mathbb{C}^{|\cG|\times |\cG|}$ is  a matrix which is indexed by group elements of $\cG$ and corresponds to the pairwise overlaps of output states $\{\ket{\psi_g}\}_{g\in\cG}$. More specifically, for all $g,g'\in \cG$, the elements of $\Gamma_{W}$ satisfy 
    \begin{align*}
       (\Gamma_{W})_{g,g'}=\braket{\psi_{g}}{\psi_{g'}}. 
    \end{align*}
\end{defn}
%  Consider the Gram matrix $\Gamma_W$ for the group-covariant PSC $W$. Let the elements of the Gram matrix $\Gamma_W$ formed by the collection of states $\{\ket{\psi_g}\}_{g\in\cG}$ be indexed by group elements as well. More specifically,$\forall g,g'\in \cG$, the Gram matrix element satisfies 
% \begin{align*}
%     (\Gamma_{W})_{g,g'}=\braket{\psi_{g}}{\psi_{g'}}.
% \end{align*}
\begin{lem}
    Consider a $\cG$-covariant PSC $W\colon g\rightarrow \ketbra{\psi_g}{\psi_g}$ where $\cG$ is a finite group. Then, the Gram matrix $\Gamma_W$ formed by the states $\{\ket{\psi_g}\}_{g\in\cG}$ is $\cG$-circulant; i.e., the elements of $\Gamma_W$ satisfy 
    \begin{align*}
        (\Gamma_W)_{g,g'}=\gamma_{g^{-1}g'}
    \end{align*}
    for all $g,g'\in \cG$ and $\{\gamma_{g}\}_{g\in \cG}$ are the elements of the first row of $\Gamma_W$.
    \end{lem}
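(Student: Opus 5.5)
The plan is a direct computation from the covariance relation $\ket{\psi_{g'g}}=U_{g'}\ket{\psi_g}$ together with unitarity of the representation. First we express every output state in terms of the state at the identity. Taking $g=e$ in the covariance relation gives $\ket{\psi_{g'}}=U_{g'}\ket{\psi_e}$ for every $g'\in\cG$. Hence each Gram entry can be written as
\begin{align*}
(\Gamma_W)_{g,g'}=\braket{\psi_g}{\psi_{g'}}=\bra{\psi_e}U_g^{\dagger}U_{g'}\ket{\psi_e}.
\end{align*}

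Next we use the representation properties. Since $\{U_g\}$ is a unitary representation, $U_g^{\dagger}=U_g^{-1}=U_{g^{-1}}$, and moreover $U_{g^{-1}}U_{g'}=U_{g^{-1}g'}$. Substituting,
\begin{align*}
(\Gamma_W)_{g,g'}=\bra{\psi_e}U_{g^{-1}g'}\ket{\psi_e}=\braket{\psi_e}{\psi_{g^{-1}g'}},
\end{align*}
where the last step applies the covariance relation once more. This quantity depends only on $g^{-1}g'$.

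The first row of $\Gamma_W$ is the row indexed by $e$, with entries $(\Gamma_W)_{e,h}=\braket{\psi_e}{\psi_h}$. We therefore define $\gamma_h\coloneqq(\Gamma_W)_{e,h}$, and with this choice the display above reads $(\Gamma_W)_{g,g'}=\gamma_{g^{-1}g'}$, which is the claimed formula. The $\cG$-circulant property then follows immediately: since $(hg_1)^{-1}(hg_2)=g_1^{-1}g_2$, we get $(\Gamma_W)_{hg_1,hg_2}=(\Gamma_W)_{g_1,g_2}$.

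There is no real obstacle in this argument. The only point requiring care is keeping the group order correct, since $\cG$ need not be abelian here. Left multiplication in the covariance relation is what forces the combination $g^{-1}g'$ rather than $g'g^{-1}$, and the identities $U_g^{\dagger}=U_{g^{-1}}$ and $U_{g^{-1}}U_{g'}=U_{g^{-1}g'}$ must be applied in that order.
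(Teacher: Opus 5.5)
Your proposal is correct and follows the paper's proof: you write $(\Gamma_W)_{g,g'}=\bra{\psi_e}U_g^{\dagger}U_{g'}\ket{\psi_e}$, use $U_g^{\dagger}=U_{g^{-1}}$ to reduce it to $\bra{\psi_e}U_{g^{-1}g'}\ket{\psi_e}$, and set $\gamma_h\coloneqq(\Gamma_W)_{e,h}$. Your explicit check that $(\Gamma_W)_{hg_1,hg_2}=(\Gamma_W)_{g_1,g_2}$ is a small addition that the paper leaves implicit.
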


\begin{proof}
    Let $e\in \cG$ be the identity element. Then we compute $(\Gamma_{W})_{g,g'}$ as follows
    \begin{align*}
        (\Gamma_W)_{g,g'}  =\braket{\psi_{g}}{\psi_{g'}}
     = \bra{\psi_{e}}U^{\dagger}_{g}U_{g'}\ket{\psi_{e}}.
        \end{align*}
Since $\{U_{g}\}_{g\in \cG}$ is a unitary representation, it satisfies $U_{g}^{\dagger}=U_{g^{-1}}$. Hence, we get 
\begin{align*}
(\Gamma_W)_{g,g'}=\bra{\psi_{e}}U_{g^{-1}g'}\ket{\psi_{e}}.
\end{align*}
Defining the elements of the first row of the Gram matrix $\Gamma_W$ as follows
\begin{align*}
    \gamma_{g}\coloneqq (\Gamma_{W})_{e,g}=\braket{\psi_e}{\psi_g}=\bra{\psi_e}U_{g}\ket{\psi_e},
\end{align*}
completes the proof.
\end{proof}
For the rest of the paper, we will restrict ourselves to finite abelian groups. In Lemma~\ref{lem:gram diagonalization}, we  establish the key property for the Gram matrix $\Gamma_W$ associated with $\cG$-covariant PSC when $\cG$ is an abelian group. More specifically, we show that the Gram matrix $\Gamma_W$ is diagonal in the basis formed by the 
character states $\{\ket{\chi}\}_{\chi\in \hcG}$. 
\begin{lem}\label{lem:gram diagonalization}
  Consider the Gram matrix  $\Gamma_W$ for a $\cG$-covariant PSC $W\colon g\rightarrow \ketbra{\psi_g}{\psi_g}$. When $\cG$ is abelian,  the character basis $\{\ket{\chi}\}_{\chi\in \hcG}$ diagonalizes $\Gamma_W$ and the eigenvalue corresponding to the basis element $\ket{\chi}$ is $\lambda_{\chi}$, which satisfies 
    \begin{align*}
        \lambda_{\chi}=\sum_{g\in \cG}\gamma_{g}\chi^{-1}(g).
    \end{align*}
\end{lem}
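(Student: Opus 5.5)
We verify directly that each character state $\ket{\chi}$ is an eigenvector of $\Gamma_W$, viewing $\Gamma_W$ as the operator $\sum_{g,g'}(\Gamma_W)_{g,g'}\ketbra{g}{g'}$ on the $|\cG|$-dimensional space with canonical basis $\{\ket{g}\}_{g\in\cG}$. By the preceding lemma, $\Gamma_W$ is $\cG$-circulant with $(\Gamma_W)_{g,g'}=\gamma_{g^{-1}g'}$. Since the states $\{\ket{\chi}\}_{\chi\in\hcG}$ form an orthonormal basis (established earlier), it suffices to show $\Gamma_W\ket{\chi}=\lambda_\chi\ket{\chi}$ for every $\chi\in\hcG$. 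Then the unitary whose columns are the character states diagonalizes $\Gamma_W$, with diagonal entries $\lambda_\chi$.

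The computation runs as follows. We expand
\begin{align*}
\Gamma_W\ket{\chi}=\frac{1}{\sqrt{|\cG|}}\sum_{g\in\cG}\Big(\sum_{g'\in\cG}\gamma_{g^{-1}g'}\chi^{-1}(g')\Big)\ket{g}.
\end{align*}
For fixed $g$ we substitute $h=g^{-1}g'$, i.e.\ $g'=gh$, which is a bijection of $\cG$. The homomorphism property then gives $\chi^{-1}(gh)=\chi^{-1}(g)\chi^{-1}(h)$, so the inner sum factors as
\begin{align*}
\chi^{-1}(g)\sum_{h\in\cG}\gamma_h\chi^{-1}(h)=\lambda_\chi\,\chi^{-1}(g).
\end{align*}
Substituting back yields $\Gamma_W\ket{\chi}=\lambda_\chi\ket{\chi}$. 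It follows that $\Gamma_W=\sum_{\chi}\lambda_\chi\ketbra{\chi}{\chi}$.

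The only delicate point is the bookkeeping of conventions. The character state uses $\chi^{-1}(g)$, while the circulant index is $g^{-1}g'$ rather than $g'g^{-1}$, and these must match for the claimed formula $\lambda_\chi=\sum_g\gamma_g\chi^{-1}(g)$ to come out exactly. Commutativity of $\cG$ ensures that $g^{-1}g'=g'g^{-1}$, so the order issue disappears, and it is precisely where abelianness is used. Commutativity also guarantees that every character is one-dimensional and that $|\hcG|=|\cG|$, which the basis lemma already relies on. As a side remark, $\Gamma_W$ is positive semidefinite, so each $\lambda_\chi$ is real and nonnegative; this is not needed for the statement.
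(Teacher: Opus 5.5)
Your proof is correct and is essentially the paper's argument: expand $\Gamma_W\ket{\chi}$ row by row, reindex the inner sum, and factor out $\chi^{-1}(g)$ using the homomorphism property. One small correction to your commentary: your substitution $g'=gh$ in $\gamma_{g^{-1}g'}$ never uses commutativity (the paper's version with $\gamma_{g'g^{-1}}$ does), so abelianness actually enters only through $|\hcG|=|\cG|$, i.e., the completeness of the character basis.
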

\begin{proof}
    For character states $\{\ket{\chi}\}_{\chi\in \hcG}$ and the Gram matrix $\Gamma_W$, the following relation holds 
\begin{align*}
    (\Gamma_W)_{g,:}\ket{\chi} & =\frac{1}{\sqrt{|\cG|}}\sum_{g'\in \cG}\gamma_{g'g^{-1}}\chi^{-1}(g')\\
    & =\frac{1}{\sqrt{|\cG|}}\sum_{g''\in \cG}\gamma_{g''}\chi^{-1}(g''g)\\
    & =\chi^{-1}(g)\frac{1}{\sqrt{|\cG|}}\sum_{g''\in \cG}\gamma_{g''}\chi^{-1}(g'')\\
    & =\chi^{-1}(g)\frac{\lambda_{\chi}}{\sqrt{|\cG|}}.
\end{align*}

This implies that $\ket{\chi}$ is an eigenvector of $\Gamma_W$, satisfying 
\begin{align}
    \Gamma_W\ket{\chi} & =\frac{\lambda_{\chi}}{\sqrt{|\cG|}}\sum_{g\in \cG}\chi^{-1}(g)\ket{g}\\
    & = \lambda_{\chi}\ket{\chi}. \qedhere
\end{align}
\end{proof}

    We denote the eigenvalues of the Gram matrix $\Gamma_W$ using  the ordered list $\blambda=\{\lambda_{\chi}\}_{\chi\in \hcG}$, which we refer to as its eigen list. Note that we use the specific ordering implied by the stated Fourier transform and its correspondence to eigenvectors. For the eigen list $\blambda=\{\lambda_{\chi}\}_{\chi\in \hcG}$, let $\bmu=\{\mu_{\chi}\}_{\chi\in \hcG}$ denote the normalized eigen list where $\mu_{\chi}=\frac{\lambda_{\chi}}{|\cG|}$.
We define the canonical choice of quantum state $\ket{\psi_g}$ $\forall g\in \cG$ as follows
\begin{align*}
    \ket{\psi_g}\coloneqq \frac{1}{\sqrt{|\cG|}}\sum_{\chi\in \hcG}\sqrt{\lambda_{\chi}}\chi(g)\ket{\chi}
\end{align*}
\begin{lem}\label{lem: eigenlist trace relation}
    For Gram matrix $\Gamma_W$ with eigen list $\blambda=\{\lambda_{\chi}\}_{\chi\in \hcG}$ and normalized eigen list $\bmu=\{\mu_{\chi}\}_{\chi\in \hcG}$, it holds
    \begin{align}
        \sum_{\chi\in \hcG}\lambda_{\chi} =|\cG|, \qquad
        \sum_{\chi\in \hcG}\mu_{\chi}  =1.
    \end{align}
   \end{lem}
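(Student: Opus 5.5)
The plan is to compute the trace of $\Gamma_W$ in two different bases and compare. The states $\ket{\psi_g}$ are outputs of a PSC, so they are normalized. Hence every diagonal entry of the Gram matrix is $(\Gamma_W)_{g,g}=\braket{\psi_g}{\psi_g}=1$, and therefore
\[
\mathrm{Tr}\,\Gamma_W=\sum_{g\in\cG}(\Gamma_W)_{g,g}=|\cG|.
\]

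Next we compute the same trace in the character basis. The trace is basis independent, and by Lemma~\ref{lem:gram diagonalization} the orthonormal character basis $\{\ket{\chi}\}_{\chi\in\hcG}$ diagonalizes $\Gamma_W$ with eigenvalues $\lambda_\chi$. Thus
\[
\mathrm{Tr}\,\Gamma_W=\sum_{\chi\in\hcG}\bra{\chi}\Gamma_W\ket{\chi}=\sum_{\chi\in\hcG}\lambda_\chi .
\]
Equating the two expressions gives $\sum_\chi\lambda_\chi=|\cG|$. Dividing by $|\cG|$ and using $\mu_\chi=\lambda_\chi/|\cG|$ gives $\sum_\chi\mu_\chi=1$.

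There is also a direct check that avoids invoking basis independence. Summing the formula $\lambda_\chi=\sum_g\gamma_g\chi^{-1}(g)$ over $\chi$ and exchanging the sums gives
\[
\sum_{\chi}\lambda_\chi=\sum_g\gamma_g\sum_\chi\chi^{-1}(g).
\]
The inner sum equals $|\cG|\delta_{g,e}$. This is the dual orthogonality relation, which follows from the unitarity of the character-basis change established in the first lemma, since orthonormal columns of a square matrix imply orthonormal rows. The sum therefore collapses to $|\cG|\gamma_e=|\cG|\braket{\psi_e}{\psi_e}=|\cG|$.

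The main point requiring care is the normalization $\braket{\psi_g}{\psi_g}=1$. It holds because the outputs $\ketbra{\psi_g}{\psi_g}$ are density matrices, which have unit trace.
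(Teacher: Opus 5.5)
Your proof is correct and matches the paper's: both identify $\sum_{\chi}\lambda_\chi$ with $\Tr(\Gamma_W)$ using the orthonormal character eigenbasis, evaluate the trace as $\sum_{g}\braket{\psi_g}{\psi_g}=|\cG|$, and divide by $|\cG|$ to get the normalized statement. Your second check, via the dual orthogonality relation $\sum_{\chi}\chi^{-1}(g)=|\cG|\delta_{g,e}$ obtained from unitarity of the square character matrix, is also valid, though the paper does not need it.
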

\begin{proof}
    Observe that 
    \begin{align}
        \sum_{\chi\in \hcG}\lambda_{\chi} = \Tr(\Gamma_W)
        = \sum_{g\in \cG}\braket{\psi_g}{\psi_g}
         = |\cG|.
    \end{align}
    Hence, we get 
    \begin{align*}
        \sum_{\chi\in \hcG}\mu_{\chi}= \frac{1}{|\cG|}\sum_{\chi\in \hcG}\lambda_{\chi}=1. \qquad\qedhere
    \end{align*}
\end{proof}
Since the Gram matrix $\Gamma_{W}$ is positive semidefinite, we have $\lambda_{\chi}\geq 0$ for all $\chi\in \hcG$. Hence, the normalized eigen list $\bmu$ forms a probability distribution on $\hcG$.

\begin{lem}
     All $\cG$-covariant PSCs $W\colon g\rightarrow \ketbra{\psi_{g}}{\psi_g}$ with $g\in \cG$ are determined (up to isometric equivalence) by their eigen list $\blambda$.
\end{lem}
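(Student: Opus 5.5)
The plan is to reduce the claim to the standard fact that two families of vectors with the same Gram matrix are related by an isometry. Then I would show that the eigen list $\blambda$ determines the Gram matrix $\Gamma_W$ completely.

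\emph{Step 1: $\blambda$ determines $\Gamma_W$.} By Lemma~\ref{lem:gram diagonalization}, the character states $\{\ket{\chi}\}_{\chi\in\hcG}$ form an orthonormal eigenbasis of $\Gamma_W$, and this basis depends only on $\cG$, not on $W$. Hence
\begin{align*}
\Gamma_W=\sum_{\chi\in\hcG}\lambda_\chi\ketbra{\chi}{\chi}.
\end{align*}
Equivalently, the first row satisfies $\gamma_g=\frac{1}{|\cG|}\sum_{\chi}\lambda_\chi\chi(g)$, which follows from the character orthogonality relation by inverting the formula of Lemma~\ref{lem:gram diagonalization}. So two $\cG$-covariant PSCs with the same eigen list have identical Gram matrices.

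\emph{Step 2: equal Gram matrices give isometric equivalence.} I would compare any $W$ with eigen list $\blambda$ to the canonical states $\ket{\phi_g}\coloneqq\frac{1}{\sqrt{|\cG|}}\sum_\chi\sqrt{\lambda_\chi}\chi(g)\ket{\chi}$. A direct check gives $\braket{\phi_g}{\phi_{g'}}=\frac{1}{|\cG|}\sum_\chi\lambda_\chi\chi(g^{-1}g')$. By Step 1 this equals $\gamma_{g^{-1}g'}=(\Gamma_W)_{g,g'}$, so the canonical family realizes the same Gram matrix.

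Next define the linear map $V_0:\mathrm{span}\{\ket{\phi_g}\}\to\mathrm{span}\{\ket{\psi_g}\}$ by $V_0\ket{\phi_g}=\ket{\psi_g}$, extended linearly. It is well defined and norm-preserving because
\begin{align*}
\Big\|\sum_g c_g\ket{\psi_g}\Big\|^2=c^\dagger\Gamma_W c=\Big\|\sum_g c_g\ket{\phi_g}\Big\|^2 .
\end{align*}
In particular, a linear relation among the $\ket{\phi_g}$ holds if and only if the same relation holds among the $\ket{\psi_g}$. Both spans have dimension $\operatorname{rank}\Gamma_W$, so $V_0$ is a unitary between the spans. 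One can extend it to an isometry from the canonical $|\cG|$-dimensional space into the output space of $W$, enlarging the target by an ancilla if necessary. Alternatively, one can restrict to the spans and call the two channels isometrically equivalent through $V_0$. The resulting $V$ satisfies $V\ketbra{\phi_g}{\phi_g}V^\dagger=\ketbra{\psi_g}{\psi_g}$ for all $g$. Since isometric equivalence composes through the canonical representative, any two PSCs with the same $\blambda$ are equivalent.

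\emph{Main obstacle.} The delicate point is making the notion of ``isometric equivalence'' precise when $\Gamma_W$ is rank-deficient, i.e.\ when some $\lambda_\chi=0$, or when the two output spaces have different dimensions. An isometry goes only one way, so I would state the equivalence via the canonical channel. The canonical channel always embeds isometrically into any realization, since its relevant span has dimension $\operatorname{rank}\Gamma_W$, which is at most the output dimension. Conversely, the span of the $\ket{\psi_g}$ embeds back into the canonical space. Working on the spans handles the rank-deficient case. Everything else is routine Fourier inversion.
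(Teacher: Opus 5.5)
Your proof is correct and takes the same route as the paper. Both arguments show that the canonical states $\frac{1}{\sqrt{|\cG|}}\sum_{\chi}\sqrt{\lambda_\chi}\chi(g)\ket{\chi}$ reproduce $(\Gamma_W)_{g,g'}=\gamma_{g^{-1}g'}$ by Fourier inversion of $\lambda_\chi=\sum_g\gamma_g\chi^{-1}(g)$, and then conclude isometric equivalence from the equality of Gram matrices; the paper leaves that final step implicit, so your explicit construction of $V_0$ on the spans and your treatment of the rank-deficient case fill in what the paper only asserts.
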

\begin{proof}
Recall the canonical choice of state $\ket{\psi_g}$ for Gram matrix $\Gamma_W$ with eigen list $\blambda$ as
\begin{align}
    \ket{\psi_g} &\coloneqq \frac{1}{\sqrt{|\cG|}}\sum_{\chi\in \hcG}\sqrt{\lambda_{\chi}}\chi(g)\ket{\chi}
\end{align}
The inner product $\braket{\psi_{g}}{\psi_{g'}}$ satisfies 
\begin{align*}
    \braket{\psi_{g}}{\psi_{g'}} &\ =\frac{1}{|\cG|}\sum_{\chi\in \hcG}\sum_{\chi'\in \hcG}\sqrt{\lambda_{\chi}\lambda_{\chi'}}\chi'^{-1}(g)\chi(g')\braket{\chi'}{\chi}\\
    &\ = \frac{1}{|\cG|}\sum_{\chi\in \hcG}\lambda_{\chi}\chi(g^{-1}g')\\
    &\ = \frac{1}{|\cG|}\sum_{\chi\in \hcG}\sum_{b\in \cG}\gamma_{b}\chi^{-1}(b)\chi(g^{-1}g')\\
    &\ =\frac{1}{|\cG|}\sum_{b\in \cG}\gamma_{b}\sum_{\chi\in \hcG}\chi(b^{-1}g^{-1}g')\\
    &\ =\frac{1}{|\cG|}\sum_{b\in \cG}\gamma_b |\cG|\delta_{b,g^{-1}g'}\\
    % &\ =\frac{1}{|\cG|}\sum_{\chi\in \hcG}\gamma_{g'g^{-1}}\\
    &\ =\gamma_{g^{-1}g'}= (\Gamma_W)_{g,g'}.
\end{align*}
Since two PSCs $W\colon g\rightarrow \ket{\psi_{g}}$ and $W'\colon g\rightarrow \ket{\psi_{g}'}$ with $g\in \cG$ are  isometrically (or unitarily) equivalent when there exists an isometry (or unitary) $V$ such that 
\begin{align*}
    \ket{\psi_{g}'}=V\ket{\psi_g}, \forall g\in\cG,
\end{align*}
thus all the isometrically (or unitarily) equivalent group-covariant PSCs with Gram matrix are well defined using eigenvalue spectrum $\blambda$ associated with Gram matrix $\Gamma_W$. 
\end{proof}
\subsection{Information Measures}
For a group-covariant PSC $W$, for the rest of the paper, we consider uniform input distribution on group $\cG$.
To characterize $W$ under the uniform input ensemble, we consider two information measures, namely the symmetric Holevo information, denoted by $I(W)$ and  the channel fidelity, denoted by $F(W)$.
The symmetric Holevo information measures the maximum amount of classical information that can be sent through the channel $W$ for uniform input distribution. More formally, for a CQ channel $W$, the symmetric Holevo information is computed as 
\begin{align*}
    I(W)=S\Bigg(\frac{1}{|\cG|}\sum_{g\in \cG}W(g)\Bigg)-\sum_{g\in \cG}\frac{1}{|\cG|}S\Bigg(W(g)\Bigg)
\end{align*}
where $S(\rho)=-\Tr(\rho\log\rho)$ is von Neumann entropy for the density matrix $\rho$.
When $W\colon g\rightarrow\ketbra{\psi_g}{\psi_g}$ $\forall g\in \cG$ is a $\cG$-covariant PSC, the symmetric Holevo information satisfies
\begin{align*}
    I(W) & =S\Bigg(\frac{1}{|\cG|}\sum_{g\in \cG}\ketbra{\psi_g}{\psi_g}\Bigg)-\frac{1}{|\cG|}\sum_{g\in \cG}S(\ketbra{\psi_g}{\psi_g})\\
   &  = S\Bigg(\frac{1}{|\cG|}\sum_{g\in \cG}\ketbra{\psi_g}{\psi_g}\Bigg),
\end{align*}
where we use the fact that $S(\ketbra{\psi_g}{\psi_g})=0$ $\forall g\in \cG$.
The channel fidelity characterizes expected overlap between the non-identical output quantum states of channel $W$. Formally, it is written as 
\begin{align*}
    F(W)
    & =\frac{1}{|\cG|(|\cG|-1)}\!\sum_{g,g'\in \cG:, g\neq g'}\!\!\!\!\!\!\!\Tr\left(\sqrt{\sqrt{W(g)}W(g')\sqrt{W(g)}}\right).
\end{align*}
For $\cG$-covariant PSC $W$, the expression reduces to
\begin{align*}
     F(W)=\frac{1}{|\cG|(|\cG|-1)}\sum_{g,g'\in\cG\colon g\neq g'}\left|\braket{\psi_g}{\psi_{g'}}\right|.
\end{align*}
\begin{lem}\label{lem:Holevo information and channel fidelity}
For abelian group $\cG$, consider the $\cG$-covariant
PSC $W\colon g\rightarrow \ketbra{\psi_{g}}{\psi_g}$. The Gram matrix $\Gamma_W$ is $\cG$-circulant and is characterized via eigen list $\blambda=\{\lambda_{\chi}\}_{\chi\in \hcG}$. Then, for the uniform distribution over input alphabets $g\in \cG$, the symmetric Holevo information $I(W)$ and the channel fidelity $F(W)$ satisfy
\begin{align*}
    I(W) &\ = H(\bmu)\\
  % F(W)  &\  =\frac{1}{q(q-1)}\left(\sum_{u\in [q]}\lambda_{u}^2-q\right)
  F(W) & =\frac{1}{|\cG|-1}\sum_{g\in \cG, g\neq e}|\gamma_{g}|,
\end{align*}
where $\bmu=\{\mu_{\chi}\}_{\chi\in \hcG}$ is the normalized eigen list with $\mu_{\chi}=\frac{\lambda_{\chi}}{|\cG|}$ $\forall \chi\in \hcG$ and $H(\bmu)$ is the Shannon entropy for $\bmu$ and $e$ is the identity element of $\cG$.
\end{lem}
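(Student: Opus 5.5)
For the Holevo information, the starting point is the simplification already recorded above: $I(W)=S(\rho)$ with $\rho\coloneqq\frac{1}{|\cG|}\sum_{g\in\cG}\ketbra{\psi_g}{\psi_g}$. Since isometries preserve von Neumann entropy, and by the preceding lemma $W$ is determined up to isometric equivalence by its eigen list, I will work with the canonical states
\[
\ket{\psi_g}=\tfrac{1}{\sqrt{|\cG|}}\textstyle\sum_{\chi}\sqrt{\lambda_\chi}\,\chi(g)\ket{\chi}.
\]
Expanding $\rho$ in the character basis gives
\[
\rho=\frac{1}{|\cG|^2}\sum_{\chi,\chi'\in\hcG}\sqrt{\lambda_\chi\lambda_{\chi'}}\Big(\sum_{g\in\cG}\chi(g)\overline{\chi'(g)}\Big)\ketbra{\chi}{\chi'}.
\]
The character orthogonality relation from the background section collapses the inner sum to $|\cG|\delta_{\chi,\chi'}$. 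Hence $\rho=\sum_{\chi}\frac{\lambda_\chi}{|\cG|}\ketbra{\chi}{\chi}=\sum_\chi\mu_\chi\ketbra{\chi}{\chi}$.

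Because $\{\ket{\chi}\}$ is an orthonormal basis, the eigenvalues of $\rho$ are exactly $\mu_\chi$. Lemma~\ref{lem: eigenlist trace relation} and positivity of $\Gamma_W$ show these form a probability vector. Therefore $S(\rho)=H(\bmu)$. As an alternative that avoids the canonical realization, one can note that $\rho$ and $\frac{1}{|\cG|}\Gamma_W$ share nonzero spectra, since both arise as $AA^\dagger$ versus $A^\dagger A$ with $A$ the matrix whose columns are the $\ket{\psi_g}/\sqrt{|\cG|}$. Lemma~\ref{lem:gram diagonalization} then gives the spectrum $\{\lambda_\chi/|\cG|\}$ directly. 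I will mention this as the cleaner justification of isometry invariance.

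For the fidelity, I start from the reduced pure-state formula and use the $\cG$-circulant property, $|\braket{\psi_g}{\psi_{g'}}|=|\gamma_{g^{-1}g'}|$. Reindex the double sum by $h=g^{-1}g'$. For each fixed $g$, the map $g'\mapsto h$ is a bijection from $\cG\setminus\{g\}$ onto $\cG\setminus\{e\}$. So the double sum equals $|\cG|\sum_{h\neq e}|\gamma_h|$, and dividing by $|\cG|(|\cG|-1)$ gives the claimed expression.

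The only point requiring care is the first part: being precise that replacing the physical states by the canonical ones does not change $S(\rho)$. The $AA^\dagger$/$A^\dagger A$ spectral argument settles this, so I expect no genuine obstacle.
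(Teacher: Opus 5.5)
Your proof is correct and essentially matches the paper's. The fidelity computation is identical, and your ``alternative'' Holevo argument is precisely the paper's proof: the average state and $\frac{1}{|\cG|}\Gamma_W$ share their nonzero spectrum as $AA^\dagger$ versus $A^\dagger A$, and Lemma~\ref{lem:gram diagonalization} then gives the spectrum. Your primary route via the canonical states and character orthogonality is also valid, and it shows that $\{\ket{\chi}\}$ is an eigenbasis of the average state, which the paper uses later in Lemma~\ref{lem:pgm}. However, it needs the isometry-invariance step, whereas the spectral argument works directly on the physical states and makes that step unnecessary.
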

\begin{proof}
    Consider the matrix $\Psi$ whose columns correspond to the quantum states $\{\ket{\psi_{g}}\}_{g\in \cG}$.
% \begin{align*}
%     \Psi\coloneqq \begin{bmatrix}
%         \ket{\psi_0}\dots \ket{\psi_{q-1}}
%     \end{bmatrix}
% \end{align*}
Then we have
\begin{align*}
    \Psi\Psi^{\dagger}= \sum_{g\in \cG}\ketbra{\psi_g}{\psi_g}.
\end{align*}
 The matrix $\Psi^{\dagger}\Psi=\Gamma_W$ as it satisfies 
\begin{align*}
    (\Psi^{\dagger}\Psi)_{g,g'}=\braket{\psi_g}{\psi_g'} =(\Gamma_W)_{g,g'},
\end{align*}
for all $g,g'\in \cG$. Since the eigenvalues of $\Psi^{\dagger}\Psi$ are the same as of $\Psi\Psi^{\dagger}$, we have 
\begin{align*}
    I(W) &\ = S\left(\frac{1}{|\cG|}\Psi\Psi^{\dagger}\right)\\
    &\ = -\sum_{\chi\in \hcG}\frac{\lambda_{\chi}}{|\cG|}\log \frac{\lambda_{\chi}}{|\cG|}\\
    & = -\sum_{\chi\in \hcG}\mu_{\chi}\log\mu_{\chi}\\
    & = H(\bmu).
\end{align*}

Writing the channel fidelity $F(W)$ in terms of the Gram matrix elements, we obtain 
\begin{align*}
 F(W) & =\frac{1}{|\cG|(|\cG|-1)}\sum_{g,g'\in\cG:g\neq g'}|\braket{\psi_g}{\psi_{g'}}|\\
 &\ =   \frac{1}{|\cG|(|\cG|-1)}\Bigg(\sum_{g,g'\in \cG:g\neq g'}|\gamma_{g^{-1}g'}|\Bigg)\\
 &\ =\frac{1}{|\cG|(|\cG|-1)}|\cG|\bigg(\sum_{g\in \cG,g\neq e}|\gamma_{g}|\bigg)\\
 & =\frac{1}{|\cG|-1}\sum_{g\in \cG,g\neq e}|\gamma_g| .\qedhere
\end{align*}
\end{proof}
\subsection{Pretty Good Measurement on Group-Covariant PSC}
A pretty good measurement (PGM) \cite{hausladen1994pretty,holevo1978asymptotically}, which is also known as the square root measurement, is defined as follows. Consider the  $\cG$-covariant CQ channel $W\colon g\rightarrow \rho_g$ with uniform input distribution on $\cG$. For the output state $\rho_g$ for $g\in \cG$, the measurement operator is $M_{g}=\frac{1}{|\cG|}\left(\bar{\rho}\right)^{-\frac{1}{2}}\rho_g\left(\bar{\rho}\right)^{-\frac{1}{2}}$ where $\bar{\rho}=\sum_{g\in \cG}\frac{1}{|\cG|}\rho_g$ and the inverse is taken on the support of $\bar{\rho}$.  This $|\cG|$-outcome measurement $\{M_g\}_{g\in \cG}$ is shown to be optimal \cite{eldar2002quantum} for  minimizing the probability of state-discrimination error for geometrically uniform quantum states. In this paper, to analyze the error rate in the quantum message passing framework,
we will use PGM to characterize the group symbol error probability. The following lemma establishes the relation between the error probability for discriminating the output states of $\cG$-covariant PSC $W$ in terms of the eigen list of the Gram matrix. 
\begin{lem}\label{lem:pgm}
Consider group-covariant PSC $W\colon g\rightarrow \ketbra{\psi_{g}}{\psi_g}$ with $g\in \cG$ with group-circulant Gram matrix $\Gamma_W$ and eigen list $\blambda=\{\lambda_{\chi}\}_{\chi\in \hcG}$. Then, for the uniform distribution over input alphabets $g\in \cG$, the probability of error $\perr(W)$ for distinguishing the states optimally using pretty good measurement satisfies 
\begin{align*}
    \perr(W)=1-\left(\frac{1}{|\cG|}\sum_{\chi\in \hcG}\sqrt{\lambda_{\chi}}\right)^{2}.
\end{align*}
\end{lem}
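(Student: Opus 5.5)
By the isometric-equivalence lemma, the error probability is invariant under a common isometry applied to all output states, and the PGM transforms covariantly under such an isometry. It therefore suffices to work with the canonical states
\[
\ket{\psi_g}=\frac{1}{\sqrt{|\cG|}}\sum_{\chi\in\hcG}\sqrt{\lambda_\chi}\,\chi(g)\ket{\chi}.
\]
The plan is to compute the PGM operators explicitly in the character basis and then evaluate $\psuc=\frac{1}{|\cG|}\sum_g \bra{\psi_g}M_g\ket{\psi_g}$.

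\textbf{Step 1: the average state.} Compute
\[
\bar\rho=\frac{1}{|\cG|}\sum_g \ketbra{\psi_g}{\psi_g}.
\]
Expanding gives a double sum over $\chi,\chi'$ with coefficient $\frac{1}{|\cG|}\sum_g\chi(g)\overline{\chi'(g)}=\delta_{\chi,\chi'}$ by character orthogonality. Hence
\[
\bar\rho=\frac{1}{|\cG|}\sum_\chi \lambda_\chi\ketbra{\chi}{\chi},
\]
which is diagonal in the character basis, consistent with $I(W)=H(\bmu)$ in Lemma~\ref{lem:Holevo information and channel fidelity}. The pseudo-inverse square root is then
\[
\bar\rho^{-1/2}=\sqrt{|\cG|}\sum_{\chi:\lambda_\chi>0}\lambda_\chi^{-1/2}\ketbra{\chi}{\chi}.
\]

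\textbf{Step 2: the PGM vectors.} Define $\ket{m_g}=\frac{1}{\sqrt{|\cG|}}\bar\rho^{-1/2}\ket{\psi_g}$, so that $M_g=\ketbra{m_g}{m_g}$. A direct computation gives
\[
\ket{m_g}=\frac{1}{\sqrt{|\cG|}}\sum_{\chi:\lambda_\chi>0}\chi(g)\ket{\chi}.
\]
These sum correctly: $\sum_g M_g$ equals the projector onto the support of $\bar\rho$, again by orthogonality.

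\textbf{Step 3: the success probability.} The overlap is
\[
\braket{m_g}{\psi_g}=\frac{1}{|\cG|}\sum_{\chi}\sqrt{\lambda_\chi}\,|\chi(g)|^2=\frac{1}{|\cG|}\sum_\chi\sqrt{\lambda_\chi},
\]
which is independent of $g$; here the terms with $\lambda_\chi=0$ contribute nothing anyway. Averaging over $g$ then yields
\[
\psuc=\Bigl(\frac{1}{|\cG|}\sum_\chi\sqrt{\lambda_\chi}\Bigr)^2,
\]
and $\perr=1-\psuc$.

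\textbf{Main obstacle.} The only delicate point is the reduction to canonical states, together with handling zero eigenvalues. One must check that the PGM built from $V\ket{\psi_g}$ equals $V M_g V^\dagger$ on the range of $V$, which follows from the functional calculus for $V\bar\rho V^\dagger$. One must also check that restricting to the support of $\bar\rho$ does not change the success probability, since the states lie in that support.

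Optimality of this measurement is not reproved here; it is quoted from \cite{eldar2002quantum}, because the states are geometrically uniform.
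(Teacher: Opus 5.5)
Your proposal is correct and follows essentially the same route as the paper: diagonalize $\bar\rho$ in the character basis, compute the PGM vectors $\frac{1}{\sqrt{|\cG|}}\sum_{\chi}\chi(g)\ket{\chi}$, and square the $g$-independent overlap $\frac{1}{|\cG|}\sum_\chi\sqrt{\lambda_\chi}$. Two small refinements go beyond the paper: you restrict $\bar\rho^{-1/2}$ to the support of $\bar\rho$ (the paper divides by $\lambda_{\chi'}$ without comment), and you make the reduction to the canonical states explicit.
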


\begin{proof}
    For the states $\{\ket{\psi_g}\}_{g\in \cG}$, the pretty good measurement is constructed using measurement operators $\{\Pi_g\}_{g\in \cG}$ such that 
    \begin{align*}
        \Pi_g=\ketbra{\tau_g}{\tau_g},
    \end{align*}
    where $\ket{\tau_g}$ = $\sqrt{\frac{1}{|\cG|}}\bar{\rho}^{-\frac{1}{2}}\ket{\psi_g}$ and the average density matrix $\bar{\rho}=\sum_{g\in \cG}\frac{1}{|\cG|}\ketbra{\psi_g}{\psi_g}$. For the uniform input distribution, we have 
    \begin{align*}
    \bar{\rho}=\frac{1}{|\cG|}\Psi\Psi^{\dagger}=\sum_{\chi\in \hcG}\frac{\lambda_{\chi}}{|\cG|}\ketbra{\chi}{\chi}.
    \end{align*}
    Thus, $\ket{\tau_g}$ can be written as 
    \begin{align*}
        \ket{\tau_g}
        & =\frac{1}{\sqrt{|\cG|}}(\sum_{\chi'\in \hcG}\sqrt{\frac{|\cG|}{\lambda_{\chi'}}}\ketbra{\chi'}{\chi'})(\frac{1}{\sqrt{|\cG|}}\sum_{\chi\in \hcG}\sqrt{\lambda_{\chi}}\chi(g)\ket{\chi})\\
        & = \frac{1}{\sqrt{|\cG|}}\sum_{\chi\in \hcG}\chi(g)\ket{\chi}.
    \end{align*}
Hence, we compute $\braket{\tau_g}{\psi_g}$ as
\begin{align*}
    \braket{\tau_g}{\psi_g} & =\frac{1}{|\cG|}\sum_{\chi\in \hcG}\sum_{\chi'\in\hcG}\sqrt{\lambda_{\chi'}}\chi'\chi^{-1}(g)\braket{\chi}{\chi'}\\
    & =\frac{1}{|\cG|}\sum_{\chi\in \hcG}\sqrt{\lambda_{\chi}}.
\end{align*}
Finally, the average success probability of the measurement satisfies 
\begin{align*}
    \psuc(W)& =\frac{1}{|\cG|}\sum_{g\in \cG}\Tr(\Pi_g\ketbra{\psi_g}{\psi_g})\\
    & = \frac{1}{|\cG|}\sum_{g\in \cG}|\braket{\tau_g}{\psi_g}|^{2}\\
    & =\frac{1}{|\cG|}\sum_{g\in \cG}\left(\frac{1}{|\cG|}\sum_{\chi\in \hcG}\sqrt{\lambda_{\chi}}\right)^{2}\\
    & = \left(\frac{1}{|\cG|}\sum_{\chi\in \hcG}\sqrt{\lambda_{\chi}}\right)^{2} .
\end{align*}
Thus, the error probability is given by
\begin{align*}
    \perr(W)=1-\psuc(W)=1-\left(\frac{1}{|\cG|}\sum_{\chi\in \hcG}\sqrt{\lambda_{\chi}}\right)^{2}.
\end{align*}
\end{proof}
% \subsection{Heralded Mixture of PSCs}
% \begin{defn}
%         A channel $W_{H}$ is the  heralded mixtures of group covariant PSCs, if the channel output can be characterized via an ensemble of group covariant PSCs. More generally, the output of channel $W_H$ $\forall g \in \cG$, can be written as 
%         \begin{align*}
%             W_H(g)=\sum_{x\in \cX}p_{x}W_{x}(g)\otimes \ketbra{x}{x}
%         \end{align*}
%     where $W_{x}\colon g\rightarrow \ketbra{\psi^{x}_g}{\psi^{x}_g}$ is a symmetric $\cG$-covariant PSC with group-circulant Gram matrix $\Gamma_{W_x}$ and the eigen list $\blambda_x$ $\forall x \in \cX$, $\{\ket{x}\}_{x\in \cX}$ is an orthonormal basis and $\{p_{x}\}_{x\in \cX}$ is some probability distribution in finite sized alphabet $\cX$.
%     \end{defn}

\subsection{Heralded Mixture of Group-Covariant PSCs}

In general, the local quantum message-passing rules studied in Section~\ref{sec:quantum message passing factor graph lemmas}
can produce outputs including an additional classical side register that distinguishes between multiple $\cG$-covariant PSCs. Such outputs are known as heralded mixtures.

\begin{defn}
Let $\cG$ be a finite abelian group. A CQ channel $W_H\colon g\to W_H(g)$, $g\in\cG$, is called a
\emph{heralded mixture of $\cG$-covariant PSCs} if there exist
\begin{itemize}
    \item a finite set $\cX$,
    \item a probability distribution $\{p_x\}_{x\in\cX}$,
    \item an orthonormal basis $\{\ket{x}\}_{x\in\cX}$, and
    \item for each $x\in\cX$, a $\cG$-covariant PSC
    $W_x\colon g\to \ketbra{\psi_g^x}{\psi_g^x}$
\end{itemize}
such that
\begin{align*}
    W_H(g)=\sum_{x\in\cX} p_x\, W_x(g)\otimes \ketbra{x}{x},
    \qquad g\in\cG .
\end{align*}
We denote by $\cM(\cG)$ the class of all heralded mixtures of $\cG$-covariant PSCs.
\end{defn}

Conditioned on the classical label $x$, the channel $W_H$ reduces to the single $\cG$-covariant PSC $W_x$.
By the preceding discussion, each component $W_x$ is determined up to isometric equivalence by its eigen list.
Hence a channel in $\cM(\cG)$ may be specified equivalently by a finite ensemble
$\{p_x,\blambda_x\}_{x\in\cX}$, where $\blambda_x$ denotes the eigen list of $W_x$.
\section{Message Passing on Factor Graphs}\label{sec:quantum message passing factor graph lemmas}
In this section, we study how local factor operations act on channels whose input alphabets are finite abelian groups. Our analysis proceeds in two steps. We first define, for each primitive factor, the induced CQ channel obtained when the incoming messages are arbitrary group-covariant CQ channels, and we show that these induced channels remain group-covariant. We then specialize to the case where the incoming messages are group-covariant PSCs, or more generally elements of $\cM(\cG)$ on the appropriate input alphabets, and derive explicit quantum message-passing update rules in terms of the corresponding eigen lists and herald registers. This two-level description separates the structural covariance of the induced channels from the more refined character-domain parametrization that will be used throughout the remainder of the paper.
We begin by defining the induced CQ channel associated with a factor whose incident edges carry group-covariant CQ channels.
% In this section, we develop quantum message passing update rules based on classical factors acting on elements from abelian groups. To achieve that, we first describe the notion of induced CQ channels from a factor in which each input edge is associated with a group covariant CQ channel. 
More specifically, consider the abelian groups $\cG_1$, $\cG_2$, $\dots$, $\cG_m$ and a factor node $\phi$ such that it takes group elements  $g_1\in \cG_1$ and $g_2\in \cG_2$, $\dots$, $g_m\in \cG_m$ as inputs and outputs an element $h$ in some abelian group $\cH$.  For $g_1\in \cG_1$ and $g_2\in \cG_2$, $\dots$, $g_m\in \cG_m$ and $h\in \cH$, define the factor function as 
\begin{align*}
    f(g_1,\dots,g_m;h)=\mone\{h=\phi(g_1,\dots,g_m)\}.
\end{align*}
Let $W_i$ denote a group covariant CQ channel with input $g_{i}\in \cG_i$.
% Let each input $g_i$ correspond to a group-covariant CQ channel $W_{i}\colon g_i\rightarrow \rho^{(i)}_{g_i}$ with $g_i\in \cG_i$. 
Let each channel $W_{i}\colon g_{i}\rightarrow \rho^{(i)}_{g_i}$ be a $\cG_i$-covariant CQ channel with respect to a unitary representation $\{U^{(i)}_{g}\}_{g\in \cG_i}$. 

The effective CQ channel induced by the factor function $f$ is denoted by $W_{\phi}\colon h\rightarrow \rho_h$ such that each output state $\rho_h$ is given by
\begin{align*}
    \rho_{h} & =\frac{1}{N_{h}}\sum_{(g_1,\dots, g_m)\in \cG_{1}\times\dots\times \cG_m\colon h=\phi(g_1,\dots, g_m)}\bigotimes_{i=1}^{m}W_{i}(g_i)\\
    & = \frac{1}{N_{h}}\sum_{(g_1,\dots, g_m)\in \cG_{1}\times\dots\times \cG_m\colon h=\phi(g_1,\dots, g_m)}\bigotimes_{i=1}^{m}\rho^{(i)}_{g_i},
\end{align*}
where $N_h>0$ is the number of satisfied tuples $(g_1,\dots, g_m)\in \cG_{1}\times\dots\times \cG_m$ such that $h=\phi(g_1,\dots, g_m)$. Next, we describe some factors and the corresponding induced CQ channels.
\begin{itemize}
    \item \textbf{Check Factor:} The check factor with two input edges corresponds to the parity equation in the group $\cG$, such that for $g_1,g_2,h\in \cG$, the factor function $f_{\cnop}$ satisfies 
    \begin{align*}
        f_{\cnop}(g_1,g_2;h) 
        % &= \mone\{g_{1}g_{2}h=0\}\\
        & =\mone\{h=g_{1}g_{2}\}.
    \end{align*}
    For two $\cG$-covariant CQ channels  $W_1:g\rightarrow \rho^{(1)}_{g}$ and $W_2:g\rightarrow \rho^{(2)}_{g}$ where $g\in \cG$, the induced CQ channel $W_1\cnop W_2\colon h\rightarrow \rho_{h}^{\cnop}$ outputs the state $\rho_{h}^{\cnop}$ $\forall h\in \cG$ such that the following holds 
    \begin{align*}
       \rho_{h}^{\cnop}
       & = \frac{1}{|\cG|}\sum_{g_{1}\in \cG}W_{1}(g_1)\otimes W_{2}(g_{1}^{-1}h)\\
       & =\frac{1}{|\cG|}\sum_{g_1\in \cG}\rho^{(1)}_{g_1}\otimes \rho^{(2)}_{g_{1}^{-1}h}.
    \end{align*}
Similarly, when the check factor contains three input edges, the factor function satisfies 
\begin{align*}
    f_{\cnop(2)}(g_1,g_2,g_3;h)=\mone\{h=g_1g_2g_3\}.
\end{align*}
Considering each channel $W_{i}\colon g\rightarrow\rho^{(i)}_g$ to be $\cG$-covariant CQ channel, the output states of the induced channel $W_1\cnop W_2\cnop W_3 \colon h\rightarrow \rho_h^{\cnop(2)}$  $h\in \cG$ can be written as 
\begin{align*}
    \rho^{\cnop(2)}_h & =\frac{1}{|\cG|^2}\sum_{g_1,g_2\in \cG}\rho^{(1)}_{g_1}\otimes \rho^{(2)}_{g_{2}}\otimes \rho^{(3)}_{g_{1}^{-1}g_{2}^{-1}h}\\
    & = \frac{1}{|\cG|}\sum_{u\in \cG}\left(\frac{1}{|\cG|}\sum_{g_1\in \cG}\rho^{(1)}_{g_1}\otimes \rho^{(2)}_{g_{1}^{-1}u}\right)\otimes \rho^{(3)}_{u^{-1}h}\\
    & = \frac{1}{|\cG|}\sum_{u\in \cG}\rho^{\cnop}_{u}\otimes \rho^{(3)}_{u^{-1}h}.
\end{align*}
% Iteratively, this can be generalized 
Iterating the same argument yields to the check factor with $d$ input edges such that the factor function satisfies 
\begin{align*}
    f_{\cnop(d-1)}(g_1,\dots,g_d;h)=\mone\{h=g_1\dots g_d\},
\end{align*}
and the output of induced CQ channel $\cnop_{i=1}^{d}W_{i} \colon h\rightarrow \rho_h^{\cnop(d-1)}$ satisfies the following recursive relation
\begin{align*}
    \rho_{h}^{\cnop(d-1)}=\frac{1}{|\cG|}\sum_{g\in 
    \cG}\rho^{\cnop(d-2)}_{g}\otimes \rho^{(d)}_{g^{-1}h},
\end{align*}
where we use the notation $\cnop_{i=1}^{d}W_{i}$ to denote the channel $W_1\cnop\dots\cnop W_d$. Define unitary $V^{\cnop (d-1)}_{g}$ for $g\in \cG$ as follows 
\begin{align*}
   V^{\cnop (d-1)}_{g}=U^{(1)}_{g}\otimes \mI\otimes \dots\otimes \mI.
\end{align*}
Then applying $V^{\cnop (d-1)}_{g}$ on the state $\rho_{h}^{\cnop(d-1)}$ we get 
\begin{align*}
    V^{\cnop (d-1)}_{g}\rho_{h}^{\cnop(d-1)} (V^{\cnop (d-1)}_{g})^{\dagger} & = V^{\cnop (d-1)}_{g} \left(\frac{1}{|\cG|^{d-1}}\sum_{g_1,\dots g_{d-1}\in \cG}\rho_{g_{1}}^{(1)}\otimes \rho^{(d-1)}_{g_{d-1}}\otimes \rho^{(d)}_{(g_1\dots g_{d-1})^{-1}h}\right) (V^{\cnop (d-1)}_{g})^{\dagger}\\
    & = \frac{1}{|\cG|^{d-1}}\sum_{g_1,\dots g_{d-1}\in \cG}U^{(1)}_{g}\rho_{g_{1}}^{(1)}(U^{(1)}_{g})^{\dagger}\otimes \rho^{(d-1)}_{g_{d-1}}\otimes \rho^{(d)}_{(g_1\dots g_{d-1})^{-1}h}\\
    & = \frac{1}{|\cG|^{d-1}}\sum_{g_1,\dots g_{d-1}\in \cG}\rho_{gg_{1}}^{(1)}\otimes \rho^{(d-1)}_{g_{d-1}}\otimes \rho^{(d)}_{(g_1\dots g_{d-1})^{-1}h}.
\end{align*}
By re-indexing $g_1'=gg_1$ we get 
\begin{align*}
    V^{\cnop (d-1)}_{g}\rho_{h}^{\cnop(d-1)} (V^{\cnop (d-1)}_{g})^{\dagger}& =\frac{1}{|\cG|^{d-1}}\sum_{g_1',\dots g_{d-1}\in \cG}\rho_{g_{1}'}^{(1)}\otimes \rho^{(d-1)}_{g_{d-1}}\otimes \rho^{(d)}_{(g_1'\dots g_{d-1})^{-1}gh} \\
    & = \rho^{\cnop (d-1)}_{gh}.
\end{align*}
Thus, this proves the induced CQ channel $\cnop_{i=1}^{d}W_{i}$ is $\cG$-covariant.
\item \textbf{Equality Factor:} The equality factor defines the constraint that each edge carry the same identical group element of a group $\cG$. More specifically, consider the equality factor with two input edges. Then for $g_1,g_2,h\in \cG$, the factor function $f_{\vnop}$ satisfies 
\begin{align*}
    f_{\vnop}(g_1,g_2;h)=\mone\{h=g_1=g_2\}.
\end{align*}
Then, for $\cG$-covariant CQ channels $W_{i}\colon g\rightarrow \rho^{(i)}_{g}$, the induced CQ channel $W_1\vnop W_2\colon h\rightarrow \rho_{h}^{\vnop} $ outputs state $\rho_{h}^{\vnop}$ $\forall h\in \cG$, such that 
\begin{align*}
    \rho_{h}^{\vnop} & = W_{1}(h)\otimes W_{2}(h)\\
    & = \rho^{(1)}_h\otimes \rho^{(2)}_h.
\end{align*}
Similarly, the equality factor function with $d$ input edges satisfies 
\begin{align*}
    f_{\vnop(d-1)}(g_1,\dots,g_d;h)=\mone\{h=g_1=\dots=g_d\}.
\end{align*}
The output of the induced CQ channel $\vnop_{i=1}^{d}W_i$ satisfies the following recursive relation $\forall h\in \cG$
\begin{align*}
    \rho^{\vnop(d-1)}_{h}=\rho_{h}^{\vnop(d-2)}\otimes \rho^{(d)}_h,
\end{align*}
where $\vnop_{i=1}^{d}W_i$ refers to the channel $W_{1}\vnop\dots \vnop W_d$. 
Consider the unitary $V^{\vnop(d-1)}_{g}$ defined for $g\in\cG$, as follows 
\begin{align*}
    V^{\vnop(d-1)}_{g}\coloneqq \bigotimes_{i=1}^{d}U^{(i)}_{g}.
\end{align*}
Then we immediately get 
\begin{align*}
     V^{\vnop(d-1)}_{g}\rho^{\vnop(d-1)}_{h}(V^{\vnop(d-1)}_{g})^{\dagger} & =\bigotimes_{i=1}^d U^{(i)}_{g}\rho_{h}^{(i)} (U^{(i)}_{g})^{\dagger}\\
     & = \bigotimes_{i=1}^{d}\rho_{gh}^{(i)}\\
     & = \rho^{\vnop (d-1)}_{gh}.
\end{align*}
This means that the induced CQ channel $\vnop_{i=1}^{d}W_i$ is also $\cG$-covariant.
\item \textbf{Homomorphism Factor:} 
% The homomorphism factor refers to the constraint being a group homomorphism. 
We next consider a factor defined by a group homomorphism.
Let $\cG$ and $\cH$ be two abelian groups and $\phi \colon \cG\rightarrow \cH$ be a homomorphism. Then, for a single input, the factor function $f_{\phi}$ satisfies 
\begin{align*}
    f_{\phi}(g;h)=\mone\{h=\phi(g)\},
\end{align*}
for $g\in \cG$ and some $h\in \cH$. Let $W\colon g\rightarrow \rho_g$ be a $\cG$-covariant CQ channel, then the induced CQ channel $W_{\phi}\colon h\rightarrow \rho^{\phi}_h$ with $h\in \cH$, satisfies the following form
\begin{align*}
     \rho^{\phi}_h & = \frac{1}{|\ker\phi|}\sum_{g\in \cG\colon \phi(g)=h }W(g)\\
     & = \frac{1}{|\ker\phi|}\sum_{g\in \cG\colon \phi(g)=h}\rho_g.
\end{align*}
This extends directly to a factor with $d$ inputs, where we consider inputs from abelian groups $\cG_1,\dots, \cG_d$ and homomorphism $\phi\colon \cG_1\times \dots \times\cG_d\rightarrow \cH$. Let $W_{i}\colon g_{i}\rightarrow \rho^{(i)}_{g_i}$ be $\cG_i$-covariant CQ channels for $g_{i}\in \cG_i$, then the induced CQ channel $W_{\phi}$ has the output state of the form 
\begin{align*}
     \rho^{\phi}_h & = \frac{1}{|\ker\phi|}\sum_{(g_1,\dots,g_d)\in \cG_1\times\dots\times\cG_d\colon \phi(g_1,\dots,g_d)=h }\bigotimes_{i=1}^{d} W_{i}(g_i)\\
     & = \frac{1}{|\ker\phi|}\sum_{(g_1,\dots,g_d)\in \cG_1\times\dots\times\cG_d\colon \phi(g_1,\dots,g_d)=h }\bigotimes_{i=1}^{d} \rho^{(i)}_{g_i}.
\end{align*}
Now, consider element $(g_1',\dots,g_d')\in \cG_1\times \dots \times \cG_{d}$ such that $\phi(g_1',\dots,g_d')=b$ with $b\in \image\phi$. Then the following relation holds
\begin{align*}
    U^{(1)}_{g_1'}\otimes \dots \otimes U^{(d)}_{g_d'}\rho^{\phi}_h 
    (U^{(1)}_{g_1'})^{\dagger}\otimes \dots \otimes (U^{(d)}_{g_d'})^{\dagger}  & = \frac{1}{|\ker\phi|}\sum_{(g_1,\dots,g_d)\in \cG_1\times\dots\times\cG_d\colon \phi(g_1,\dots,g_d)=h }\bigotimes_{i=1}^{d} U^{(i)}_{g_{i}'}\rho^{(i)}_{g_i}(U^{(i)}_{g_i'})^{\dagger}\\
    & = \frac{1}{|\ker\phi|}\sum_{(g_1,\dots,g_d)\in \cG_1\times\dots\times\cG_d\colon \phi(g_1,\dots,g_d)=h }\bigotimes_{i=1}^{d}\rho^{(i)}_{g_i'g_i}\\
    & = \frac{1}{|\ker\phi|}\sum_{(g_1'',\dots,g_d'')\in \cG_1\times\dots\times\cG_d\colon \phi(g_1'',\dots,g_d'')=bh }\bigotimes_{i=1}^{d}\rho^{(i)}_{g_i''}\\
    & = \rho^{\phi}_{bh}.
\end{align*}
This implies that the induced CQ channel $W_{\phi}$ inherits a natural group-covariant structure from the incoming channels. In particular, if $\phi$ is surjective, then $W_{\phi}$ becomes $\cH$-covariant.
\item \textbf{Marginalization Factor:} Marginalization factor is one special case of homomorphism factor that takes inputs from abelian groups $\cG_1,\dots, \cG_d$ and drops an element from some group $\cG_{i}$ where $1\leq i\leq d$. The factor function $f_{\phi}$ satisfies 
\begin{align*}
    % f_{\phi_{\text{marginal}}}\left((g_1,\dots,g_d); (g_1,\dots,g_{i-1},g_{i+1},\dots,g_d)\right)=\mone\{(g_1,\dots,g_{i-1},g_{i+1},\dots,g_d)=\phi(g_1,\dots,g_d)\}.
    \phi(g_1,\dots,g_d)=  (g_1,\dots,g_{i-1},g_{i+1},\dots,g_d)
\end{align*}
The induced CQ channel $W_{\phi}$ outputs a marginalized state of the form 
\begin{align*}
    \rho_{(g_1,\dots,g_{i-1},g_{i+1},\dots,g_d)}^{\phi}=\frac{1}{|\cG_i|}\sum_{g\in \cG_i}\bigotimes_{j=1}^{i-1} W_{j}(g_j)\otimes W_{i}(g)\bigotimes_{k=i+1}^{d}W_{k}(g_k).
\end{align*}
Consider the unitary $V^{\phi}_{(g'_1,\dots,g'_{i-1},g'_{i+1},\dots,g'_d)}$ defined by
\begin{align*}
    V^{\phi}_{(g'_1,\dots,g'_{i-1},g'_{i+1},\dots,g'_d)}= U^{(1)}_{g'_{1}}\otimes \dots U^{(i-1)}_{g_{i-1}'}\otimes \mI\otimes U^{(i+1)}_{g_{i+1}'}\otimes \dots \otimes U^{(d)}_{g'_d}.
\end{align*}
Thus, applying $V^{\phi}_{(g'_1,\dots,g'_{i-1},g'_{i+1},\dots,g'_d)}$ on the state $\rho_{(g_1,\dots,g_{i-1},g_{i+1},\dots,g_d)}^{\phi}$ we get $\rho_{(g'_1g_1,\dots,g'_{i-1}g_{i-1},g'_{i+1}g_{i+1},\dots,g'_dg_d)}^{\phi}$ which implies the channel $W_{\phi}$ is $\cG_1\times \dots \cG_{i-1}\times\cG_{i+1}\times\dots\times\cG_d$-covariant. 
\item \textbf{Automorphism Factor:} The automorphism factor is a special case of homomorphism factor. In this case, the factor function $f_{\phi}$ is defined using automorphism $\phi\colon\cG \rightarrow \cG$ for some abelian group $\cG$.  Since $\phi$ is an automorphism, every output element $h$ has a unique preimage $g=\phi^{-1}(h)$. Thus, the induced CQ channel $W_{\phi}$ has output states of the form 
\begin{align*}
    \rho^{\phi}_h= W(\phi^{-1}(h))= \rho_{\phi^{-1}(h)}.
\end{align*}
Moreover, the induced channel $W_{\phi}$ is also $\cG$-covariant. If $W$ is $\cG$-covariant with unitary representation $\{U_{a}\}_{a\in \cG}$, then the output of the induced channel $\forall h_1,h_2\in \cG$ satisfies
\begin{align*}
    \rho^{\phi}_{h_1h_2} & =\rho_{\phi^{-1}(h_1h_2)}\\
    & = \rho_{\phi^{-1}(h_1)\phi^{-1}(h_2)}\\
    & = U_{\phi^{-1}(h_1)}\rho_{\phi^{-1}(h_2)}U_{\phi^{-1}(h_1)}^{\dagger}.
\end{align*}
This implies that the CQ channel $W_{\phi}$ is $\cG$-covariant.
\end{itemize}
% The above description of induced CQ channels for different factors is important for many reasons. This establishes recursive relations for multiple factors and each of the induced CQ channels for the above described factors remain group covariant. Our goal is to leverage the group symmetry for each of the induced CQ channels and to find an efficient characterization of the output states of those channels. This means that if we can find an efficient characterization of the induced CQ channels for these factors, any combination of these factors can also be efficiently  characterized as long as the overall factor graph remains a tree. 

% For the rest of the paper we will consider group covariant PSC and group covariant heralded mixture of PSCs. Using quantum message passing, we will establish that for the factors described above, the induced CQ channels remain group covariant heralded mixture of PSCs.  Consider group covariant PSCs for each group $\cG_1$, $\cG_2$, $\dots$, $\cG_m$. The quantum message passing utilizes the eigen lists $\blambda_k=\{\lambda_{\chi}^{(k)}\}_{\chi\in \hcG_k}$  $\forall k\in \{1,\dots,m\}$ and computes the sufficient statistics to characterize the output CQ channel induced by local factors. In subsequent sections, we will establish this for each of the factors. 

The preceding discussion identifies, for each primitive factor, the induced CQ channel obtained from incoming group-covariant CQ channels and shows that these induced channels remain group-covariant. This provides the structural layer of the quantum message-passing framework. We now specialize to the more structured setting of group-covariant PSCs, and more generally finite heralded mixtures of such PSCs, for which each local factor admits an explicit characterization in the character domain. 

For each primitive factor considered below, we proceed in the same way. We first identify the induced channel on the relevant output group, then exhibit an isometry or unitary that reveals the natural sufficient statistic, and finally conclude whether the output is a single group-covariant PSC or a finite heralded mixture of such PSCs. In this setting, the relevant sufficient statistics are the eigen lists $\blambda_k=\{\lambda^{(k)}_{\chi}\}_{\chi\in \widehat{\cG}_k}$ associated with the incoming messages.

\subsection{Quantum Message Passing for Check Factor}
% In this section, we develop the quantum message passing update rule for the check factor combining of two group covariant PSCs. More specifically, in Lemma~\ref{lem:check node abelian}, we establish that the induced CQ channel of the check factor is characterized via the heralded mixture of PSCs, where the probability and the eigen list for each PSC are calculated using the eigen list of the original group covariant PSCs associated with the incoming edges of the factor.

In this subsection, we analyze the check factor for two incoming $\cG$-covariant PSCs. The main point is that the check-factor output need not remain a single PSC. Rather, after a suitable change of basis, it decomposes into a finite heralded mixture of $\cG$-covariant PSCs. The next lemma gives the corresponding ensemble explicitly in terms of the incoming eigen lists.

\begin{lem}\label{lem:check node abelian}
For abelian group $\cG$, consider $\cG$-covariant PSCs $W_{1}$ and $W_{2}$, with Gram matrices $\Gamma_{W_1}$ and $\Gamma_{W_2}$ having eigen lists $\blambda_{1}=\{\lambdaa_{\chi}\}_{\chi\in \hcG}$ and $\blambda_{2}=\{\lambdab_{\chi}\}_{\chi\in \hcG}$ respectively. Then, the induced CQ channel of the check factor $W_{1}\cnop W_{2}$ can be decomposed into a heralded mixture of PSCs  characterized using the ensemble $\{p^{\cnop}_{\chi},W^{\cnop}_{\chi}\}_{\chi\in \hcG}$ where  $p^{\cnop}_{\chi}$ is the probability of the abelian PSC $W^{\cnop}_{\chi}$. Let $\blambda^{\cnop}_{\chi}=\{\lambda^{\cnop,\chi}\}_{\chi'\in\hcG}$ be the eigen list for $W^{\cnop}_{\chi}$ for $\chi\in \hcG$. Then, for  $\chi\in\hcG$, $p_{\chi}^{\cnop}$ and $\blambda^{\cnop}_{\chi}$ are computed using $\blambda_1$ and $\blambda_2$ as below
    \begin{align*}
        p_{\chi}^{\cnop} & =\frac{1}{|\cG|^{2}}\sum_{\chi'\in \hcG}\lambda_{\chi\chi'}^{(1)}\lambda_{\chi'}^{(2)},\\
        \lambda_{\chi'}^{(\cnop,\chi)} & = \frac{1}{|\cG|p_{\chi}^{\cnop}}\lambda_{\chi\chi'}^{(1)}\lambda_{\chi'}^{(2)}.
    \end{align*}
    \end{lem}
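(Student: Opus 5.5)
The plan is to work directly with the canonical realizations of $W_1$ and $W_2$ and to find an explicit unitary that block-diagonalizes the check-factor output. By the lemma stating that $\cG$-covariant PSCs are determined up to isometric equivalence by their eigen lists, we may take
\[
\ket{\psi^{(i)}_g}=\frac{1}{\sqrt{|\cG|}}\sum_{\chi\in\hcG}\sqrt{\lambda^{(i)}_\chi}\,\chi(g)\ket{\chi}, \qquad i=1,2.
\]
From the definition of the check factor,
\[
\rho^{\cnop}_h=\frac{1}{|\cG|}\sum_{g\in\cG}\ketbra{\psi^{(1)}_g}{\psi^{(1)}_g}\otimes\ketbra{\psi^{(2)}_{g^{-1}h}}{\psi^{(2)}_{g^{-1}h}} .
\]
The first step is to expand $\ket{\psi^{(1)}_g}\otimes\ket{\psi^{(2)}_{g^{-1}h}}$ in the product character basis $\{\ket{\chi_1}\otimes\ket{\chi_2}\}$. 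The coefficient of $\ket{\chi_1}\otimes\ket{\chi_2}$ is
\[
\frac{1}{|\cG|}\sqrt{\lambda^{(1)}_{\chi_1}\lambda^{(2)}_{\chi_2}}\,\chi_1(g)\chi_2(g^{-1}h).
\]

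The key step is to re-index with $\chi_1=\chi\chi'$ and $\chi_2=\chi'$. This is a bijection of $\hcG\times\hcG$, using the abelian dual-group structure. Multiplicativity of characters gives
\[
\chi_1(g)\chi_2(g^{-1}h)=\chi(g)\chi'(h),
\]
so the dependence on $g$ factors out. Hence
\[
\ket{\psi^{(1)}_g}\otimes\ket{\psi^{(2)}_{g^{-1}h}}=\sum_{\chi\in\hcG}\chi(g)\ket{\phi^{\chi}_h},
\]
where
\[
\ket{\phi^{\chi}_h}\coloneqq\frac{1}{|\cG|}\sum_{\chi'\in\hcG}\sqrt{\lambda^{(1)}_{\chi\chi'}\lambda^{(2)}_{\chi'}}\,\chi'(h)\ket{\chi\chi'}\otimes\ket{\chi'} .
\]
Substituting into $\rho^{\cnop}_h$ and applying character orthogonality, $\frac{1}{|\cG|}\sum_g\chi(g)\overline{\tilde\chi(g)}=\delta_{\chi,\tilde\chi}$, kills all cross terms. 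We obtain
\[
\rho^{\cnop}_h=\sum_{\chi\in\hcG}\ketbra{\phi^\chi_h}{\phi^\chi_h}.
\]

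Next I would note that for distinct $\chi$ the vectors $\ket{\phi^\chi_h}$ lie in mutually orthogonal subspaces $\mathcal{S}_\chi=\mathrm{span}\{\ket{\chi\chi'}\otimes\ket{\chi'}\}_{\chi'}$, and these subspaces tile the whole space. Therefore the map
\[
V\colon\ket{\chi\chi'}\otimes\ket{\chi'}\mapsto\ket{\chi'}\otimes\ket{\chi}
\]
is a permutation of an orthonormal basis, hence unitary, and the second register becomes a classical herald. Computing $\|\phi^\chi_h\|^2=\frac{1}{|\cG|^2}\sum_{\chi'}\lambda^{(1)}_{\chi\chi'}\lambda^{(2)}_{\chi'}=p^{\cnop}_\chi$ shows it is independent of $h$. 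Writing $\ket{\phi^\chi_h}=\sqrt{p^{\cnop}_\chi}\,\ket{\tilde\phi^\chi_h}$, the normalized state is
\[
V\ket{\tilde\phi^\chi_h}=\left(\frac{1}{\sqrt{|\cG|}}\sum_{\chi'}\sqrt{\lambda^{(\cnop,\chi)}_{\chi'}}\,\chi'(h)\ket{\chi'}\right)\otimes\ket{\chi},
\]
with $\lambda^{(\cnop,\chi)}_{\chi'}=\lambda^{(1)}_{\chi\chi'}\lambda^{(2)}_{\chi'}/(|\cG|p^{\cnop}_\chi)$. This is exactly the canonical form of a $\cG$-covariant PSC with that eigen list. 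It is covariant under the diagonal unitaries $\ket{\chi'}\mapsto\chi'(g)\ket{\chi'}$, and its Gram matrix has the stated eigenvalues by the computation in the isometric-equivalence lemma. So $V\rho^{\cnop}_hV^\dagger=\sum_\chi p^{\cnop}_\chi W^{\cnop}_\chi(h)\otimes\ketbra{\chi}{\chi}$.

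As a final consistency check, Lemma~\ref{lem: eigenlist trace relation} should give $\sum_\chi p^{\cnop}_\chi=1$ and $\sum_{\chi'}\lambda^{(\cnop,\chi)}_{\chi'}=|\cG|$. Both follow immediately from $\sum\lambda^{(i)}=|\cG|$, since $\chi\chi'$ ranges over $\hcG$ as $\chi$ does. If some $p^{\cnop}_\chi=0$, that term is simply dropped. I expect the main obstacle to be the bookkeeping in the re-indexing step: choosing the substitution $\chi_1=\chi\chi'$ (rather than, say, $\chi_1=\chi^{-1}\chi'$) so that the phases come out as $\chi(g)\chi'(h)$ and match the lemma's indexing $\lambda^{(1)}_{\chi\chi'}$ under the paper's convention $\ket{\chi}\propto\sum_g\chi^{-1}(g)\ket{g}$. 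After that, the orthogonality argument is routine.
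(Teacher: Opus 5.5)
Your proposal is correct and follows essentially the same route as the paper. The paper's unitary $\tilde{U}^{\cnop}$ sends $\ket{\chi\chi'}\otimes\ket{\chi'}$ to $\ket{\chi}\otimes\ket{\chi'}$, which is your $V$ up to a SWAP, and the herald then comes out of character orthogonality in the sum over $g$. The only difference is cosmetic: you factor out the $g$-dependence at the level of state vectors before forming the density matrix, whereas the paper expands the full four-index density matrix and gets the block-diagonal structure from a Kronecker delta.
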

\begin{proof}
               Consider the states $\bigg\{\ket{\psia_{g}}\bigg\}_{g\in\cG}$ that are the outputs of channel $W_1$ satisfying
        \begin{align*}
            \ket{\psia_{g}}\coloneqq \frac{1}{\sqrt{|\cG|}}\sum_{\chi\in \hcG}\sqrt{\lambdaa_{\chi}}\chi(g)\ket{\chi}.
        \end{align*}
  Similarly,  consider the states $\bigg\{\ket{\psib_{g}}\bigg\}_{g\in\cG}$ which are outputs of channel $W_2$ satisfying
        \begin{align*}
            \ket{\psib_{g}}\coloneqq \frac{1}{\sqrt{|\cG|}}\sum_{\chi\in \hcG}\sqrt{\lambdab_{\chi}}\chi(g)\ket{\chi}.
        \end{align*}

      Then, the outputs of check-factor-combined channel $W_{1}\cnop W_{2}$ can be written  $\forall g\in \cG$ as 
    \begin{align*}
       & [W_{1}\cnop W_{2}](g)\\
       & =\frac{1}{|\cG|}\sum_{g'\in \cG}\ketbra{\psia_{g'}}{\psia_{g'}}\otimes \ketbra{\psib_{g'^{-1}g}}{\psib_{g'^{-1}g}}
    \end{align*}
    Expanding $\ket{\psia_{g'}}\otimes \ket{\psib_{g'^{-1}g}}$ we get 
    \begin{align*}
      &  \ket{\psia_{g'}}\otimes \ket{\psib_{g'^{-1}g}}\\
      & =\frac{1}{|\cG|}\sum_{\chi\in \hcG}\sum_{\chi'\in \hcG}\sqrt{\lambdaa_{\chi}\lambdab_{\chi'}}\chi(g')\chi'(g'^{-1}g)\ket{\chi}\otimes\ket{\chi'} .
    \end{align*}
Consider the unitary $\tilde{U}^{\cnop}$ which acts on character vectors according to
\begin{align*}
    \tilde{U}^{\cnop}\ket{\chi}\otimes\ket{\chi'}=\ket{\chi\chi'^{-1}}\otimes\ket{\chi'}.
\end{align*}
Then we get
\begin{align*}
   &  \tilde{U}^{\cnop}\ket{\psia_{g'}}\otimes \ket{\psib_{g'^{-1}g}}\\
   & =  \frac{1}{|\cG|}\sum_{\chi\in \hcG}\sum_{\chi'\in \hcG}\sqrt{\lambdaa_{\chi}\lambdab_{\chi'}}\chi(g')\chi'(g'^{-1}g)\ket{\chi\chi'^{-1}}\otimes\ket{\chi'}.
\end{align*}
Hence for all $g\in \cG$, we can write
\begin{align*}
   & \tilde{U}^{\cnop}[W_{1}\cnop W_{2}](g)(\tilde{U}^{\cnop})^{\dagger}\\
   & = \frac{1}{|\cG|^3}\sum_{g'\in \cG}\left(\sum_{\chi\in \hcG}\sum_{\chi'\in \hcG}\sqrt{\lambdaa_{\chi}\lambdab_{\chi'}}\chi(g')\chi'(g'^{-1}g)\ket{\chi\chi'^{-1}}\otimes\ket{\chi'}\right) \left(\sum_{\xi\in \hcG}\sum_{\xi'\in \hcG}\sqrt{\lambdaa_{\xi}\lambdab_{\xi'}}\xi^{-1}(g')\xi'^{-1}(g'^{-1}g)\bra{\xi\xi'^{-1}}\otimes\bra{\xi'}\right)\\
   & = \frac{1}{|\cG|^3}\sum_{g'\in \cG}\bigg{(}\sum_{\chi,\chi',\xi,\xi'\in \hcG}\sqrt{\lambdaa_{\chi}\lambdaa_{\xi}\lambdab_{\chi'}\lambdab_{\xi'}}\chi\xi^{-1}(g')\chi'^{-1}\xi'(g')\chi'\xi'^{-1}(g)
  \ketbra{\chi\chi'^{-1}}{\xi\xi'^{-1}}\otimes \ketbra{\chi'}{\xi'}\bigg{)}\\
   & =\frac{1}{|\cG|^3}\sum_{\chi,\chi',\xi,\xi'\in \hcG}\sqrt{\lambdaa_{\chi}\lambdaa_{\xi}\lambdab_{\chi'}\lambdab_{\xi'}}\left(\sum_{g'\in \cG}\chi\xi^{-1}(g')\chi'^{-1}\xi'(g')\right) \chi'\xi'^{-1}(g)\ketbra{\chi\chi'^{-1}}{\xi\xi'^{-1}}\otimes \ketbra{\chi'}{\xi'}\\
   & =\frac{1}{|\cG|^3}\sum_{\chi,\chi',\xi,\xi'\in \hcG}\sqrt{\lambdaa_{\chi}\lambdaa_{\xi}\lambdab_{\chi'}\lambdab_{\xi'}}\left(|\cG|\delta_{\chi\xi^{-1},\chi'\xi'^{-1}}\right) \chi'\xi'^{-1}(g)\ketbra{\chi\chi'^{-1}}{\xi\xi'^{-1}}\otimes \ketbra{\chi'}{\xi'}\\
   & = \frac{1}{|\cG|^2}\sum_{\chi'',\chi',\xi'\in \hcG}\sqrt{\lambdaa_{\chi''\chi'}\lambdaa_{\chi''\xi'}\lambdab_{\chi'}\lambdab_{\xi'}}\chi'\xi'^{-1}(g)\ketbra{\chi''}{\chi''}\otimes  \ketbra{\chi'}{\xi'} \\
   & =\frac{1}{|\cG|^2}\sum_{\chi''\in \hcG}\ketbra{\chi''}{\chi''}\otimes  \left(\sum_{\chi',\xi'\in \hcG}\sqrt{\lambdaa_{\chi''\chi'}\lambdaa_{\chi''\xi'}\lambdab_{\chi'}\lambdab_{\xi'}}\chi'\xi'^{-1}(g)\ketbra{\chi'}{\xi'}\right).
\end{align*}
Hence we can compute $p^{\cnop}_{\chi}$ as 
\begin{align*}
    p^{\cnop}_{\chi} & =\Tr\left(\bra{\chi}\otimes \mI \left(\tilde{U}^{\cnop}[W_{1}\cnop W_{2}](g)(\tilde{U}^{\cnop})^{\dagger}\right)\ket{\chi}\otimes \mI\right)\\
    & = \Tr\left(\frac{1}{|\cG|^2}\sum_{\chi',\xi'\in \hcG}\sqrt{\lambdaa_{\chi\chi'}\lambdaa_{\chi\xi'}\lambdab_{\chi'}\lambdab_{\xi'}}\chi'\xi'^{-1}(g)\ketbra{\chi'}{\xi'}\right)\\
    & =\frac{1}{|\cG|^2}\sum_{\chi''\in \hcG}\bigg{(}  \sum_{\chi',\xi'\in \hcG}\sqrt{\lambdaa_{\chi\chi'}\lambdaa_{\chi\xi'}\lambdab_{\chi'}\lambdab_{\xi'}}\chi'\xi'^{-1}(g)\bra{\chi''}\ketbra{\chi'}{\xi'}\ket{\chi''}\bigg{)}\\
 & = \frac{1}{|\cG|^2}\sum_{\chi'\in \hcG}\lambdaa_{\chi\chi'}\lambdab_{\chi'}.
\end{align*}

Consider the state $\bigg\{\ket{\psi^{(\cnop,\chi)}_{g}}\bigg\}_{g\in \cG}$ such that 
\begin{align*}
    \ket{\psi^{(\cnop,\chi)}_{g}}\coloneqq\frac{1}{\sqrt{|\cG|}}\sum_{\xi\in \hcG}\sqrt{\lambda_{\xi}^{(\cnop,\chi)}}\xi(g)\ket{\xi},
\end{align*}
where $\lambda_{\chi'}^{(\cnop,\chi)}  = \frac{1}{|\cG|p_{\chi}^{\cnop}}\lambda_{\chi\chi'}^{(1)}\lambda_{\chi'}^{(2)}$. Then we get 
\begin{align}
   & \ketbra{\psi^{(\cnop,\chi)}_g}{\psi^{(\cnop,\chi)}_g}\\
   & =\frac{1}{|\cG|}\sum_{\chi',\xi'\in \hcG}\sqrt{\lambda_{\chi'}^{(\cnop,\chi)}\lambda_{\xi'}^{(\cnop,\chi)}}\chi'\xi'^{-1}(g)\ketbra{\chi'}{\xi'}\\
    & = \frac{1}{|\cG|^2p^{\cnop}_{\chi}}\sum_{\chi',\xi'\in \hcG}\sqrt{\lambdaa_{\chi\chi'}\lambdaa_{\chi\xi'}\lambdab_{\chi'}\lambdab_{\xi'}}\chi'\xi'^{-1}(g)\ketbra{\chi'}{\xi'}.
\end{align}
The following relation holds for the check node combined output
\begin{align*}
 &  \frac{ \bra{\chi}\otimes \mI \left(\tilde{U}^{\cnop}[W_{1}\cnop W_{2}](g)(\tilde{U}^{\cnop})^{\dagger}\right)\ket{\chi}\otimes \mI}{\Tr\left(\bra{\chi}\otimes \mI \left(\tilde{U}^{\cnop}[W_{1}\cnop W_{2}](g)(\tilde{U}^{\cnop})^{\dagger}\right)\ket{\chi}\otimes \mI\right)}\\
   & =\frac{ \bra{\chi}\otimes \mI \left(\tilde{U}^{\cnop}[W_{1}\cnop W_{2}](g)(\tilde{U}^{\cnop})^{\dagger}\right)\ket{\chi}\otimes \mI}{p^{\cnop}_{\chi}}\\
   & = \frac{1}{|\cG|^2p^{\cnop}_{\chi}}\sum_{\chi',\xi'\in \hcG}\sqrt{\lambdaa_{\chi\chi'}\lambdaa_{\chi\xi'}\lambdab_{\chi'}\lambdab_{\xi'}}\chi'\xi'^{-1}(g)\ketbra{\chi'}{\xi'}\\
   & = \ketbra{\psi^{(\cnop,\chi)}_g}{\psi^{(\cnop,\chi)}_g}.\qedhere
\end{align*}
\end{proof}

Therefore, the check-factor update maps two incoming $\cG$-covariant PSCs to a finite heralded mixture of $\cG$-covariant PSCs. Thus, the outgoing message is completely described by the ensemble $\{p_{\chi}^{\cnop},\blambda^{(\cnop,\chi)}\}_{\chi\in \widehat{\cG}}$.

% The natural question to ask is what the operational unitary is that converts the induced CQ channel output to a heralded mixture of PSCs. This is achieved by the check factor unitary $U^{\cnop}$ acting on two group covariant PSC output states. More specifically, the check factor unitary satisfies the following relation 
% \begin{align*}
%     U^{\cnop}  = (\mI\otimes \cF^{\dagger})(\text{SWAP})\tilde{U}^{\cnop}
%    \end{align*}
%    where unitary $\cF$ transforms the state $\ket{g}$ for a fixed $g\in\cG$ to the character state $\ket{\chi}$ for fixed $\chi\in\hcG$. Notice that the check factor unitary $U^{\cnop}$ does not depend on the eigen lists of the input channels.

It is also useful to describe the operational unitary underlying the above decomposition. The map $\tilde{U}^{\cnop}$ reorganizes the two incoming character labels into a herald label and a residual character label. After this change of basis, the induced output becomes block diagonal in the herald register, and each block is a $\cG$-covariant PSC. An operational realization is given by
\begin{align*}
    U^{\cnop} = (\mI\otimes \cF^{\dagger})(\mathrm{SWAP})\tilde{U}^{\cnop},
\end{align*}
where $\cF$ denotes the Fourier transform between the canonical basis indexed by $\cG$ and the character basis indexed by $\widehat{\cG}$. Notice that $U^{\cnop}$ depends only on the group structure and not on the eigen lists of the incoming channels.
In \ref{appendix:check factor example}, we provide an example for check factor update rule for group $\cG=\mathbb{Z}_3\times\mathbb{Z}_2$.

\subsection{Quantum Message Passing for Equality Factor}
% In Lemma~\ref{lem:bit node abelian}, we establish the characterization of the induced CQ channel of the equality factor with two input edges where each edge corresponds to a $\cG$-covariant PSC. This characterization is achieved using the eigen list of the PSCs.  

We next consider the equality factor with two incoming $\cG$-covariant PSCs. In contrast to the check factor, the equality update remains within the class of single $\cG$-covariant PSCs after isometric compression. Thus, the outgoing message is again characterized by one eigen list rather than a nontrivial heralded ensemble.
\begin{lem}\label{lem:bit node abelian}
Consider $\cG$-covariant PSCs $W_{1}$ and $W_{2}$ for abelian group $\cG$ with Gram matrices $\Gamma_{W_1}$ and $\Gamma_{W_2}$ with eigen lists $\blambda_{1}=\{\lambdaa_{\chi}\}_{\chi\in \hcG}$ and $\blambda_{2}=\{\lambdab_{\chi}\}_{\chi\in \hcG}$ respectively.
Then, the induced channel of the equality factor $W_{1}\vnop W_{2}$ is isometrically equivalent to  a $\cG$-covariant PSC $W^{\vnop}$ with Gram matrix eigen list $\blambda^{\vnop}=\{\lambda^{\vnop}_{\chi}\}_{\chi\in \hcG}$ such that 
        \begin{align*}
            \lambda^{\vnop}_{\chi}=\frac{1}{|\cG|}\sum_{\chi'\in \hcG}\lambdaa_{\chi'}\lambdab_{\chi\chi'^{-1}}.
        \end{align*}
 \end{lem}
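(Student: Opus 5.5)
The approach is to work with the canonical states and compute the Gram matrix of the tensor-product outputs directly. By the earlier lemma showing that a $\cG$-covariant PSC is determined up to isometric equivalence by its eigen list, it suffices to show two things. First, the states $\ket{\psi^{\vnop}_h}\coloneqq\ket{\psia_h}\otimes\ket{\psib_h}$ form a $\cG$-covariant family; this is already shown in the structural discussion of the equality factor, via $U^{(1)}_g\otimes U^{(2)}_g$. Second, their Gram matrix has the claimed eigen list. Any two pure-state families with the same Gram matrix are related by an isometry mapping the span of one onto the other, so the conclusion then follows.

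The Gram matrix entries factor:
\begin{align*}
\braket{\psi^{\vnop}_g}{\psi^{\vnop}_{g'}}=\braket{\psia_g}{\psia_{g'}}\braket{\psib_g}{\psib_{g'}}=\gamma^{(1)}_{g^{-1}g'}\gamma^{(2)}_{g^{-1}g'}.
\end{align*}
Hence $\Gamma_{W_1\vnop W_2}$ is $\cG$-circulant with first row $\gamma^{\vnop}_g=\gamma^{(1)}_g\gamma^{(2)}_g$. It is the Hadamard product of the two Gram matrices, and in particular it is positive semidefinite.

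Lemma~\ref{lem:gram diagonalization} then gives $\lambda^{\vnop}_{\chi}=\sum_{g}\gamma^{(1)}_g\gamma^{(2)}_g\chi^{-1}(g)$. Next I invert the Fourier relation, as in the proof of the isometric-equivalence lemma, writing $\gamma^{(i)}_g=\frac{1}{|\cG|}\sum_{\chi'}\lambda^{(i)}_{\chi'}\chi'(g)$. Substituting both expansions gives
\begin{align*}
\lambda^{\vnop}_\chi=\frac{1}{|\cG|^2}\sum_{\chi',\chi''}\lambdaa_{\chi'}\lambdab_{\chi''}\sum_{g}\chi'\chi''\chi^{-1}(g).
\end{align*}
Character orthogonality collapses the inner sum to $|\cG|\delta_{\chi''=\chi\chi'^{-1}}$, which yields $\lambda^{\vnop}_\chi=\frac{1}{|\cG|}\sum_{\chi'}\lambdaa_{\chi'}\lambdab_{\chi\chi'^{-1}}$. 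As a sanity check, summing over $\chi$ gives $|\cG|^2/|\cG|=|\cG|$, matching Lemma~\ref{lem: eigenlist trace relation}.

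The step needing the most care is making the isometry explicit rather than merely existential. The computation itself is routine convolution, the only risk being the index convention in the inverse transform. For the isometry, I would define $V\colon \ket{\chi}\mapsto \frac{1}{\sqrt{\lambda^{\vnop}_\chi}}\Pi_\chi(\ket{\psia}\otimes\ket{\psib})$. Concretely, in the product character basis $\ket{\chi'}\otimes\ket{\chi''}$, the tensor state has amplitude $\frac{1}{|\cG|}\sqrt{\lambdaa_{\chi'}\lambdab_{\chi''}}(\chi'\chi'')(h)$. Grouping terms by $\chi=\chi'\chi''$, the unitary $\ket{\chi'}\otimes\ket{\chi''}\mapsto\ket{\chi'}\otimes\ket{\chi'\chi''}$ brings the state into the form $\frac{1}{\sqrt{|\cG|}}\sum_\chi \chi(h)\ket{v_\chi}\otimes\ket{\chi}$. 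Here the vectors $\ket{v_\chi}$ are independent of $h$ and $\|v_\chi\|^2=\lambda^{\vnop}_\chi$. This exhibits the canonical state of $W^{\vnop}$ up to the $h$-independent isometry $\ket{\chi}\mapsto \ket{v_\chi}/\|v_\chi\|\otimes\ket{\chi}$. Characters with $\lambda^{\vnop}_\chi=0$ are simply dropped.
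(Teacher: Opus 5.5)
Your proposal is correct and follows essentially the paper's proof. In both, the Gram matrix of the tensor-product outputs is the entrywise product $\gamma^{(1)}_{g^{-1}g'}\gamma^{(2)}_{g^{-1}g'}$ of two $\cG$-circulant Gram matrices, and expanding each factor in characters and applying orthogonality gives the convolution formula. The paper reads the coefficients off the inverse expansion, while you apply the forward transform of Lemma~\ref{lem:gram diagonalization}; this is the same computation. Your explicit isometry, built from $\ket{\chi'}\otimes\ket{\chi''}\mapsto\ket{\chi'}\otimes\ket{\chi'\chi''}$ and the $h$-independent vectors $\ket{v_\chi}$, is not part of the paper's proof. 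It does match the operational construction the paper gives right after the lemma, namely $U_{+}$ together with the vectors $\ket{\tilde{\zeta}_{\chi}}$.
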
  
    \begin{proof}
Consider the states $\bigg\{\ket{\psia_{g}}\bigg\}_{g\in\cG}$ and $\bigg\{\ket{\psib_{g}}\bigg\}_{g\in\cG}$  which are outputs of channel $W_1$ and $W_{2}$ respectively. 
Then the Gram matrix entries for $\Gamma_{W_1}$ and $\Gamma_{W_2}$ $\forall g\in \cG$, satisfy
\begin{align*}
    \gamma^{(1)}_{g} & =\frac{1}{|\cG|}\sum_{\chi\in \hcG}\lambdaa_{\chi}\chi(g),\\
    \gamma^{(2)}_{g} & =\frac{1}{|\cG|}\sum_{\chi\in \hcG}\lambdab_{\chi}\chi(g).
\end{align*}

For the equality-factor combined channel $W_{1}\vnop W_{2}$, for all inputs $g\in \cG$, the outputs are 
    \begin{align*}
        [W_{1}\vnop W_{2}](g)= \ketbra{\psia_{g}}{\psia_g}\otimes \ketbra{\psib_g}{\psib_g}.
    \end{align*}
    Thus, the Gram matrix $\Gamma_{W_1\vnop W_2}$ for the combined channel $W_{1}\vnop W_{2}$ satisfies 
    \begin{align*}
        (\Gamma_{W_1\vnop W_2})_{g,g'} & =\braket{\psia_g}{\psia_{g'}}\braket{\psib_g}{\psib_{g'}}\\
        & = \gamma^{(1)}_{g^{-1}g'}\gamma^{(2)}_{g^{-1}g'}\\
        & =\left(\frac{1}{|\cG|}\sum_{\xi\in \hcG}\lambdaa_{\xi}\xi(g^{-1}g')\right)\left(\frac{1}{|\cG|}\sum_{\xi'\in \hcG}\lambdab_{\xi'}\xi'(g^{-1}g')\right)\\
        & =\frac{1}{|\cG|^2}\sum_{\xi,\xi'\in \hcG}\lambdaa_{\xi}\lambdab_{\xi'}\xi\xi'(g^{-1}g')\\\
        & =\frac{1}{|\cG|^{2}}\sum_{\chi\in \hcG} \left(\sum_{\xi\in \hcG}\lambdaa_{\xi}\lambdab_{\chi\xi^{-1}}\right)\chi(g^{-1}g').
    \end{align*}
    Denoting $\lambda^{\vnop}_{\chi}=\frac{1}{|\cG|}\sum_{\xi\in \hcG}\lambdaa_{\xi}\lambdab_{\chi\xi^{-1}}$, we get 
    \begin{align*}
       (\Gamma_{W_1\vnop W_2})_{g,g'}= \frac{1}{|\cG|}\sum_{\chi\in \hcG}\lambda^{\vnop}_{\chi}\chi(g^{-1}g').
    \end{align*}
    Thus $\Gamma_{W_1\vnop W_2}$ is $\cG$-circulant with eigen list $\blambda^{\vnop}=\{\lambda_{\chi}^{\vnop}\}_{\chi\in \hcG}$ which characterizes channel $W^{\vnop}$.
    \end{proof}

    Hence, the equality-factor update preserves the class of $\cG$-covariant PSCs. Equivalently, it may be viewed as a degenerate heralded mixture with a single herald value. 
    
% Since, in this case the output is a $\cG$-covariant PSC characterized via single eigen list, we often use the notation $\blambda_1\vnop \blambda_2$ to denote the resultant eigen list $\blambda^{\vnop}$.
% To describe the operational unitary for the equality factor, consider the unitary $U^{\vnop}_{\blambda_1,\blambda_2}$ acting on  the output states for two $\cG$-covariant  PSCs with eigen lists $\blambda_1$ and $\blambda_2$.
% The equality factor unitary $U^{\vnop}_{\blambda_1,\blambda_2}$ satisfies the following relation $\forall g\in \cG$ and some fixed $b\in\cG$
% \begin{align*}
% U^{\vnop}_{\blambda_1,\blambda_2}\ket{\psi^{(1)}_{g}}\otimes \ket{\psi^{(2)}_{g}}=\ket{\psi^{\vnop}_{g}}\otimes \ket{b}.
% \end{align*}

Since the output of the equality factor is again a single $\cG$-covariant PSC, we often write $\blambda_1\vnop \blambda_2$ for the resulting eigen list $\blambda^{\vnop}$. It is also useful to describe an operational unitary for this combination. The equality-factor unitary $U^{\vnop}_{\blambda_1,\blambda_2}$ compresses the two aligned input states into one outgoing PSC together with an ancilla state independent of the channel input. Formally, for some fixed $\eta\in \hcG$ and all $g\in \cG$, we have
\begin{align*}
    U^{\vnop}_{\blambda_1,\blambda_2}\ket{\psi^{(1)}_{g}}\otimes \ket{\psi^{(2)}_{g}}
    =
    \ket{\psi^{\vnop}_{g}}\otimes \ket{\eta}.
\end{align*}
Thus, the equality factor does not create a nontrivial herald register; rather, it performs an isometric compression of the two incoming PSC outputs into a single $\cG$-covariant PSC.
To construct the unitary $U^{\vnop}_{\blambda_1,\blambda_2}$, we let
\begin{align*}
     U^{\vnop}_{\blambda_1,\blambda_2} =U^{\text{control}}_{\blambda_1,\blambda_2}U_{+},
\end{align*}
where $U_{+}$ satisfies 
\begin{align*}
    U_{+}\ket{\chi}\otimes \ket{\chi'}=\ket{\chi\chi'}\otimes \ket{\chi'}
\end{align*}
for all $\chi,\chi'\in \hcG$, and the unitary $U^{\text{control}}_{\blambda_1,\blambda_2}$ is given by
\begin{align*}
    U^{\text{control}}_{\blambda_1,\blambda_2}=\sum_{\chi\in \hcG} \ketbra{\chi}{\chi}\otimes U^{\chi}_{\blambda_1,\blambda_2} ,
\end{align*}
such that the unitary $U^{\chi}_{\blambda_1,\blambda_2}$ has the following form
\begin{align*}
    U^{\chi}_{\blambda_1,\blambda_2}=\begin{cases}
        \mI, \qquad\text{if $\ket{\zeta_{\chi}}=\ket{\eta}$}\\
        \mI -2\ketbra{\zeta_{\chi}'}{\zeta_{\chi}'},\qquad\text{otherwise}
    \end{cases}
\end{align*}
with $\ket{\zeta_{\chi}'}=\frac{\ket{\zeta_{\chi}}-\ket{\eta}}{||\ket{\zeta_{\chi}}-\ket{\eta}||}$,  $\ket{\zeta_{\chi}}=\frac{\ket{\tilde{\zeta}_{\chi}}}{||\ket{\tilde{\zeta}_{\chi}}||}$, and $\ket{\tilde{\zeta}_{\chi}}=\sum_{\xi\in\hcG}\sqrt{\lambdaa_{\chi\xi^{-1}}\lambdab_{\xi}}\ket{\xi}$.

In \ref{appendix:equality factor example}, we provide an example for equality factor update rule for group $\cG=\mathbb{Z}_3\times\mathbb{Z}_2$.
\subsection{Quantum Message Passing for Homomorphism Factor}

% In this section, establish the quantum message passing for homomorphism factors.
% Consider the surjective homomorphism $\phi: \cG_1\rightarrow \cG_2$ where $\cG_1$ and $\cG_2$ are abelian groups. Consider the dual map $\hat{\phi}:\hcG_{2}\rightarrow \hcG_1$ defined by the composite operation $\hat{\phi}(\xi)=\hat{\phi}\xi$ $\forall \xi\in \hcG_2$. Observe that the following equality holds $\forall g\in \cG_1$
% \begin{align*}
%     \hat{\phi}\xi(g)=\xi(\phi(g)).
% \end{align*}
% The image of map $\hat{\phi}$ satisfies 
% \begin{align*}
%     \image\hat{\phi}\coloneqq \{\chi\in \hcG_1:\chi(k)=1, \forall k\in \ker{\phi}\}
% \end{align*}
% For each $h\in \cG_2$, the output of induced CQ channel $\rho_h$ $\forall h\in \cG_2$ is
% \begin{align*}
%     \rho_h= \frac{1}{|\ker{\phi}|}\sum_{g\in \cG_1: \phi(g)=h}\ketbra{\psi_g}{\psi_g}
% \end{align*}
% When $\phi$ is surjective, we have $|\cG_1|=|\ker\phi||\cG_2|$. Also $\hat{\phi}$ is injective, so $|\image\hat{\phi}|=|\hcG_2|=|\cG_2|$. Thus, we get 
% \begin{align*}
%     \frac{|\hcG_1|}{|\image\hat{\phi}|}=\frac{|\cG_1|}{|\cG_2|}=|\ker\phi|
% \end{align*}
% Next, we define a set of coset representatives  of characters $\cT$ such that $\hcG_1$ can be written as a disjoint union as below 
% \begin{align}\label{eq:group disjoint union}
%     \hcG_1=\bigsqcup_{\eta\in \cT}\eta\image\hat{\phi}
% \end{align}

We now consider the quantum message passing rule associated with a surjective homomorphism factor. Let $\phi:\cG_1\rightarrow \cG_2$ be a surjective homomorphism between finite abelian groups. On the dual side, $\phi$ induces the injective map $\hat{\phi}:\widehat{\cG}_2\rightarrow \widehat{\cG}_1$ defined by
\begin{align*}
    \hat{\phi}(\xi)(g)=\xi(\phi(g)),
    \qquad \xi\in \widehat{\cG}_2,\ g\in \cG_1.
\end{align*}
The image of $\hat{\phi}$ is precisely the subgroup of characters on $\cG_1$ that are trivial on $\ker \phi$, namely
\begin{align*}
    \image \hat{\phi}
    \coloneqq
    \{\chi\in \widehat{\cG}_1:\chi(k)=1,\ \forall k\in \ker \phi\}.
\end{align*}
Since $\phi$ is surjective, $\hat{\phi}$ is injective and $|\cG_1|=|\ker \phi|\,|\cG_2|$. Hence, we have
\begin{align*}
    \frac{|\widehat{\cG}_1|}{|\image \hat{\phi}|}
    =
    \frac{|\cG_1|}{|\cG_2|}
    =
    |\ker \phi|.
\end{align*}
Choose a set of coset representatives $\cT$ such that dual group $\hcG_1$ is expressed using following disjoint union
\begin{align}\label{eq:group disjoint union}
    \widehat{\cG}_1=\bigsqcup_{\eta\in \cT}\eta\,\image \hat{\phi}.
\end{align}
This coset decomposition is the source of the herald register in the homomorphism update where each coset label $\eta$ will index one outgoing $\cG_2$-covariant PSC.
Now we are ready to state the general lemma about surjective homomorphism $\phi$. 
\begin{lem}\label{lem:surjective homomorphism}
   % Consider $\cG_1$-covariant PSC $W_{1}$ with Gram matrix $\Gamma_{W_1}$ characterized by the eigen list $\blambda_{1}=\{\lambdaa_{\chi}\}_{\chi\in \hcG_1}$. Let  $\phi:\cG_1\rightarrow \cG_2$ be a surjective homomorphism. Then, the output is characterized via a heralded mixture of $\cG_2$-covariant PSCs. Specifically, the ensemble $\{p_{\eta},\blambda^{\eta}\}_{\eta\in\cT}$ characterizes the output channel, where $p_{\eta}$ corresponds to the probability of $\cG_2$-covariant PSC satisfying
   % \begin{align*}
   %     p_{\eta} & = \frac{1}{|\cG_1|}\sum_{\chi\in \eta\image\hat{\phi}}\lambda_{\chi}^{(1)}
   % \end{align*}
   % with eigen list $\blambda^{\eta}\coloneqq\{\lambda^{\eta}_{\xi}\}_{\xi\in \hcG_2}$  satisfying 
   % \begin{align*}
   %     % p_{\eta} & = \frac{1}{|\cG_1|}\sum_{\chi\in \eta\image\hat{\phi}}\lambda_{\chi}^{(1)}\\
   %     \lambda_{\xi}^{\eta}  = \frac{|\cG_2|}{|\cG_1|p_{\eta}}\lambda^{(1)}_{\eta\hat{\phi}(\xi)}
   % \end{align*}
   % $\forall \eta\in \cT$, $\xi\in \hcG_2$.

   Consider a $\cG_1$-covariant PSC $W_1$ with Gram matrix $\Gamma_{W_1}$ and eigen list $\blambda_1=\{\lambda^{(1)}_{\chi}\}_{\chi\in \hcG_1}$. Let $\phi:\cG_1\rightarrow \cG_2$ be a surjective homomorphism, and let $\cT\subseteq \hcG_1$ be a set of representatives for the cosets of $\image \hat{\phi}$ in $\hcG_1$. Then, the output channel is characterized by a heralded mixture of $\cG_2$-covariant PSCs. More specifically, the output channel is characterized by the ensemble $\{p_{\eta},\blambda^{\eta}\}_{\eta\in\cT}$, where $p_{\eta}$ is given by
\begin{align*}
    p_{\eta} = \frac{1}{|\cG_1|}\sum_{\chi\in \eta\,\image\hat{\phi}}\lambda_{\chi}^{(1)}
\end{align*}
and $\blambda^{\eta}=\{\lambda^{\eta}_{\xi}\}_{\xi\in \hcG_2}$ is the eigen list of the corresponding $\cG_2$-covariant PSC, where
\begin{align*}
    \lambda_{\xi}^{\eta} = \frac{|\cG_2|}{|\cG_1|\,p_{\eta}}\lambda^{(1)}_{\eta\hat{\phi}(\xi)}
\end{align*}
for all $\eta\in \cT$ and $\xi\in \hcG_2$.
\end{lem}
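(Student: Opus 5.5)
We will work in the canonical realization $\ket{\psi^{(1)}_g}=\frac{1}{\sqrt{|\cG_1|}}\sum_{\chi\in\hcG_1}\sqrt{\lambda^{(1)}_\chi}\chi(g)\ket{\chi}$ and compute the induced state
\begin{align*}
\rho^{\phi}_h=\frac{1}{|\ker\phi|}\sum_{g\in\cG_1:\phi(g)=h}\ketbra{\psi^{(1)}_g}{\psi^{(1)}_g}
\end{align*}
directly in the character basis. Expanding gives matrix elements $\frac{1}{|\ker\phi||\cG_1|}\sqrt{\lambda^{(1)}_\chi\lambda^{(1)}_{\chi'}}\sum_{\phi(g)=h}\chi\chi'^{-1}(g)$. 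Fix one preimage $g_0$ with $\phi(g_0)=h$, which exists by surjectivity. Writing $g=g_0k$ with $k\in\ker\phi$, the inner sum becomes $\chi\chi'^{-1}(g_0)\sum_{k\in\ker\phi}\chi\chi'^{-1}(k)$. By orthogonality of characters restricted to the subgroup $\ker\phi$, this equals $|\ker\phi|$ if $\chi\chi'^{-1}$ is trivial on $\ker\phi$, i.e.\ $\chi\chi'^{-1}\in\image\hat\phi$, and $0$ otherwise. Hence $\rho^\phi_h$ is block diagonal with respect to the coset decomposition \eqref{eq:group disjoint union}: only pairs $\chi,\chi'$ lying in the same coset $\eta\,\image\hat\phi$ survive.

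Next we parametrize each coset. Write $\chi=\eta\hat\phi(\xi)$ and $\chi'=\eta\hat\phi(\xi')$ with $\xi,\xi'\in\hcG_2$; this is a bijection by injectivity of $\hat\phi$. Then $\chi\chi'^{-1}(g_0)=\xi\xi'^{-1}(\phi(g_0))=\xi\xi'^{-1}(h)$, so the $\eta$-block equals
\begin{align*}
\frac{1}{|\cG_1|}\sum_{\xi,\xi'\in\hcG_2}\sqrt{\lambda^{(1)}_{\eta\hat\phi(\xi)}\lambda^{(1)}_{\eta\hat\phi(\xi')}}\,\xi\xi'^{-1}(h)\,\ketbra{\eta\hat\phi(\xi)}{\eta\hat\phi(\xi')}.
\end{align*}
This expression no longer depends on the choice of $g_0$. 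We then define the isometry (indeed a unitary) $V\colon\ket{\eta\hat\phi(\xi)}\mapsto\ket{\eta}\otimes\ket{\xi}$, where the herald register has basis $\{\ket{\eta}\}_{\eta\in\cT}$ and the second register carries the character basis of $\hcG_2$. Conjugating by $V$ turns $\rho^\phi_h$ into $\sum_{\eta}\ketbra{\eta}{\eta}\otimes B_\eta(h)$.

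Finally, we identify each block with a canonical $\cG_2$-covariant PSC. Taking the trace of $B_\eta(h)$ gives $\frac{1}{|\cG_1|}\sum_{\xi}\lambda^{(1)}_{\eta\hat\phi(\xi)}=p_\eta$, which is independent of $h$. Comparing $B_\eta(h)/p_\eta$ with $\ketbra{\psi^\eta_h}{\psi^\eta_h}$, where $\ket{\psi^\eta_h}=\frac{1}{\sqrt{|\cG_2|}}\sum_\xi\sqrt{\lambda^\eta_\xi}\xi(h)\ket{\xi}$, forces $\lambda^\eta_\xi/|\cG_2|=\lambda^{(1)}_{\eta\hat\phi(\xi)}/(|\cG_1|p_\eta)$. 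This is exactly the claimed formula. Summing over $\eta$ gives $\sum_\eta p_\eta=1$ by Lemma~\ref{lem: eigenlist trace relation}, and the lists $\blambda^\eta$ sum to $|\cG_2|$, so they are valid eigen lists. If $p_\eta=0$ for some $\eta$, that block vanishes and is simply dropped from the ensemble.

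The main obstacle is the restricted orthogonality step and the independence from $g_0$. One has to verify carefully that the characters trivial on $\ker\phi$ are exactly $\image\hat\phi$; the paper states this, and it also follows from the counting $|\image\hat\phi|=|\cG_2|=|\cG_1|/|\ker\phi|$. One also needs the phase $\chi\chi'^{-1}(g_0)$ to reduce to $\xi\xi'^{-1}(h)$, which holds only because both characters lie in the same coset, so that $\eta$ cancels.
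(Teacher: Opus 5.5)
Your proof is correct and follows essentially the same route as the paper. Both use the coset decomposition, the splitting isometry $V$, character orthogonality restricted to $\ker\phi$ to get block diagonality, and the same trace and matching step for $p_\eta$ and $\lambda^\eta_\xi$; the only difference is that you average over the fiber before applying $V$, whereas the paper applies $V$ to $\ket{\psia_g}$ first. One small remark: the counting you cite only shows $|\image\hat\phi|=|\cG_1|/|\ker\phi|$ and does not by itself give the reverse inclusion, so the cleanest justification is that a character trivial on $\ker\phi$ factors through $\cG_1/\ker\phi\cong\cG_2$.
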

\begin{proof}
    Recall the canonical state for channel $W_1$ is
    \begin{align*}
        \ket{\psia_g}\coloneqq\frac{1}{\sqrt{|\cG_1|}}\sum_{\chi\in \hcG_1}\sqrt{\lambdaa_{\chi}}\chi(g)\ket{\chi}.
    \end{align*}
Then, from the disjoint union in \eqref{eq:group disjoint union}, for every character $\chi\in \hcG_1$, there exists a unique pair $(\eta, \xi)$ with $\eta\in \cT$ and $\xi \in \hcG_2$ such that 
\begin{align*}
    \chi=\eta\hat{\phi}(\xi),
\end{align*}
where we use the injectivity of $\hat{\phi}$.
Next, we define the operator $V$, which acts on the state $\ket{\chi}$, using
\begin{align*}
    V\ket{\eta\hat{\phi}(\xi)}=\ket{\xi}\otimes \ket{\eta}.
\end{align*}
Equivalently, we can write 
\begin{align*}
    V=\sum_{\eta\in \cT}\sum_{\xi\in \hcG_2}\ket{\xi}\otimes\ket{\eta}\bra{\eta\hat{\phi}(\xi)}.
\end{align*}
Thus, we compute 
\begin{align*}
    V^{\dagger}V & =\sum_{\eta\in \cT}\sum_{\xi\in \hcG_2}\sum_{\eta'\in \cT}\sum_{\xi'\in \hcG_2}\ket{\eta\hat{\phi}(\xi)}\bra{\xi}\otimes \bra{\eta}\ket{\xi'}\otimes\ket{\eta'}\bra{\eta'\hat{\phi}(\xi')}\\
    & = \sum_{\eta\in \cT}\sum_{\xi\in \hcG_2}\ketbra{\eta\hat{\phi}(\xi)}{\eta\hat{\phi}(\xi)}\\
    & =\mI.
\end{align*}
This implies  that $V$ is an isometry. Applying $V$ on the state $\ket{\psia_g}$ with $\phi(g)=h$, we get 
\begin{align*}
    V\ket{\psia_g} & =\frac{1}{\sqrt{|\cG_1|}}\sum_{\eta\in \cT}\sum_{\xi\in \hcG_2}\sqrt{\lambdaa_{\eta\hat{\phi}(\xi)}}\eta\hat{\phi}(\xi)(g)\ket{\xi}\otimes \ket{\eta}\\
    & = \frac{1}{\sqrt{|\cG_1|}}\sum_{\eta\in \cT}\sum_{\xi\in \hcG_2}\sqrt{\lambdaa_{\eta\hat{\phi}(\xi)}}\eta(g)\hat{\phi}(\xi)(g)\ket{\xi}\otimes \ket{\eta}\\
    & = \frac{1}{\sqrt{|\cG_1|}}\sum_{\eta\in \cT}\sum_{\xi\in \hcG_2}\sqrt{\lambdaa_{\eta\hat{\phi}(\xi)}}\eta(g)\xi(\phi(g))\ket{\xi}\otimes \ket{\eta}\\
    & = \frac{1}{\sqrt{|\cG_1|}}\sum_{\eta\in \cT}\sum_{\xi\in \hcG_2}\sqrt{\lambdaa_{\eta\hat{\phi}(\xi)}}\eta(g)\xi(h)\ket{\xi}\otimes \ket{\eta}.
\end{align*}
On the other hand, for $h\in \cG_2$, $\exists$ $g_0\in \cG_1$ such that $g_0\ker\phi\in \cG/\ker\phi$ and $\phi(g)=h$ $\forall g\in g_0\ker\phi$. Hence, we can rewrite the state $\rho_h$ as 
\begin{align*}
    \rho_h=\frac{1}{|\ker\phi|}\sum_{k\in \ker\phi}\ketbra{\psia_{g_0k}}{\psia_{g_0k}}.
\end{align*}
Applying isometry $V$ to state $\rho_h$, we get 
\begin{align*}
    V\rho_hV^{\dagger} & = \frac{1}{|\ker\phi|}\sum_{k\in \ker\phi}V\ketbra{\psia_{g_0k}}{\psia_{g_0k}}V^{\dagger}\\
    & = \frac{1}{|\ker\phi|}\sum_{k\in \ker\phi}\frac{1}{|\cG_1|}\sum_{\eta,\eta'\in \cT,\xi,\xi'\in \hcG_2}\sqrt{\lambdaa_{\eta\hat{\phi}(\xi)}\lambdaa_{\eta'\hat{\phi}(\xi')}}\eta\eta'^{-1}(g_0)\eta\eta'^{-1}(k)\xi\xi'^{-1}(h)\ketbra{\xi}{\xi'}\otimes\ketbra{\eta}{\eta'}.
\end{align*}
Using the fact that $\sum_{k\in \ker\phi}\eta(\eta')^{-1}(k)=|\ker\phi|\delta_{\eta,\eta'}$, $\forall \eta,\eta'\in \cT$, we get 
\begin{align*}
     V\rho_hV^{\dagger} & =\frac{1}{|\cG_1|}\sum_{\eta\in \cT,\xi,\xi'\in \hcG_2}\sqrt{\lambdaa_{\eta\hat{\phi}(\xi)}\lambdaa_{\eta\hat{\phi}(\xi')}}\xi\xi'^{-1}(h)\ketbra{\xi}{\xi'}\otimes\ketbra{\eta}{\eta}.
\end{align*}
Hence, we compute the probability $p_{\eta}$ as follows 
\begin{align*}
    p_{\eta} & = \Tr\left(\mI\otimes \bra{\eta}V\rho_hV^{\dagger}\mI\otimes \ket{\eta}\right)\\
    & = \Tr\left(\frac{1}{|\cG_1|}\sum_{\xi,\xi'\in \hcG_2}\sqrt{\lambdaa_{\eta\hat{\phi}(\xi)}\lambdaa_{\eta\hat{\phi}(\xi')}}\xi\xi'^{-1}(h)\ketbra{\xi}{\xi'}\right)\\
    & = \frac{1}{|\cG_1|}\sum_{\xi\in \hcG_2}\lambdaa_{\eta\hat{\phi}(\xi)}.
\end{align*}
Define $\cG_2$-covariant state $\ket{\psi_{h}^{\eta}}$ as below 
\begin{align*}
    \ket{\psi_{h}^{\eta}}\coloneqq \frac{1}{\sqrt{|\cG_2|}}\sum_{\xi\in \hcG_2}\sqrt{\lambda^{\eta}_{\xi}}\xi(h)\ket{\xi},
\end{align*}
where $\lambda^{\eta}_{\xi}= \frac{|\cG_2|}{|\cG_1|p_{\eta}}\lambda^{(1)}_{\eta\hat{\phi}(\xi)}$. Then, we immediately get 
\begin{align*}
    \ketbra{\psi_{h}^{\eta}}{\psi_{h}^{\eta}} & = \frac{1}{|\cG_1|p_{\eta}}\sum_{\xi,\xi'\in \hcG_2}\sqrt{\lambdaa_{\eta\hat{\phi}(\xi)}\lambdaa_{\eta\hat{\phi}(\xi')}}\xi\xi'^{-1}(h)\ketbra{\xi}{\xi'}.
\end{align*}
Thus, for all $h\in \cG_2$, we have 
% $V\rho_hV^{\dagger}$ satisfies $\forall h\in\cG_2$
\begin{align*}
    V\rho_hV^{\dagger}= \sum_{\eta\in\cT}p_{\eta}\ketbra{\psi_{h}^{\eta}}{\psi_{h}^{\eta}}\otimes \ketbra{\eta}{\eta}.
\end{align*}
\end{proof}

Therefore, the homomorphism-factor update maps an incoming $\cG_1$-covariant PSC to a finite heralded mixture of $\cG_2$-covariant PSCs, with herald register indexed by the coset representatives $\eta\in \cT$. The outgoing message is thus completely described by the ensemble $\{p_{\eta},\blambda^{\eta}\}_{\eta\in \cT}$. 

The isometry $V$ also has a direct operational interpretation. It coherently separates each input character $\chi\in \widehat{\cG}_1$ into its coset representative $\eta\in \cT$ and its pulled-back output character $\hat{\phi}(\xi)\in \image \hat{\phi}$. After averaging over each fiber of $\phi$, the output becomes block diagonal in the $\eta$ register. Thus, $\eta$ serves as the classical herald, and each block is a $\cG_2$-covariant PSC.

% The Lemma~\ref{lem:surjective homomorphism}, even generalized to homomorphism $\phi$ which is not surjective. In this case, we consider the subgroup $\image\phi\le \cG_2$ and analyze output states $\rho_h$ $\forall h\in \image\phi$ while the overall derivation remain the same. Since $\image\phi$ is a subgroup, this also implies that it is sufficient to consider surjective homomorphisms without loss of the generality.

Lemma~\ref{lem:surjective homomorphism} extends immediately to an arbitrary homomorphism $\phi$ that is not necessarily surjective by replacing $\cG_2$ with the subgroup $\image \phi\le \cG_2$. Since $\image \phi$ is again a finite abelian group, it is sufficient to analyze the surjective case in what follows. In \ref{appendix:homomorphism factor example}, we provide a few examples for homomorphism factor update rules for several groups and homomorphisms.

Next we state a special case of Lemma~\ref{lem:surjective homomorphism}. In Lemma~\ref{lem:surjective homomorphism special}, we identify a support condition on the eigen list $\{\lambda_{\chi}\}_{\chi\in \hcG_1}$ under which a $\cG_1$-covariant input PSC induces a single PSC through the homomorphism $\phi$ rather than a heralded mixture.

\begin{lem}\label{lem:surjective homomorphism special}
    Consider $\cG_1$-covariant PSC $W_{1}$ with Gram matrix $\Gamma_{W_1}$ characterized by the eigen list $\blambda_{1}=\{\lambdaa_{\chi}\}_{\chi\in \hcG_1}$. Let  $\phi:\cG_1\rightarrow \cG_2$ be a surjective homomorphism. Also assume that  $\lambdaa_{\chi}=0$, $\forall \chi\notin \image(\hat{\phi})$. Then the induced channel $W_2$ defined by 
    \begin{align*}
        \ket{\psib_{h}}\coloneqq \ket{\psia_{g}},\quad \text{$\forall g\in \cG_1$ such that $\phi(g)=h$}
    \end{align*}
    is a $\cG_2$-covariant PSC. Moreover the channel $W_2$ is characterized by Gram matrix $\Gamma_{W_2}$ and eigen list $\blambda_{2}=\{\lambdab_{\xi}\}_{\xi\in \hcG_2}$, where 
    \begin{align*}
        \lambdab_{\xi}=\frac{|\cG_2|}{|\cG_1|}\lambdaa_{\hat{\phi}(\xi)}
    \end{align*}
    for all $\xi\in \hcG_2$ and $\hat{\phi}:\hcG_2\rightarrow\hcG_1$ is the dual map of $\phi$.
\end{lem}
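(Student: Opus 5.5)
The plan has three parts: show the induced states are well defined, then compute their Gram matrix in the $\cG_2$ character basis, then invoke Lemma~\ref{lem:gram diagonalization} and the eigen list characterization.

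\textbf{Well-definedness.} First I check that $\ket{\psib_h}$ does not depend on the choice of preimage. Take $g,g'$ with $\phi(g)=\phi(g')=h$, so $g'=gk$ with $k\in\ker\phi$. In the canonical form
\begin{align*}
\ket{\psia_{g}}=\frac{1}{\sqrt{|\cG_1|}}\sum_{\chi}\sqrt{\lambdaa_\chi}\chi(g)\ket{\chi},
\end{align*}
only characters $\chi\in\image\hat{\phi}$ contribute, by the support assumption. Every such $\chi$ is trivial on $\ker\phi$, so $\chi(gk)=\chi(g)$. Hence $\ket{\psia_{gk}}=\ket{\psia_g}$ and $W_2$ is well defined.

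Alternatively, this is the degenerate case of Lemma~\ref{lem:surjective homomorphism}. There one chooses the coset representative of the trivial coset $\image\hat\phi$ to be $\chitriv$. Then $p_{\chitriv}=1$ and all other $p_\eta=0$, and the mixture $\rho_h$ collapses to the single pure state already computed there. I would present the direct argument and mention this consistency.

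\textbf{Gram matrix and covariance.} Next I compute the overlaps, writing $\chi=\hat\phi(\xi)$ for the surviving terms. By injectivity of $\hat\phi$ this is a bijection between $\hcG_2$ and $\image\hat\phi$. Using $\hat\phi(\xi)(g)=\xi(\phi(g))$, we get
\begin{align*}
\braket{\psib_h}{\psib_{h'}}=\frac{1}{|\cG_1|}\sum_{\xi\in\hcG_2}\lambdaa_{\hat\phi(\xi)}\,\xi(h^{-1}h').
\end{align*}
This depends only on $h^{-1}h'$, so the Gram matrix is $\cG_2$-circulant. Rewriting the prefactor as $\frac{1}{|\cG_2|}\cdot\frac{|\cG_2|}{|\cG_1|}$ puts it in the form
\begin{align*}
\frac{1}{|\cG_2|}\sum_{\xi}\lambdab_\xi\,\xi(h^{-1}h'), \qquad \lambdab_\xi=\frac{|\cG_2|}{|\cG_1|}\lambdaa_{\hat\phi(\xi)}.
\end{align*}
Lemma~\ref{lem:gram diagonalization} and the eigen list characterization lemma then identify $\blambda_2$ as the eigen list, up to isometry.

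For explicit $\cG_2$-covariance, I would define $U_{h'}$ on $\mathrm{span}\{\ket{\hat\phi(\xi)}\}$ by $\ket{\hat\phi(\xi)}\mapsto\xi(h')\ket{\hat\phi(\xi)}$ and extend it by the identity. This is a unitary representation of $\cG_2$, and it satisfies $U_{h'}\ket{\psib_h}=\ket{\psib_{h'h}}$. As a sanity check, $\sum_\xi\lambdab_\xi=\frac{|\cG_2|}{|\cG_1|}\sum_\chi\lambdaa_\chi=|\cG_2|$, consistent with Lemma~\ref{lem: eigenlist trace relation}.

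\textbf{Main obstacle.} The only real subtlety is the well-definedness step, since it needs the support condition together with the characterization of $\image\hat\phi$ as the characters trivial on $\ker\phi$. Everything after that is a reindexing via the bijection $\hat\phi$.
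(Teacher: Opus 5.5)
Your proposal is correct. The well-definedness step is exactly the paper's. Only the easy inclusion is needed there, namely that every $\hat{\phi}(\xi)$ is trivial on $\ker\phi$ because $\hat{\phi}(\xi)(k)=\xi(e)=1$. So that step is even less of an obstacle than you suggest.

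After that, your route differs from the paper's. The paper never computes the Gram matrix. It writes down the partial isometry $V_{\phi}=\sum_{\xi\in\hcG_2}\ketbra{\xi}{\hat{\phi}(\xi)}$ and notes that $V_\phi^\dagger V_\phi$ is the projector onto $\mathrm{span}\{\ket{\chi}:\chi\in\image\hat{\phi}\}$. By the support condition, $V_\phi$ is therefore isometric on the span of the states. It then checks directly that $V_\phi\ket{\psia_g}$ equals the canonical $\cG_2$-state $\ket{\psib_{\phi(g)}}$ built from the eigen list $\frac{|\cG_2|}{|\cG_1|}\lambdaa_{\hat{\phi}(\xi)}$. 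Covariance and the eigen list are then read off from that canonical form.

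You instead compute the overlaps and observe that they depend only on $h^{-1}h'$. You identify the eigen list via Lemma~\ref{lem:gram diagonalization} and character orthogonality, then invoke the fact that the Gram matrix determines a PSC up to isometry. Separately, you exhibit a genuine unitary representation of $\cG_2$ on the original $|\cG_1|$-dimensional space under which the given states are covariant. The core reindexing $\chi=\hat{\phi}(\xi)$ is the same in both arguments.

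The two routes buy different things. The paper's gives an explicit compression from $\mathbb{C}^{|\cG_1|}$ to $\mathbb{C}^{|\cG_2|}$ that does not depend on the eigen list, which is what one would implement in the trellis decoder. Yours proves $\cG_2$-covariance of $W_2$ literally as defined, without passing to an isometric copy, and uses the standard ``equal Gram matrices imply isometric equivalence'' fact instead of a constructed map.

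Your consistency check against Lemma~\ref{lem:surjective homomorphism} is also valid. Choosing $\chitriv$ as the representative of the trivial coset, rather than another element of $\image\hat{\phi}$, is exactly what produces the unshifted eigen list stated here.
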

\begin{proof}
    % Recall the canonical state for channel $W_1$
    % \begin{align*}
    %     \ket{\psia_a}\coloneqq\frac{1}{\sqrt{|\cG_1|}}\sum_{\chi\in \hcG_1}\sqrt{\lambdaa_{\chi}}\chi(a)\ket{\chi}
    % \end{align*}
    First, observe that since $\lambdaa_{\chi}=0$  $\forall \chi\notin \image(\hat{\phi})$, we get 
    \begin{align*}
        \ket{\psia_{kg}}=\ket{\psia_g}, \forall k\in \ker{\phi}.
    \end{align*}
    Thus, the state $\ket{\psib_h}$ is well defined.
    Define the operator $V_{\phi}$ based on the map $\phi$ as follows,
    \begin{align*}
        V_{\phi}=\sum_{\xi\in \hcG_2}\ketbra{\xi}{\hat{\phi}(\xi)}.
    \end{align*}
    Computing $V_{\phi}^{\dagger}V_{\phi}$ we get 
    \begin{align*}
        V_{\phi}^{\dagger}V_{\phi}=\sum_{\xi\in \hcG_2}\ketbra{\hat{\phi}(\xi)}{\hat{\phi}(\xi)}=\Pi_{\image(\hat{\phi})},
    \end{align*}
    where $\Pi_{\image(\hat{\phi})}$ is the projection on the image of $\hat{\phi}$. From the support assumption ($\lambdaa_{\chi}=0$  $\forall \chi\notin \image(\hat{\phi})$), we get 
    \begin{align*}
\Pi_{\image(\hat{\phi})}\ket{\psia_g}=\ket{\psia_g}, \forall g\in \cG_1.
    \end{align*}
Thus, $V_{\phi}$ is a valid isometry for the subspace spanned by the states $\bigg\{\ket{\psia_g}\bigg\}_{g\in \cG_1}$.
    Applying $V_{\phi}$ on the state $\ket{\psia_g}$, we get 
    \begin{align*}
        V_{\phi}\ket{\psia_g} & = \frac{1}{\sqrt{|\cG_1|}}\sum_{\chi\in \hcG_1}\sqrt{\lambdaa_{\chi}}\chi(g)\sum_{\xi\in\hcG_2}\ketbra{\xi}{\hat{\phi}\xi}\ket{\chi}\\
        & \stackrel{(1)}{=} \frac{1}{\sqrt{|\cG_1|}}\sum_{\xi\in \hcG_2}\sqrt{\lambdaa_{\hat{\phi}\xi}}\hat{\phi}\xi(g)\ket{\xi}\\
        & = \frac{1}{\sqrt{|\cG_1|}}\sum_{\xi\in \hcG_2}\sqrt{\lambdaa_{\hat{\phi}\xi}}\xi(\phi(g))\ket{\xi},
    \end{align*}
    where $(1)$ follows from the fact that $\lambdaa_{\chi}=0$ $\forall\chi\notin \image(\hat{\phi})$, so the only surviving terms correspond to $\chi=\hat{\phi}(\xi)$.
Finally, setting $\lambdab_{\xi}=\frac{|\cG_2|}{|\cG_1|}\lambdaa_{\hat{\phi}\xi}$ and defining the canonical state for $W_2$ to be
\begin{align*}
    \ket{\psib_h}=\frac{1}{\sqrt{|\cG_2|}}\sum_{\xi\in \hcG_2}\sqrt{\lambdab_{\xi}}\xi(h)\ket{\xi}
\end{align*}
for all $h\in\cG_2$, we get 
\begin{align*}
     V_{\phi}\ket{\psia_g} & =\frac{1}{\sqrt{|\cG_1|}}\sum_{\xi\in \hcG_2}\sqrt{\frac{|\cG_1|}{|\cG_2|}\lambdab_{\xi}}\xi(\phi(g))\ket{\xi}\\
     & =\ket{\psib_{\phi(g)}}.\qedhere
\end{align*}
\end{proof}
In this special support regime, the herald becomes trivial and the homomorphism update reduces to a single $\cG_2$-covariant PSC. Thus, Lemma~\ref{lem:surjective homomorphism special} identifies precisely when the general heralded decomposition of Lemma~\ref{lem:surjective homomorphism} collapses to one PSC.

\subsection{Quantum Message Passing for Marginalization Factor}

We next consider marginalization over one component of a product alphabet. In the character domain, averaging over the marginalized variable enforces equality of the corresponding dual indices by character orthogonality. As a result, the discarded dual coordinate becomes a classical herald, while the retained coordinate labels the outgoing PSC on the remaining group.
\begin{lem}\label{lem:marginalization factor}
  Let $\cG_1$ and $\cG_2$ be abelian groups such that $\cU=\cG_1\times \cG_2$.   Consider a $\cU$-covariant PSC $W\colon (g_1,g_2)\rightarrow \ketbra{\psi_{(g_1,g_2)}}{\psi_{(g_1,g_2)}}$ for $(g_1,g_2)\in \cU$ characterized via the eigen list $\blambda=\{\lambda_{(\chi,\eta)}\}_{(\chi,\eta)\in \hcG_1\times \hcG_2}$. Let $\phi\colon \cU\rightarrow \cG_1$ be a marginalization map such that $\phi(g_1,g_2)=g_1$ $\forall (g_1,g_2)\in \cU$. Then the induced channel $W_{\phi}$ is isometrically equivalent to a heralded mixture of $\cG_1$-covariant PSCs characterized via the ensemble $\{p_{\eta},\blambda^{\eta}\}_{\eta\in \hcG_2}$ where the probability $p_{\eta}$ is
  \begin{align*}
      p_{\eta}=\frac{1}{|\cU|}\sum_{\chi\in \hcG_1}\lambda_{(\chi,\eta)}
  \end{align*}
  and the elements of eigen list $\blambda^{\eta}=\{\lambda^{\eta}_{\chi}\}_{\chi\in \hcG_1}$ satisfy the following relation 
  \begin{align*}
      \lambda^{\eta}_{\chi}=\frac{1}{|\cG_2|p_{\eta}}\lambda_{(\chi,\eta)}.
  \end{align*}
\end{lem}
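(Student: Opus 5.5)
The natural route is to treat this as a special case of Lemma~\ref{lem:surjective homomorphism}. The marginalization map $\phi(g_1,g_2)=g_1$ is a surjective homomorphism $\cU\to\cG_1$ with $\ker\phi=\{e\}\times\cG_2$. I will first identify the dual map. Under the identification $\widehat{\cU}=\hcG_1\times\hcG_2$, with $(\chi,\eta)(g_1,g_2)=\chi(g_1)\eta(g_2)$, we get $\hat{\phi}(\chi)=(\chi,\chitriv)$. Hence $\image\hat{\phi}=\hcG_1\times\{\chitriv\}$, which is exactly the set of characters trivial on $\ker\phi$. This makes $\cT=\{(\chitriv,\eta):\eta\in\hcG_2\}$ a canonical set of coset representatives. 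Every $(\chi,\eta)$ decomposes uniquely as $(\chitriv,\eta)\hat{\phi}(\chi)$, so the herald is indexed by $\eta\in\hcG_2$.

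Substituting into Lemma~\ref{lem:surjective homomorphism} gives
\[
p_{\eta}=\frac{1}{|\cU|}\sum_{\chi\in\hcG_1}\lambda_{(\chi,\eta)},
\qquad
\lambda^{\eta}_{\chi}=\frac{|\cG_1|}{|\cU|p_\eta}\lambda_{(\chi,\eta)}=\frac{1}{|\cG_2|p_\eta}\lambda_{(\chi,\eta)},
\]
using $|\cU|=|\cG_1||\cG_2|$. Both expressions match the claimed formulas. One should also check that the normalization of the induced channel agrees: the marginalization factor averages with weight $1/|\cG_2|$, which equals $1/|\ker\phi|$ as in the homomorphism factor definition.

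For a self-contained derivation, I would instead write the canonical state as
\[
\ket{\psi_{(g_1,g_2)}}=\frac{1}{\sqrt{|\cU|}}\sum_{\chi,\eta}\sqrt{\lambda_{(\chi,\eta)}}\chi(g_1)\eta(g_2)\ket{\chi}\otimes\ket{\eta}
\]
and average $\ketbra{\psi_{(g_1,g_2)}}{\psi_{(g_1,g_2)}}$ over $g_2\in\cG_2$. Character orthogonality $\sum_{g_2}\eta\eta'^{-1}(g_2)=|\cG_2|\delta_{\eta,\eta'}$ makes the result block diagonal in the $\eta$ register, with no isometry needed beyond the tensor factorization. Each block is $\frac{1}{|\cU|}\sum_{\chi,\chi'}\sqrt{\lambda_{(\chi,\eta)}\lambda_{(\chi',\eta)}}\,\chi\chi'^{-1}(g_1)\ketbra{\chi}{\chi'}$. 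Taking its trace gives $p_\eta$, and dividing by $p_\eta$ gives $\ketbra{\psi^\eta_{g_1}}{\psi^\eta_{g_1}}$ with the stated eigen list.

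The only point requiring care is the identification $\widehat{\cG_1\times\cG_2}\cong\hcG_1\times\hcG_2$ together with the consistency of the character-state basis $\ket{(\chi,\eta)}=\ket{\chi}\otimes\ket{\eta}$. These follow from the product form of characters. Everything else is routine, including the degenerate case $p_\eta=0$, where that block is simply dropped.
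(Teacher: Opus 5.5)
Your proposal is correct. The self-contained derivation in your third paragraph is essentially the paper's proof. The paper writes the canonical state on $\widehat{\cU}$ and averages over $g_2$. It then uses $\frac{1}{|\cG_2|}\sum_{g_2}\eta\eta'^{-1}(g_2)=\delta_{\eta,\eta'}$ to make the state block diagonal, and applies $U_{\text{split}}\colon\ket{(\chi,\eta)}\mapsto\ket{\chi}\ket{\eta}$, which is your tensor factorization. Finally it reads off $p_\eta$ as the block trace and $\blambda^\eta$ by normalizing.

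Your primary route, specializing Lemma~\ref{lem:surjective homomorphism}, is not how the paper argues this lemma, but it is valid. That lemma is proved earlier, so there is no circularity. Your identifications $\hat{\phi}(\chi)=(\chi,\chitriv)$, $\image\hat{\phi}=\hcG_1\times\{\chitriv\}$ and $\cT=\{(\chitriv,\eta)\}_{\eta\in\hcG_2}$ are right. Substituting with $|\cU|=|\cG_1||\cG_2|$ gives exactly the stated formulas, and the $1/|\ker\phi|=1/|\cG_2|$ normalization matches.

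With this transversal, the isometry $V$ of that lemma is $V\ket{(\chi,\eta)}=\ket{\chi}\otimes\ket{(\chitriv,\eta)}$. That is $U_{\text{split}}$ after relabeling the herald, so the two routes perform the same computation at different levels of abstraction.

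What each route buys is as follows. The specialization requires no new calculation. It also explains why the herald alphabet is canonically the group $\hcG_2$: the transversal is the complementary subgroup $\{\chitriv\}\times\hcG_2$, which a general surjection need not admit. The direct computation avoids the coset machinery and makes plain that the only isometry involved is the tensor factorization of the character basis.
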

\begin{proof}
For a group element $(g_1,g_2)\in \cU$, we can write the canonical form for state $\ket{\psi_{(g_1,g_2)}}$ as follows 
\begin{align*}
     \ket{\psi_{(g_1,g_2)}}=\frac{1}{\sqrt{|\cU|}}\sum_{\chi\in \hcG_{1},\eta\in \hcG_2}\sqrt{\lambda_{(\chi,\eta)}}\chi(g_1)\eta(g_2)\ket{(\chi,\eta)}.
\end{align*}
The marginalization map $\phi_{\text{marginal}}$ induces the CQ channel $W_{\phi_{\text{marginal}}}\colon g_1\rightarrow \rho_{g_1}$ $\forall g_1\in \cG_1$ such that 
\begin{align*}
     \rho_{g_1}= \frac{1}{|\cG_2|}\sum_{g_2\in \cG_2}\ketbra{\psi_{(g_1,g_2)}}{\psi_{(g_1,g_2)}}.
\end{align*}
Expanding the state $\ket{\psi_{(g_1,g_2)}}$, we get 
\begin{align*}
    \rho_{g_1} & =\frac{1}{|\cG_2|}\sum_{g_2\in \cG_2}\frac{1}{|\cU|}\sum_{\chi\in \hcG_{1},\eta\in \hcG_2}\sum_{\chi'\in \hcG_{1},\eta'\in \hcG_2}\sqrt{\lambda_{(\chi,\eta)}}\sqrt{\lambda_{(\chi',\eta')}}\chi(g_1)\chi'^{-1}(g_1)\eta(g_2)\eta'^{-1}(g_2)\ketbra{(\chi,\eta)}{(\chi',\eta')}.
\end{align*}
Using the character orthogonality, we get 
\begin{align*}
    \frac{1}{|\cG_2|}\sum_{g_2\in \cG_2}\eta(g_2)\eta'^{-1}(g_2)=\delta_{\eta,\eta'}.
\end{align*}
Thus, we can rewrite the state $\rho_{g_1}$ for $g_1\in \cG_{1}$ as 
\begin{align*}
    \rho_{g_1}= \frac{1}{|\cU|}\sum_{\eta\in \hcG_2}\sum_{\chi,\chi'\in \hcG_{1}}\sqrt{\lambda_{(\chi,\eta)}\lambda_{(\chi',\eta)}}\chi(g_1)\chi'^{-1}(g_1)\ketbra{(\chi,\eta)}{(\chi',\eta)}.
\end{align*}
Next, consider the unitary $U_{\text{split}}$ defined by
\begin{align}
    U_{\text{split}}\ket{(\chi,\eta)}=\ket{\chi}\ket{\eta}
\end{align}
for all $(\chi,\eta)\in \hcG_1\times \hcG_2$. Then applying $U_{\text{split}}$ on the state $\rho_{g_1}$ we get 
\begin{align*}
   \rho'_{g_1}\coloneqq  U_{\text{split}}\rho_{g_1} U_{\text{split}}^{\dagger}=\frac{1}{|\cU|}\sum_{\eta\in \hcG_2}\ketbra{\eta}{\eta}\otimes \left(\sum_{\chi,\chi'\in \hcG_{1}}\sqrt{\lambda_{(\chi,\eta)}\lambda_{(\chi',\eta)}}\chi(g_1)\chi'^{-1}(g_1)\ketbra{\chi}{\chi'}\right).
\end{align*}
Hence, we get a heralded mixture of PSCs which is characterized via ensemble $\{p_{\eta},\blambda^{\eta}\}_{\eta\in \hcG_2}$. The probability $p_{\eta}$ that corresponds to the eigen list $\blambda^{\eta}$ is computed as 
\begin{align}
    p_{\eta} & =\Tr\left(\bra{\eta}\otimes \mI \rho'_{g_1}\ket{\eta}\otimes \mI\right)\\
    & = \Tr\left(\frac{1}{|\cU|}\sum_{\chi,\chi'\in \hcG_{1}}\sqrt{\lambda_{(\chi,\eta)}\lambda_{(\chi',\eta)}}\chi(g_1)\chi'^{-1}(g_1)\ketbra{\chi}{\chi'}\right)\\
    & = \frac{1}{|\cU|}\sum_{\chi\in \hcG_{1}}\lambda_{(\chi,\eta)}.
\end{align}
Similarly, the post measurement output PSC  after marginalization is computed as 
\begin{align}
  &  \frac{\bra{\eta}\otimes \mI \rho'_{g_1}\ket{\eta}\otimes \mI}{\Tr\left(\bra{\eta}\otimes \mI \rho'_{g_1}\ket{\eta}\otimes \mI\right)}\\
  & = \frac{\bra{\eta}\otimes \mI \rho'_{g_1}\ket{\eta}\otimes \mI}{p_{\eta}}\\
  & = \frac{1}{|\cU|p_{\eta}}\sum_{\chi,\chi'\in \hcG_{1}}\sqrt{\lambda_{(\chi,\eta)}\lambda_{(\chi',\eta)}}\chi(g_1)\chi'^{-1}(g_1)\ketbra{\chi}{\chi'}.
\end{align}
Defining the state $\ket{\psi^{\eta}_{g_1}}$ for $g_1\in \cG_{1}$ with
\begin{align*}
    \ket{\psi_{g_1}^{\eta}}\coloneqq \frac{1}{\sqrt{|\cG_1|}}\sum_{\chi\in \hcG_{1}}\sqrt{\lambda^{\eta}_{\chi}}\chi(g_1)\ket{\chi}
\end{align*}
and setting $\lambda^{\eta}_{\chi}=\frac{1}{|\cG_2|p_{\eta}}\lambda_{(\chi,\eta)}$ we get 
\begin{align}
    \ketbra{\psi_{g_1}^{\eta}}{\psi_{g_1}^{\eta}} & = \frac{1}{|\cG_1|}\sum_{\chi,\chi'\in \hcG_{1}}\sqrt{\lambda^{\eta}_{\chi}\lambda^{\eta}_{\chi'}}\chi(g_1)\chi'^{-1}(g_1)\ketbra{\chi}{\chi'}\\
    & = \frac{1}{|\cU|p_{\eta}}\sum_{\chi,\chi'\in \hcG_{1}}\sqrt{\lambda_{(\chi,\eta)}\lambda_{(\chi',\eta)}}\chi(g_1)\chi'^{-1}(g_1)\ketbra{\chi}{\chi'}.
\end{align}
Hence, the  elements of the eigen list $\blambda^{\eta}=\{\lambda^{\eta}_{\chi}\}_{\chi\in \hcG_{1}}$ satisfy the following relation
\begin{align*}
    \lambda^{\eta}_{\chi}=\frac{1}{|\cG_2|p_{\eta}}\lambda_{(\chi,\eta)}.
\end{align*}
\end{proof}
Therefore, the marginalization update maps a $\cU$-covariant PSC to a finite heralded mixture of $\cG_1$-covariant PSCs, with herald register indexed by $\eta\in \widehat{\cG}_2$. Equivalently, the outgoing message is completely described by the ensemble $\{p_{\eta},\blambda^{\eta}\}_{\eta\in \widehat{\cG}_2}$.

Operationally, the unitary $U_{\mathrm{split}}$ separates the retained and discarded dual coordinates. The average over $g_2\in \cG_2$ forces equality of the discarded dual index through character orthogonality, thereby making the second register classical. This is the origin of the heralded decomposition in the marginalization update.
\subsection{Quantum Message Passing for Automorphism Factor}

Finally, we consider automorphism factors. In this case, no genuine mixing occurs as an automorphism simply relabels the group elements, and in the character domain this becomes a permutation induced by the dual automorphism. Consequently, the output remains a single $\cG$-covariant PSC.

\begin{lem}\label{lem:automorphism factor}
Consider $\cG$-covariant PSC $W\colon g\rightarrow \ketbra{\psi_{g}}{\psi_{g}}$ for $g\in \cG$ with Gram matrix $\Gamma_W$ characterized by the eigen list $\blambda=\{\lambda_{\chi}\}_{\chi\in \hcG}$. Let  $\phi:\cG\rightarrow \cG$ be an automorphism on the abelian group $\cG$. Then the induced CQ channel $W_{\phi}$ is a $\cG-$covariant PSC and is characterized by the eigen list $\blambda^{\phi}=\{\lambda^{\phi}_{\chi}\}_{\chi\in \hcG}$ such that 
\begin{align*}
    \lambda^{\phi}_{\chi}=\lambda_{\hat{\phi}(\chi)}
\end{align*}
for all $\chi\in \hcG$, and $\hat{\phi}\colon\hcG\rightarrow \hcG$ is the dual map.
    
\end{lem}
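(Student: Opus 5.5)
The plan is to work entirely at the level of Gram matrices. By the lemma establishing that a $\cG$-covariant PSC is determined up to isometric equivalence by its eigen list, it suffices to show two things: that $W_{\phi}$ is a $\cG$-covariant PSC, and that its Gram matrix has eigen list $\lambda_{\hat{\phi}(\chi)}$.

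\emph{Step 1: $W_\phi$ is a $\cG$-covariant PSC.} Pureness is immediate, because $\rho^{\phi}_h=\ketbra{\psi_{\phi^{-1}(h)}}{\psi_{\phi^{-1}(h)}}$. Covariance was already shown in the discussion of the automorphism factor, which gives $\ket{\psi_{\phi^{-1}(h_1h_2)}}=U_{\phi^{-1}(h_1)}\ket{\psi_{\phi^{-1}(h_2)}}$. The representation here is $h\mapsto U_{\phi^{-1}(h)}$, which is again a unitary representation since $\phi^{-1}$ is a homomorphism.

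\emph{Step 2: the first row of the Gram matrix.} Write $\gamma^{\phi}_h$ for the first-row entries of $\Gamma_{W_\phi}$. Using $\phi^{-1}(h)^{-1}\phi^{-1}(h')=\phi^{-1}(h^{-1}h')$ and the circulant lemma for $W$, I get
\begin{align*}
(\Gamma_{W_\phi})_{h,h'}=\braket{\psi_{\phi^{-1}(h)}}{\psi_{\phi^{-1}(h')}}=\gamma_{\phi^{-1}(h^{-1}h')}.
\end{align*}
Hence $\gamma^{\phi}_h=\gamma_{\phi^{-1}(h)}$, and $\Gamma_{W_\phi}$ is $\cG$-circulant.

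\emph{Step 3: the eigen list.} Lemma~\ref{lem:gram diagonalization} then gives
\begin{align*}
\lambda^{\phi}_{\chi}=\sum_{h\in\cG}\gamma_{\phi^{-1}(h)}\chi^{-1}(h)=\sum_{g\in\cG}\gamma_g\,\chi^{-1}(\phi(g)),
\end{align*}
where the second equality reindexes by $g=\phi^{-1}(h)$, a bijection of $\cG$. Since $\chi^{-1}(\phi(g))=(\chi\circ\phi)^{-1}(g)=\hat{\phi}(\chi)^{-1}(g)$, the right-hand side equals $\lambda_{\hat{\phi}(\chi)}$, as claimed.

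\emph{Operational check.} I would remark that $\hat{\phi}$ is itself an automorphism of $\hcG$: it is injective because $\phi$ is surjective, and $\hcG$ is finite. This confirms that $\blambda^{\phi}$ is a permutation of $\blambda$, consistent with Lemma~\ref{lem: eigenlist trace relation}. It also yields an explicit unitary, the permutation $\ket{\chi}\mapsto\ket{\hat{\phi}^{-1}(\chi)}$. Substituting $\chi\to\hat{\phi}^{-1}(\chi)$ in the canonical states shows that this permutation sends $\ket{\psi_{\phi^{-1}(h)}}$ to the canonical state with eigen list $\blambda^{\phi}$.

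The only delicate point is bookkeeping of directions: $\phi$ versus $\phi^{-1}$, and $\hat{\phi}$ versus $\hat{\phi}^{-1}$, in the reindexing of Step 3 and in the permutation unitary. I would verify these directions carefully on a small example such as inversion on $\mathbb{Z}_3$.
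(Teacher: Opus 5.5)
Your proposal is correct and follows essentially the same route as the paper. Both identify the first row of the new Gram matrix as $\gamma^{\phi}_h=\gamma_{\phi^{-1}(h)}$, reindex the character sum by $g=\phi^{-1}(h)$, and use $\chi^{-1}(\phi(g))=\hat{\phi}(\chi)^{-1}(g)$ to obtain $\lambda_{\hat{\phi}(\chi)}$. Your explicit circulant check, the representation $h\mapsto U_{\phi^{-1}(h)}$, and the permutation unitary $\ket{\chi}\mapsto\ket{\hat{\phi}^{-1}(\chi)}$ are all valid additions that the paper leaves implicit.
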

\begin{proof}
    The induced CQ channel $W_{\phi}$ has output states of the form 
    \begin{align*}
        \rho_{g}^{\phi}= \ketbra{\psi_{\phi^{-1}(g)}}{\psi_{\phi^{-1}(g)}}
    \end{align*}
    for all $g\in \cG$. 
Let $\Gamma_{W_{\phi}}$ be the Gram matrix for the output PSC where elements of the first row satisfy $\gamma_{g}^{\phi}=\gamma_{\phi^{-1}(g)}$ for all  $g\in \cG$. Then, $\forall \chi\in \hcG$, we get 
\begin{align*}
    \lambda^{\phi}_{\chi} & =\sum_{g\in \cG}\gamma^{\phi}_{g}\chi^{-1}(g)\\
    & = \sum_{g\in \cG}\gamma_{\phi^{-1}(g)}\chi^{-1}(g)\\
    & = \sum_{g\in \cG}\gamma_{g}\chi^{-1}(\phi(g))\\
    & =\sum_{g\in \cG}\gamma_{g}(\hat{\phi}(\chi))^{-1}(g)\\
    & = \lambda_{\hat{\phi}(\chi)},
\end{align*}
which completes the proof.
\end{proof}
Thus, the automorphism factor preserves the class of single $\cG$-covariant PSCs and does not introduce a nontrivial herald register. Operationally, it acts only by a relabeling of the character basis through the dual automorphism $\hat{\phi}$.
\begin{table*}[ht]
\caption{Local quantum message passing updates over finite abelian groups.}
\label{tab:qmp-summary}
\centering
\scriptsize
\setlength{\tabcolsep}{3pt}
\renewcommand{\arraystretch}{1.15}
\begin{tabular}{|>{\raggedright\arraybackslash}p{2.45cm}|
                >{\raggedright\arraybackslash}p{3.15cm}|
                >{\raggedright\arraybackslash}p{2.20cm}|
                >{\raggedright\arraybackslash}p{4.75cm}|}
\hline
\textbf{Factor} & \textbf{Input} & \textbf{Output} & \textbf{Update of sufficient statistics} \\
\hline

Check
&
Two $\cG$-covariant PSCs with eigen lists
$\blambda_1=\{\lambdaa_{\chi}\}_{\chi\in\hcG}$,
$\blambda_2=\{\lambdab_{\chi}\}_{\chi\in\hcG}$
&
Heralded mixture indexed by $\hcG$
&
$\begin{aligned}[t]
p_{\chi}^{\cnop}
&= \frac{1}{|\cG|^2}\sum_{\chi'\in\hcG}
   \lambdaa_{\chi\chi'}\,\lambdab_{\chi'},\\
\lambda_{\chi'}^{\cnop,\chi}
&= \frac{1}{|\cG|\,p_{\chi}^{\cnop}}\,
   \lambdaa_{\chi\chi'}\,\lambdab_{\chi'}.
\end{aligned}$ \\
\hline

Equality
&
Two $\cG$-covariant PSCs with eigen lists
$\blambda_1,\blambda_2$
&
Single $\cG$-covariant PSC
&
$\begin{aligned}[t]
\lambda_{\chi}^{\vnop}
&= \frac{1}{|\cG|}
   \sum_{\chi'\in\hcG}
   \lambdaa_{\chi'}\,\lambdab_{\chi\chi'^{-1}}.
\end{aligned}$ \\
\hline

Homomorphism
&
$\cG_1$-covariant PSC with eigen list
$\blambda_1=\{\lambdaa_{\chi}\}_{\chi\in\F{\cG_1}}$;
surjective $\phi:\cG_1\to\cG_2$
&
Heralded mixture indexed by $\cT$
&
$\begin{aligned}[t]
p_{\eta}
&= \frac{1}{|\cG_1|}
   \sum_{\chi\in \eta\,\image\F{\phi}}
   \lambdaa_{\chi},\\
\lambda_{\xi}^{\eta}
&= \frac{|\cG_2|}{|\cG_1|\,p_{\eta}}\,
   \lambdaa_{\eta\,\F{\phi}(\xi)}.
\end{aligned}$ \\
\hline

Marginalization
&
$\cU$-covariant PSC on $\cU=\cG_1\times \cG_2$ with eigen list
$\blambda=\{\lambda_{(\chi,\eta)}\}$
&
Heralded mixture indexed by $\F{\cG_2}$
&
$\begin{aligned}[t]
p_{\eta}
&= \frac{1}{|\cU|}
   \sum_{\chi\in\F{\cG_1}}
   \lambda_{(\chi,\eta)},\\
\lambda_{\chi}^{\eta}
&= \frac{1}{|\cG_2|\,p_{\eta}}\,
   \lambda_{(\chi,\eta)}.
\end{aligned}$ \\
\hline

Automorphism
&
$\cG$-covariant PSC with eigen list
$\blambda=\{\lambda_{\chi}\}_{\chi\in\hcG}$;
automorphism $\phi:\cG\to\cG$
&
Single $\cG$-covariant PSC
&
$\begin{aligned}[t]
\lambda_{\chi}^{\phi}
&= \lambda_{\F{\phi}(\chi)}.
\end{aligned}$ \\
\hline
\end{tabular}
\end{table*}
\subsection{Closure of Quantum Message Passing on Tree Factor Graphs}
% The preceding lemmas show that each local factor update preserves the same class of sufficient statistics used by quantum message passing. The next theorem summarizes this closure property for tree factor graphs assembled from the local primitives considered in this section.

The preceding lemmas establish local closure for each primitive factor. More precisely, conditioned on the incoming herald values, every local update produces either a single group-covariant PSC or a finite heralded mixture of such PSCs on the appropriate output alphabet. The next theorem shows that this local closure property composes over any tree factor graph assembled from the primitive factors considered in this section.
\begin{theorem}[Closure of quantum message passing on trees]
\label{thm:qmp-closure}
% For each finite abelian group $\cG$, consider heralded mixture of PSCs with output states of the form
% \[
% W(g)=\sum_{x\in \cX} p_x\, W_x(g)\otimes \ketbra{x}{x},
% \qquad g\in \cG,
% \]
% where $\cX$ is finite, $\{p_x\}_{x\in \cX}$ is a probability distribution, and
% each $W_x$ is a $\cG$-covariant PSC. Denote this class as $\cM(\cG)$.

For each finite abelian group $\cG$, let $\cM(\cG)$ denote the class of heralded mixtures of $\cG$-covariant PSCs introduced in
Section~\ref{sec:background}. Let $T$ be a tree factor graph whose edge variables take values in finite abelian groups and whose local factors are check, equality, homomorphism,
marginalization, or automorphism factors. Fix a directed edge $e$ of $T$, and let $\cG_e$ denote the alphabet on that edge. Assume that every incoming leaf message in the computation tree associated with $e$ belongs to $\cM(\cG_{e'})$ on the corresponding leaf alphabet $\cG_{e'}$.

Then the message produced by quantum message passing on the directed edge $e$ belongs to $\cM(\cG_e)$.
Equivalently, every directed message generated by exact quantum message passing
on $T$ is a finite heralded mixture of group-covariant PSCs on the alphabet of
that edge.
\end{theorem}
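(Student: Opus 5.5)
The plan is a structural induction on the computation tree rooted at the directed edge $e$, where the depth of a message is the number of factor nodes strictly below it. In the base case, the message on $e$ is a leaf message and lies in $\cM(\cG_e)$ by hypothesis. For the inductive step, let $f$ be the factor node whose output is $e$, with incoming directed edges $e_1,\dots,e_d$. By the induction hypothesis, each incoming message $W_i$ lies in $\cM(\cG_{e_i})$ and has ensemble $\{p^{(i)}_{x_i},W^{(i)}_{x_i}\}_{x_i\in\cX_i}$. Since $T$ is a tree, the subtrees feeding distinct $e_i$ are disjoint. Hence the joint incoming state is the tensor product $\bigotimes_i W_i(g_i)$, and the herald registers of different branches are independent.

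The key reduction is to pull the heralds outside the factor. Because each herald register is block diagonal in a fixed orthonormal basis that does not depend on $g_i$, the induced channel (which is a normalized sum over satisfying tuples of tensor products) is linear in each $W_i(g_i)$. It therefore decomposes as
\begin{align*}
W_f(h)=\sum_{x_1,\dots,x_d} p^{(1)}_{x_1}\cdots p^{(d)}_{x_d}\, W_f^{(x_1,\dots,x_d)}(h)\otimes \ketbra{x_1,\dots,x_d}{x_1,\dots,x_d},
\end{align*}
up to a permutation of tensor factors. Here $W_f^{(x_1,\dots,x_d)}$ is the induced channel of $f$ applied to the pure components $W^{(1)}_{x_1},\dots,W^{(d)}_{x_d}$. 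This linearity is immediate from the definition of the induced CQ channel, since the normalization $N_h$ is independent of the inputs.

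It then suffices to show that the induced channel of each primitive factor, fed with group-covariant PSCs, lies in $\cM$. The cases are as follows.
\begin{itemize}
\item The two-edge check and equality factors are handled by Lemma~\ref{lem:check node abelian} and Lemma~\ref{lem:bit node abelian}.
\item Factors with $d$ edges use the recursive relations $\rho^{\cnop(d-1)}_h=\frac{1}{|\cG|}\sum_g\rho^{\cnop(d-2)}_g\otimes\rho^{(d)}_{g^{-1}h}$ and the analogous equality recursion. Each is an inner induction on $d$, in which the same herald-pulling step is reapplied to the heralded output of the $(d-1)$-edge factor.
\item The homomorphism factor with several inputs $\phi:\cG_1\times\dots\times\cG_d\to\cH$ is handled by first noting that $\bigotimes_i\ket{\psi^{(i)}_{g_i}}$ is a $\cG_1\times\dots\times\cG_d$-covariant PSC. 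Its eigen list is the product $\prod_i\lambda^{(i)}_{\chi_i}$, since the characters of a product group are products of characters and the tensor product of canonical states is canonical. Lemma~\ref{lem:surjective homomorphism} then applies to $\phi$ with codomain replaced by $\image\phi$.
\item Marginalization and automorphism factors follow from Lemma~\ref{lem:marginalization factor} and Lemma~\ref{lem:automorphism factor}.
\end{itemize}
Each case produces, possibly after an isometry, an ensemble $\{q_y,V_y\}$ of covariant PSCs. Nesting gives a finite herald $(x_1,\dots,x_d,y)$ with probabilities $\prod_i p^{(i)}_{x_i}\,q^{(x)}_y$, which is again an element of $\cM(\cG_e)$.

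The main obstacle is bookkeeping rather than computation. First, membership in $\cM$ is only up to isometric equivalence, so the isometries used by the lemmas act on the PSC registers only, and one must check that they do not disturb the orthogonal herald registers. This holds because they can be applied controlled on the herald, as $\sum_x V_x\otimes\ketbra{x}{x}$, which is again an isometry. Second, the covariance group of the output must be the edge alphabet $\cG_e$. For a non-surjective homomorphism, this requires identifying $\cG_e$ with $\image\phi$, or adopting the convention used after Lemma~\ref{lem:surjective homomorphism}. With these two points settled, the induction closes.
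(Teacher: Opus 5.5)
Your proposal is correct and follows the paper's proof. Both argue by induction on the depth of the computation tree: condition on the product herald $x=(x_1,\dots,x_r)$ with $p(x)=\prod_i p_i(x_i)$, apply the local factor lemma to the herald-conditioned PSCs, and nest the resulting herald $y$ into $(x,y)$ with weights $p(x)q_x(y)$. Your extra bookkeeping makes explicit several steps that the paper simply attributes to the local factor lemmas: linearity of the induced channel to justify pulling the heralds out, the inner induction for $d$-edge check and equality factors, the product eigen list $\prod_i\lambda^{(i)}_{\chi_i}$ for multi-input homomorphisms, and herald-controlled isometries.
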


\begin{proof}
We prove the claim by induction on the depth of the computation tree of the directed edge $e$. If the depth is zero, then $e$ is a leaf edge and the claim holds by assumption. Now suppose the claim holds for every directed edge whose computation tree has depth at most $d$, and let $e$ be a directed edge whose computation tree has depth $d+1$. Let $f$ be the factor node at which the message on $e$ is computed, and let $e_1,\dots,e_r$ be the incoming directed edges to $f$. Write $\cG_{e_i}$ for the alphabet on $e_i$.
By the induction hypothesis, for each $i\in\{1,\dots,r\}$ the incoming message on $e_i$ has the form
\begin{align*}
    W_i(g_i) =
\sum_{x_i\in \cX_i}
p_i(x_i)\,W_{i,x_i}(g_i)\otimes \ketbra{x_i}{x_i},
\qquad g_i\in \cG_{e_i},
\end{align*}
where $\cX_i$ is finite and each $W_{i,x_i}$ is a $\cG_{e_i}$-covariant PSC. Set $\cX\coloneqq \cX_1\times\cdots\times \cX_r$. For $x=(x_1,\dots,x_r)\in \cX$,  define $p(x)\coloneqq \prod_{i=1}^r p_i(x_i).$ Conditioned on the classical herald $x$, the incoming messages are the group-covariant PSCs $W_{1,x_1},\dots,W_{r,x_r}.$
Applying the local quantum message passing rule corresponding to the factor $f$ to these channels produces an outgoing CQ channel on $\cG_e$ which, by the results of Section~\ref{sec:quantum message passing factor graph lemmas}, is either a $\cG_e$-covariant PSC or a finite heralded mixture of $\cG_e$-covariant PSCs. Hence, for each $x\in \cX$, there exist a finite set
$\cY_x$, a probability distribution $\{q_x(y)\}_{y\in \cY_x}$, and
$\cG_e$-covariant PSCs $\{V_{x,y}\}_{y\in \cY_x}$ such that
\begin{align*}
    \widetilde W_x(g)
=
\sum_{y\in \cY_x}
q_x(y)\,V_{x,y}(g)\otimes \ketbra{y}{y},
\qquad g\in \cG_e.
\end{align*}
Therefore, the unconditional outgoing message is
\begin{align*}
    W_e(g)
=
\sum_{x\in \cX} p(x)\,\widetilde W_x(g)\otimes \ketbra{x}{x},
\qquad g\in \cG_e.
\end{align*}
Substituting the expression for $\widetilde W_x(g)$ yields
\begin{align*}
    W_e(g)
=
\sum_{x\in \cX}\sum_{y\in \cY_x}
p(x)\,q_x(y)\,V_{x,y}(g)\otimes \ketbra{x,y}{x,y},
\qquad g\in \cG_e.
\end{align*}
This is a finite heralded mixture of $\cG_e$-covariant PSCs, so
$W_e\in \cM(\cG_e)$. This completes the induction.
\end{proof}

\section{Applications to Decoding }\label{sec:application to decoding}

In this section, we explain how the quantum message-passing rules developed in Section~\ref{sec:quantum message passing factor graph lemmas} apply to decoding classical codes on group-covariant PSCs whose symbol alphabets are finite abelian groups. The factors considered earlier, namely check, equality, homomorphism, marginalization, and automorphism factors, already provide the local building blocks for several standard code families. Therefore, once the induced CQ channels for these factors are characterized, the corresponding quantum decoding rules follow by composing the same local updates on the code's factor graph.

Let $\cG$ be a finite abelian group, and let $W:g\mapsto \ketbra{\psi_g}{\psi_g}$ for $g\in \cG$ be a $\cG$-covariant PSC. A code over $\cG$ is described by a factor graph whose edge variables take values in finite abelian groups and whose local constraints are expressed through equality constraints, parity constraints, general homomorphisms, automorphisms, and marginalization operations. Along each edge, quantum message passing propagates the induced message in the form provided by Section~\ref{sec:quantum message passing factor graph lemmas}. Since this class of messages is preserved under the local factor updates, any code whose factor graph is assembled from these primitives can be decoded within the same framework.

This viewpoint recovers the previously studied $q$-ary setting as a special case, while extending the same decoding description to arbitrary finite abelian groups and to factor graphs whose local constraints are described by homomorphisms between products of abelian groups.

% Since for polar codes and LDPC codes, the relevant eigen list based quantum message passing updates were already established in the symmetric $q$-ary setting, and our purpose here is primarily to show how those constructions extend to arbitrary finite abelian groups through homomorphism and automorphism factors \cite{mandal2026belief}. By contrast, the trellis-based convolutional recursion in terms of
% character-indexed eigen lists and marginalization is new in the present framework. Accordingly, we treat the convolutional decoder in greater detail and then use it as the constituent building block for turbo decoding.

For polar codes and LDPC/group codes, the relevant eigen list based quantum message-passing updates were already identified in the symmetric $q$-ary setting \cite{mandal2026belief}. Our purpose here is therefore not to repeat the full development in the symmetric $q$-ary setting, but rather to show how those constructions extend to arbitrary finite abelian groups once the local constraints are expressed through homomorphism and automorphism factors. In the polar case, this extension appears through the decomposition of the kernel into automorphism, check, and equality factors. In the LDPC/group-code case, it appears through the observation that Tanner constraints over $\cG$ are assembled from equality, check, homomorphism, and automorphism factors, so the closure result of Theorem~\ref{thm:qmp-closure} applies directly on trees. By contrast, for convolutional and turbo codes, the finite-state-machine recursion in terms of character-indexed eigen lists together with marginalization is a more substantial part of the present framework. Accordingly, we treat the convolutional decoder in greater detail and then use it as the constituent building block for turbo decoding.
\subsection{Polar Codes over Abelian Groups}

A polar transform over a finite abelian group $\cG$ is specified by an invertible homomorphism $K\colon \cG^2\rightarrow \cG^2$.
If $(u_1,u_2)\in \cG^2$ denotes the input vector and
$(x_1,x_2)=K(u_1,u_2)$ denotes the encoded vector, then the factor graph for the kernel can be decomposed into elementary homomorphism, check, equality, and
automorphism factors. Therefore, the synthetic channels of the polar transform are obtained recursively by repeated application of the update rules of Section~\ref{sec:quantum message passing factor graph lemmas}. In the abelian group setting, when we consider the Ar{\i}kan kernel of the input pair $(u_1,u_2)$, the encoded pair $(x_1,x_2)$ satisfy the following relation
\[
(x_1,x_2)=(u_1u_2,u_2).
\]
Let the encoded pair $(x_1,x_2)$ be sent through the $\cG$-covariant CQ channel $W_1$ and $W_2$. Then, for polar decoding, we first construct the induced CQ channels seen by $u_1$ and $u_2$. The \emph{bad} channel, i.e. the CQ channel seen by $u_1$ can be written as 
\begin{align*}
    W^{-}(u_1) & =\frac{1}{|\cG|}\sum_{u_2\in \cG} W_1(u_1u_2)\otimes W_2(u_2)\\
    & = \frac{1}{|\cG|}\sum_{u\in \cG} W_1(u)\otimes W_2(uu_{1}^{-1}).
\end{align*}
Next, we define the $\cG$-covariant CQ channel $W_{2}^{\text{inv}}$ such that it satisfies 
\begin{align*}
    W_{2}^{\text{inv}}(g)\coloneqq W_{2}(g^{-1}).
\end{align*}
Thus, $W^{-}$ can be written as 
\begin{align*}
    W^{-}(u_1) & = \frac{1}{|\cG|}\sum_{u\in \cG} W_1(u)\otimes W_2^{\text{inv}}(u^{-1}u_{1})\\
    & = [W_{1}\cnop W_2^{\text{inv}}](u_1).
\end{align*}
Since $W_{2}^{\text{inv}}$ corresponds to an induced CQ channel from automorphism factor, the channel $W^{-}$ can be analyzed via the automorphism factor and check factor. 
The CQ channel seen by $u_2$ uses the knowledge of $u_1$ as classical side information. Thus, the induced CQ channel, which is also referred to as the \emph{good} channel, can be written as 
\begin{align*}
    W^{+}(u_2)=\frac{1}{|\cG|}\sum_{u_1\in \cG}\ketbra{u_1}{u_1}\otimes W_{1}(u_1u_2)\otimes W_2(u_2).
\end{align*}
Let the channel $W_1$ be $\cG$-covariant under the unitary representation $\{U_{g}\}_{g\in \cG}$ such that 
\begin{align*}
    U_{g}W_{1}(u_1)U_{g}^{\dagger}=W_{1}(u_1g).
\end{align*}
Define the unitary $U^{\text{flip}}$ as follows
\begin{align*}
    U^{\text{flip}}= \sum_{u_1\in \cG}\ketbra{u_1}{u_{1}}\otimes U_{u_1}^{\dagger}.
\end{align*}
Applying $U^{\text{flip}}$ on the outputs of channel $W^+ $ $\forall u_2\in \cG$, we get 
\begin{align}\label{eq:bit node polarization}
 &   U^{\text{flip}} W^+(u_2)(U^{\text{flip}})^{\dagger}\\
 &  =\frac{1}{|\cG|}\sum_{u_1\in \cG}\ketbra{u_1}{u_1}\otimes U_{u_1}^{\dagger} W_1(u_1u_2) U_{u_1}\otimes W_2(u_2)\nonumber\\
 & = \frac{\mI}{|\cG|}\otimes W_{1}(u_2)\otimes W_{2}(u_2)\nonumber\\
 & = \frac{\mI}{|\cG|}\otimes [W_{1}\vnop W_2](u_2).
\end{align}
Thus, the channel $W^{+}$ is isometrically equivalent to an induced CQ channel of equality factor. Since we can use this construction to recursively define polar codes of arbitrary length ($2^n$) with $n\in \mathbb{N}$, the quantum message passing framework can be used to decode polar codes on group-covariant PSCs.
The advantage of the abelian-group formulation is that the polar kernel need not arise from field multiplication. It suffices that the local operations be homomorphisms or automorphisms of $\cG$. Hence, the same polar quantum message based decoding description applies not only to cyclic alphabets and additive field alphabets, but also to non-cyclic groups such as direct products of cyclic groups.

\subsection{LDPC Codes and Group Codes}
A Tanner graph whose edge variables take values in finite abelian groups consists of variable nodes and check nodes.
% A Tanner graph over a finite alphabet of abelian groups $\cG$ consists of variable nodes and check nodes. 
Each variable node enforces equality among all edge copies of the same symbol, and therefore its local quantum message passing rule is given by the equality lemma of Section~\ref{sec:quantum message passing factor graph lemmas}.
Each check node enforces a group constraint of the form
\[
\phi_{1}(x_{v_1})\phi_{2}(x_{v_2})\cdots \phi_{d_a}(x_{v_{d}})=e_{\cH},
\] where $\cH$ is an abelian group associated with the check and each
$\phi_{j}:\cG\to \cH$ is a homomorphism. Thus, the quantum message passing rule at a check node is an application of the homomorphism and check factor update.
In the simplest case, $\cH=\cG$ and each $\phi_{j}$ is either the identity map or an automorphism of $\cG$. Then the parity constraint reduces to a generalized group parity equation, and the quantum message passing update is obtained by repeated use of the check lemma
together with automorphism relabeling on the incident edges. When
$\cG=(\mathbb F_q,+)$, multiplication by a nonzero field element is an automorphism of the additive group, so the present formulation recovers the standard nonbinary LDPC setting as a special case.

% Accordingly, a quantum message passing decoder for a Tanner graph over a finite abelian group requires only the following local operations: equality-node combination at variable nodes, check or
% homomorphism combination at parity nodes, and automorphism relabeling where needed. Thus, the Section~\ref{sec:quantum message passing factor graph lemmas} lemmas already provide the exact local decoding rules for LDPC and more general group codes.

Accordingly, a quantum message passing decoder for a Tanner graph over a finite abelian group requires only the following local operations: equality-node combination at variable nodes, check or homomorphism combination at parity nodes, and automorphism relabeling where needed. Thus, the lemmas of Section~\ref{sec:quantum message passing factor graph lemmas} already provide the exact local decoding rules for LDPC and more general group codes.

When the Tanner graph is cycle-free, this description is exact in the same sense as belief propagation on a tree. Indeed, a Tanner tree over $\cG$ is assembled entirely from equality, check, homomorphism, and automorphism factors, so Theorem~\ref{thm:qmp-closure} applies directly. Hence every directed message produced by quantum message passing along an edge $e$ belongs to the class $\cM(\cG_e)$, where $\cG_e$ denotes the alphabet associated with that edge. In particular, the posterior message at any target variable node is obtained by exact composition of the incoming channel observations and the parity-check constraints, and remains a finite heralded mixture of $\cG$-covariant PSCs.

\subsection{Convolutional and Turbo Decoding}

% We now specialize the local update rules from
% Section~\ref{sec:quantum message passing factor graph lemmas} to trellis-based decoding over finite abelian groups. A convolutional encoder over an abelian group admits a finite-state-machine representation, and one trellis section can be viewed
% as a local factor graph on the branch variable. The observed branch symbols are described by surjective homomorphisms, so the corresponding quantum message update is obtained by pullback through these homomorphisms, followed by local combination
% and marginalization to the next state. We first describe this constituent trellis decoder for a convolutional code over $\cG$, and then use the same constituent decoder as the basic ingredient in the turbo construction.

We now specialize the local update rules from
Section~\ref{sec:quantum message passing factor graph lemmas} to trellis-based decoding over finite abelian groups.
A convolutional encoder over an abelian group admits a finite-state-machine representation, and one trellis section can be viewed
as a local factor graph on the branch variable. The observed branch symbols are described by surjective homomorphisms, so the corresponding quantum message update is obtained by the special surjective homomorphism lemma, followed by local eigen list combination on the branch variable and marginalization to the next state.
We first describe this constituent trellis decoder for a convolutional code over $\cG$ \cite{piveteau2025efficient}, and then use the same constituent decoder as the basic ingredient in the turbo construction.

\subsubsection{Convolutional Decoding}
% For simplicity, we first consider a single-input convolutional encoder with memory $m$, whose input alphabet is a finite abelian group $\cG$.
% The encoder setup is as follows. We consider a finite state machine which at time $t$ admits symbol $g_t$ uniformly chosen from abelian group $\cG$. The machine also holds a state $S_t$ which stores previous $m$ input symbols such that 
% \begin{align*}
%     S_{t}=(g_{t-1},\dots,g_{t-m})
% \end{align*}
We first describe the classical finite-state-machine representation of a convolutional encoder over a finite abelian group $\cG$. At time $t$, the encoder takes an input symbol $g_t \in \cG$ and stores a state $S_t$ consisting of the previous $m$ input symbols. The state $S_t$ is given by
\begin{align*}
    S_t = (g_{t-1},\dots,g_{t-m}) \in \cG^m.
\end{align*}

% It is convenient to regard one trellis section as being indexed by the branch variable $U_{t}\coloneqq (g_t,S_t)\in \cG^{m+1}$.
% The encoder maps the input $U_t$ to generate $r$ outputs $x^{i}_t\in\cH$ via surjective homomorphisms $L_{i}\colon \cG^{m+1}\rightarrow \cH$ for $1\leq i\leq r$ and $\cH$ is some abelian group. More formally, we have 
% \begin{align*}
%       L_{i}(U_t)= x_{t}^i,\quad 1\leq i\leq r.
% \end{align*}
It is convenient to regard one trellis section as being indexed by the pair $(g_t,S_t) \in \cG^{m+1}$. The encoder maps this pair to $r$ output symbols $x_t^i \in \cH$, for $1 \le i \le r$, through surjective homomorphisms $L_i : \cG^{m+1} \to \cH$. Explicitly, the map is given by
\begin{align*}
    L_i(g_t,S_t)=x_t^i,\qquad 1 \le i \le r.
\end{align*}

% Thus
% \begin{align*}
%     S_{t+1} = (g_t,g_{t-1},\dots,g_{t-m+1}) \in \cG^m.
% \end{align*}
% This defines a single trellis section. Thus a convolutional codeword is obtained by traversing
% the trellis and applying the branch homomorphisms $\{L_i\}_{i=1}^r$ at each time. Then the state is updated to $S_{t+1}$ by erasing element $g_{t-m}$ from the memory and adding the current element $g_t$. 
This defines one trellis section. A convolutional codeword is obtained by traversing the trellis and applying the branch homomorphisms $\{L_i\}_{i=1}^r$ at each time step. The next state is obtained by shifting the memory and appending the current input symbol.  Therefore, the classical trellis description is determined by the input $(g_t,S_t)$, the output homomorphisms $\{L_i\}_{i=1}^r$, and the state update $S_t \mapsto S_{t+1}$.

% In the context of decoding convolutional codes on PSCs, we consider a $\cH$-covariant PSC $W$ which takes output symbols $x_{t}^i$ for $1\leq i\leq r$ and $1\leq t\leq T$ and outputs a state $\ketbra{\psi_{x_t^i}}{\psi_{x_t^i}}$. Thus, if we continue the encoding procedure till $T$ time steps, we receive the following PSC state at the receiver
% \begin{align}
%  \ket{\Psi}=  \bigotimes_{t=1}^{T} \bigotimes_{i=1}^{r}\ket{\psi_{x_{t}^{i}}}.
% \end{align}
We now specialize this trellis description to decoding on a $\cH$-covariant PSC $W$, where each output symbol $x_t^i \in \cH$ is transmitted through the channel $W:x \mapsto \ketbra{\psi_x}{\psi_x}$ for $x \in \cH$. Thus, after $T$ trellis sections, the receiver observes the tensor-product state
\begin{align*}
    \ket{\Psi}=\bigotimes_{t=1}^T \bigotimes_{i=1}^r \ket{\psi_{x_t^i}}.
\end{align*}

% For decoding, the key object is the induced CQ channel on one trellis section. At time
% $t$, the branch is indexed by the pair $(g_t,S_t)\in \cG^{m+1}$, and the observed
% quantum outputs on the parity branches induce, through the homomorphisms $L_i$, a
% message on this product group. 

% To analyze decoding on one trellis section, we consider the induced CQ channel whose input is the branch variable $(g_t,S_t) \in \cG^{m+1}$ and whose outputs are the quantum states corresponding to the symbols $\{x_t^i\}_{i=1}^r$. Since each output symbol is obtained from $(g_t,S_t)$ through the homomorphism $L_i$, the local trellis update is determined by the homomorphism-factor rules from Section~\ref{sec:quantum message passing factor graph lemmas}, followed by marginalization to the next state.
% The role of quantum message passing is to represent this branch message by
% its sufficient statistics in the character domain and then propagate it through the trellis.
% Accordingly, the state message on $S_t$ is concatenated with uniform random input $g_t\in \cG$, then combined with the observed branch messages, and finally
% marginalized to produce the message on the next state $S_{t+1}$.

To analyze one trellis section, we consider the induced CQ channel whose input is the branch variable $(g_t,S_t)\in\cG^{m+1}$ and whose outputs are the quantum states corresponding to the symbols $\{x_t^i\}_{i=1}^r$. Since each output symbol is obtained
from $(g_t,S_t)$ through the homomorphism $L_i$, the local trellis update is determined by the homomorphism factor rules from
Section~\ref{sec:quantum message passing factor graph lemmas}, followed by marginalization to the next state. Accordingly, the trellis recursion consists of three steps- lift the current state message to the branch variable by adjoining a
uniform input symbol $g_t\in\cG$, combine this branch-prior message with the observed branch messages, and then marginalize to obtain the message on $S_{t+1}$.
The goal is to estimate the input sequence $(g_1,\dots, g_{T})$ from the received state $\ket{\Psi}$ using the knowledge of Gram matrix eigen list $\{\lambdaa_{\chi}\}_{\chi\in \hcH}$ and homomorphisms $\{L_{i}\}_{i=1}^{r}$.

% In quantum message passing, each trellis step consists of four operations- lift the incoming state message to the branch variable, combine with local channel observations via the branch homomorphisms, marginalize the symbol leaving the state to obtain the next-state message, and repeat symmetrically backward for the BCJR recursion

% Denote $\cG_t\coloneqq \cG^{m}$. 

% We initialize the state $S_0=(g_{-1},\dots,g_{-m})$ using fixed elements from group $\cG$. Similarly we terminate the machine with state $S_{T+m+1}=(g_{T+m},\dots,g_{T+1})$. Due to group symmetry, the choice of the elements does not matter for decoding. At time $t$, consider the state $S_t$ and input $g_{t}$. 
% Let $\blambda^{t}=\{\lambda_{\chi}^t\}_{\chi\in \hcGm}$ be an eigen list for quantum state associated with $S_t$. 
% Consider the direct product $\cU\coloneqq \cG^{m+1}$. 

As in the usual trellis setting, we assume fixed initialization and termination states. More specifically, we take
\begin{align*}
    S_0=(g_{-1},\dots,g_{-m}),\qquad
S_{T+m+1}=(g_{T+m},\dots,g_{T+1}),
\end{align*}
where the boundary symbols are fixed elements of $\cG$. By group covariance, the particular choice of these boundary elements does not affect the form of the decoding recursion.

% Then to implement the quantum message passing algorithm based on the state machine, we first compute the effective eigen lists for homomorphisms $L_{i}$ $\forall i$ and the concatenation map $\phi$ which lifts the eigen list message of $S_{t}$ from dual group $\hcGm$ to the dual group $\hcGma$.  
To implement the trellis recursion, we describe the state message at time $t$ by a CQ channel
\[
W_t\colon S_t \mapsto \rho_{S_t},\qquad S_t\in \cG^m.
\]
In general, $W_t$ is not a single $\cG^m$-covariant PSC. Rather, by the marginalization step from the previous trellis section, it is a finite heralded mixture of $\cG^m$-covariant PSCs. Thus we write
\begin{align}
    W_t(S_t)
    =
    \sum_{x\in \mathcal{X}_t}
    p_x^t\,
    W_t^{x}(S_t)\otimes \ketbra{x}{x},
    \qquad S_t\in \cG^m,
\end{align}
where $\mathcal{X}_t$ is a finite herald alphabet, $\{p_x^t\}_{x\in\mathcal{X}_t}$ is a probability distribution, and for each $x\in\mathcal{X}_t$, the channel
\[
W_t^x\colon S_t\mapsto \ketbra{\psi_{S_t}^{\,t,x}}{\psi_{S_t}^{\,t,x}}
\]
is a $\cG^m$-covariant PSC characterized by an eigen list
\[
\blambda^{t,x}
=
\{\lambda_{\zeta}^{t,x}\}_{\zeta\in \hcGm}.
\]
For the initialized boundary state, this description is degenerate, with $\mathcal{X}_0$ consisting of a single element.

For each $x\in\mathcal{X}_t$, we first lift the state message on $\hcGm$ to a branch-prior message on $\hcGma$ by adjoining the trivial character on the new input coordinate. Equivalently, this corresponds to taking the current input symbol $g_t\in\cG$ to be uniform and independent of the state message. The resulting branch-prior eigen list
\[
\blambda^{\mathrm{In},t,x}
=
\{\lambda_{(\zeta,\eta)}^{\mathrm{In},t,x}\}_{(\zeta,\eta)\in \hcGma}
\]
satisfies
\begin{align}
    \lambda_{(\zeta,\eta)}^{\mathrm{In},t,x}
    =
    \begin{cases}
        |\cG|\,\lambda_{\zeta}^{t,x},
        & \eta=\chitriv,\\[0.8ex]
        0,
        & \eta\neq \chitriv,
    \end{cases}
\end{align}
for $\zeta\in \hcGm$ and $\eta\in \hcG$.

Next, for each branch output $1\le i\le r$, let
\[
\blambda^{\mathrm{ch},(i),t}
=
\{\lambda_{\xi}^{\mathrm{ch},(i),t}\}_{\xi\in \widehat{\cH}}
\]
denote the eigen list of the observed channel message on the $i$-th parity branch. Since
$L_i\colon \cG^{m+1}\to \cH$ is a surjective homomorphism, Lemma~\ref{lem:surjective homomorphism special}
shows that the induced branch message is a single $\cG^{m+1}$-covariant PSC with eigen list
\[
\blambda^{(i),t}
=
\{\lambda_{\chi}^{(i),t}\}_{\chi\in \hcGma}
\]
given by
\begin{align}
    \lambda_{\chi}^{(i),t}
    =
    \begin{cases}
        \dfrac{|\cG^{m+1}|}{|\cH|}\lambda_{\xi}^{\mathrm{ch},(
        i),t},
        & \chi=\hat{L}_i(\xi)\text{ for some }\xi\in \widehat{\cH},\\[2ex]
        0,
        & \chi\notin \image(\hat{L}_i),
    \end{cases}
\end{align}
where $\hat{L}_i\colon \widehat{\cH}\to \hcGma$ is the dual map of $L_i$.

Therefore, for each herald value $x\in\mathcal{X}_t$, the combined input message is again a single $\cG^{m+1}$-covariant PSC with eigen list
\begin{align}
    \blambda^{\mathrm{out},t,x}
    =
    \blambda^{\mathrm{In},t,x}
    \vnop
    \Bigl(\vnop_{i=1}^{r}\blambda^{(i),t}\Bigr).
\end{align}

We now obtain the message on the next state $S_{t+1}$ by marginalizing the input message over the symbol that leaves the memory.
Equivalently, we reparameterize the input variable $(g_t,S_t)\in \cG^{m+1}$ as $(S_{t+1},g_{t-m})\in \cG^{m+1}$, where
\begin{align}
    S_{t+1}=(g_t,g_{t-1},\dots,g_{t-m+1})
\end{align}
is the shifted state and $g_{t-m}$ is the discarded memory symbol.

For each fixed $x\in\mathcal{X}_t$, write the canonical branch state as
\begin{align}
    \ket{\psi_{(S_{t+1},g_{t-m})}^{\,t,x}}
    =
    \frac{1}{\sqrt{|\cG^{m+1}|}}
    \sum_{\zeta\in \hcGm,\eta\in \hcG}
    \sqrt{\lambda_{(\zeta,\eta)}^{\mathrm{out},t,x}}
    \,\zeta(S_{t+1})\eta(g_{t-m})\ket{(\zeta,\eta)}.
\end{align}
Averaging uniformly over the discarded symbol $g_{t-m}\in \cG$ and applying
Lemma~\ref{lem:marginalization factor}, we obtain a heralded mixture on the next state.
More precisely, for each $x\in\mathcal{X}_t$,
\begin{align}
    \frac{1}{|\cG|}
    \sum_{g_{t-m}\in \cG}
    \ketbra{\psi_{(S_{t+1},g_{t-m})}^{\,t,x}}{\psi_{(S_{t+1},g_{t-m})}^{\,t,x}}
\end{align}
is characterized by the ensemble
\[
\Bigl\{
\bigl(p_{\eta}^{t+1\mid x},\blambda^{t+1,x,\eta}\bigr)
\Bigr\}_{\eta\in \hcG},
\]
where
\begin{align}
    p_{\eta}^{t+1\mid x}
    =
    \frac{1}{|\cG|^{m+1}}
    \sum_{\zeta\in \hcGm}
    \lambda_{\zeta,\eta}^{\mathrm{out},t,x}
\end{align}
and
\begin{align}
    \lambda_{\zeta}^{t+1,x,\eta}
    =
    \frac{1}{|\cG|\,p_{\eta}^{t+1\mid x}}
    \lambda_{\zeta,\eta}^{\mathrm{out},t,x},
    \qquad \zeta\in \hcGm.
\end{align}
Hence the full next-state message is the heralded mixture
\begin{align}
    W_{t+1}(S_{t+1})
    =
    \sum_{x\in \mathcal{X}_t}\sum_{\eta\in \hcG}
    p_x^t\,p_{\eta}^{t+1\mid x}\,
    W_{t+1}^{x,\eta}(S_{t+1})
    \otimes \ketbra{x,\eta}{x,\eta},
\end{align}
where each $W_{t+1}^{x,\eta}$ is a $\cG^m$-covariant PSC with eigen list
$\blambda^{t+1,x,\eta}$.

Thus, the forward recursion propagates the state channel $W_t$, or equivalently the heralded family
\[
\Bigl\{
\bigl(p_x^t,\blambda^{t,x}\bigr)
\Bigr\}_{x\in\mathcal{X}_t},
\]
through the trellis rather than a single eigen list. This is the abelian-group quantum message-passing analogue of the classical BCJR forward recursion. The backward recursion is obtained by the same construction in reverse time, starting from the fixed terminal state. Once the forward and backward state channels are available, they are combined with the local branch observations to obtain the posterior message on $(g_t,S_t)$, and marginalization over the state yields the posterior message on $g_t$. Hence, the resulting trellis decoder performs symbol-wise MAP decoding of the input symbols.

The same construction extends in a straightforward way to multi-input convolutional encoders by replacing the single input symbol $g_t$ with an input vector and enlarging the trellis state accordingly. Therefore, the finite-state-machine construction above
provides a complete constituent decoder for convolutional codes over finite abelian groups. When $\cG=\mathbb{Z}_q$, this recursion reduces to the usual $q$-ary trellis
viewpoint; for general $\cG$, the same decoder is expressed through eigen list messages indexed by characters and branch constraints given by surjective homomorphisms. This constituent trellis decoder will be the basic ingredient in the turbo decoder described
next.

\subsubsection{Turbo Decoding}

A turbo code is obtained by coupling two constituent convolutional encoders through the same
information sequence and an interleaver. Let $\bg=(g_1,\dots,g_T)\in \cG^T$ be the information block. The first constituent encoder acts directly on $\bg$, while the
second constituent encoder acts on the interleaved block $\Pi(\bg)=(g_{\pi(1)},\dots,g_{\pi(T)}),$ where $\pi$ is a permutation of $\{1,\dots,T\}$ and $\Pi\colon \cG^T\to \cG^T$ is the
induced automorphism of the product group $\cG^T$.

For each constituent encoder $c\in\{1,2\}$, let $m_c$ denote its memory and let
\[
u_t^{(1)}\coloneqq g_t,
\qquad
u_t^{(2)}\coloneqq g_{\pi(t)},
\qquad t=1,\dots,T.
\]
The trellis state of constituent $c$ at time $t$ is
\[
S_t^{(c)}=(u_{t-1}^{(c)},\dots,u_{t-m_c}^{(c)})\in \cG^{m_c},
\] 
and the corresponding input variable is $(u_t^{(c)},S_t^{(c)})\in \cG^{m_c+1}$. For $i=1,\dots,r_c$, let $L_i^{(c)}\colon \cG^{m_c+1}\to \cH$ be the surjective homomorphisms describing the parity outputs of the $c$-th constituent
encoder on one trellis section. As in the convolutional subsection, the boundary states are
taken to be fixed by initialization and termination of the trellis.
To incorporate the systematic observation and the message received from the other constituent,
we also introduce the symbol projection
\[
L_0^{(c)}(u_t^{(c)},S_t^{(c)})=u_t^{(c)}.
\]
Let $\blambda^{\mathrm{sys},(c),t} =\{\lambda^{\mathrm{sys},(c),t}_{\chi}\}_{\chi\in \hcG}$
denote the eigen list of the observed channel message on the systematic symbol $u_t^{(c)}$,
and let $\blambda^{\mathrm{apr},(c),t}
   =\{\lambda^{\mathrm{apr},(c),t}_{\chi}\}_{\chi\in \hcG}$
denote the eigen list of the a priori message entering constituent $c$ on the same symbol. For the parity outputs, let $\blambda^{\mathrm{ch},(c),i,t}
   =
   \{\lambda_{\chi}^{\mathrm{ch},(c),i,t}\}_{\chi\in \widehat{\cH}},
   \qquad i=1,\dots,r_c,$
denote the eigen lists of the observed channel messages on the output symbols.

Exactly as in the convolutional subsection, the constituent state message at time $t$ is in
general not a single $\cG^{m_c}$-covariant PSC. Rather, it is a finite heralded mixture.
Thus, for each constituent $c\in\{1,2\}$, we write the forward state channel as
\[
W_t^{(c),\rightarrow}\colon S_t^{(c)} \mapsto \rho_{S_t^{(c)}}^{(c),t,\rightarrow},
\qquad
S_t^{(c)}\in \cG^{m_c},
\]
with representation
\begin{align}
W_t^{(c),\rightarrow}(S_t^{(c)})
=
\sum_{x\in \mathcal{X}_t^{(c),\rightarrow}}
p_x^{(c),t,\rightarrow}\,
W_t^{(c),x,\rightarrow}(S_t^{(c)})
\otimes \ketbra{x}{x},
\qquad
S_t^{(c)}\in \cG^{m_c},
\end{align}
where $\mathcal{X}_t^{(c),\rightarrow}$ is a finite herald alphabet,
$\{p_x^{(c),t,\rightarrow}\}_{x\in \mathcal{X}_t^{(c),\rightarrow}}$ is a probability
distribution, and for each $x\in \mathcal{X}_t^{(c),\rightarrow}$, the channel
\[
W_t^{(c),x,\rightarrow}\colon
S_t^{(c)}
\mapsto
\ketbra{\psi_{S_t^{(c)}}^{\,(c),t,x,\rightarrow}}
       {\psi_{S_t^{(c)}}^{\,(c),t,x,\rightarrow}}
\]
is a $\cG^{m_c}$-covariant PSC characterized by an eigen list
\[
\blambda^{(c),t,x,\rightarrow}
=
\{\lambda_{\zeta}^{(c),t,x,\rightarrow}\}_{\zeta\in \widehat{\cG^{m_c}}}.
\]
For the initialized boundary state, this description is degenerate, with
$\mathcal{X}_0^{(c),\rightarrow}$ consisting of a single element.

For each $x\in \mathcal{X}_t^{(c),\rightarrow}$, we first lift the state message on
$\widehat{\cG^{m_c}}$ to a branch-prior message on $\widehat{\cG^{m_c+1}}$ by adjoining the
trivial character on the fresh input coordinate. The resulting branch-prior eigen list
\[
\blambda^{\mathrm{In},(c),t,x,\rightarrow}
=
\{\lambda_{(\zeta,\eta)}^{\mathrm{In},(c),t,x,\rightarrow}\}_{(\zeta,\eta)\in \widehat{\cG^{m_c+1}}}
\]
satisfies
\begin{align}
\lambda_{(\zeta,\eta)}^{\mathrm{In},(c),t,x,\rightarrow}
=
\begin{cases}
|\cG|\,\lambda_{\zeta}^{(c),t,x,\rightarrow},
& \eta=\chitriv,\\[0.8ex]
0,
& \eta\neq \chitriv,
\end{cases}
\end{align}
for $\zeta\in \widehat{\cG^{m_c}}$ and $\eta\in \hcG$.

Let
\[
\blambda^{\mathrm{sym},(c),t}
=
\blambda^{\mathrm{sys},(c),t}
\vnop
\blambda^{\mathrm{apr},(c),t}
\]
denote the combined message on the symbol coordinate. Then, for each
$x\in \mathcal{X}_t^{(c),\rightarrow}$, the branch message is a single
$\cG^{m_c+1}$-covariant PSC with eigen list
\begin{align}
\blambda^{\mathrm{out},(c),t,x,\rightarrow}
=
\blambda^{\mathrm{In},(c),t,x,\rightarrow}
\vnop
\blambda^{\mathrm{sym},(c),t}
\vnop
\Bigl(\vnop_{i=1}^{r_c}\blambda^{(c),i,t}\Bigr).
\end{align}
This is the direct analogue of the local branch update in the convolutional decoder, with the
only difference that the symbol component now contains both the systematic observation and the
incoming a priori information.

Marginalizing the dropped symbol from the branch message yields, for each
$x\in \mathcal{X}_t^{(c),\rightarrow}$, a heralded mixture on the next state. In particular,
the induced message on $S_{t+1}^{(c)}$ is characterized by the ensemble
\[
\Bigl\{
\bigl(p_{\eta}^{(c),t+1,\rightarrow\mid x},
      \blambda^{(c),t+1,x,\eta,\rightarrow}\bigr)
\Bigr\}_{\eta\in \hcG},
\]
where
\begin{align}
p_{\eta}^{(c),t+1,\rightarrow\mid x}
=
\frac{1}{|\cG|^{m_c+1}}
\sum_{\zeta\in \widehat{\cG^{m_c}}}
\lambda_{(\zeta,\eta)}^{\mathrm{out},(c),t,x,\rightarrow}
\end{align}
and
\begin{align}
\lambda_{\zeta}^{(c),t+1,x,\eta,\rightarrow}
=
\frac{1}{|\cG|\,p_{\eta}^{(c),t+1,\rightarrow\mid x}}
\lambda_{(\zeta,\eta)}^{\mathrm{out},(c),t,x,\rightarrow},
\qquad
\zeta\in \widehat{\cG^{m_c}}.
\end{align}
Hence the full next-state message is
\begin{align}
W_{t+1}^{(c),\rightarrow}(S_{t+1}^{(c)})
=
\sum_{x\in \mathcal{X}_t^{(c),\rightarrow}}
\sum_{\eta\in \hcG}
p_x^{(c),t,\rightarrow}\,
p_{\eta}^{(c),t+1,\rightarrow\mid x}\,
W_{t+1}^{(c),x,\eta,\rightarrow}(S_{t+1}^{(c)})
\otimes \ketbra{x,\eta}{x,\eta},
\end{align}
where each $W_{t+1}^{(c),x,\eta,\rightarrow}$ is a $\cG^{m_c}$-covariant PSC with eigen list
$\blambda^{(c),t+1,x,\eta,\rightarrow}$.

Thus, the forward recursion propagates the state channel
$W_t^{(c),\rightarrow}$, or equivalently the heralded family
\[
\Bigl\{
\bigl(p_x^{(c),t,\rightarrow},\blambda^{(c),t,x,\rightarrow}\bigr)
\Bigr\}_{x\in \mathcal{X}_t^{(c),\rightarrow}},
\]
through the constituent trellis rather than a single eigen list.
Similarly, the backward recursion is obtained by applying the same constituent trellis
construction in reverse time. Hence the backward state message is again a finite heralded
mixture of $\cG^{m_c}$-covariant PSCs on $S_t^{(c)}$.

Once the forward and backward state channels are available, they are combined with the local
systematic and parity observations to obtain the posterior message on the branch variable
$U_t^{(c)}=(u_t^{(c)},S_t^{(c)})$. Marginalizing this posterior over the state coordinates
yields the posterior message on the information symbol $u_t^{(c)}$. If, instead, one omits
the incoming a priori message $\blambda^{\mathrm{apr},(c),t}$ from the above branch
combination and uses only the state messages together with the systematic and parity channel
observations, then the resulting marginalized message on $u_t^{(c)}$ is the \emph{extrinsic}
message produced by constituent decoder $c$.

The two constituent decoders are coupled through these extrinsic symbol messages. More
precisely, the extrinsic message produced by the first constituent on $g_t$ is interleaved
and used as the a priori message for the second constituent on $g_{\pi(t)}$. Since
$\Pi\colon \cG^T\to \cG^T$ is an automorphism of the product group, this relabeling is an
instance of the automorphism-factor update and does not introduce any new type of message.
Conversely, the extrinsic message produced by the second constituent is deinterleaved by
$\Pi^{-1}$ and then used as the a priori message for the first constituent. The equality
constraint between the two copies of the same information symbol is therefore enforced through
the equality-factor combination, while the interleaver acts only by automorphism relabeling.

Therefore turbo decoding over a finite abelian group is obtained by combining the following
ingredients:
\begin{enumerate}
    \item the finite-state-machine quantum message passing recursion for each constituent convolutional encoder,
    \item equality-factor combination on the information symbols shared by the constituents,
    \item automorphism relabeling induced by the interleaver and deinterleaver.
\end{enumerate}
Hence, no additional local quantum message-passing rule is needed to construct the turbo
decoder beyond the rules already developed in Section~\ref{sec:quantum message passing factor graph lemmas}. In particular, when
$\cG=\mathbb{Z}_q$, the above construction reduces to the usual $q$-ary turbo-decoding
viewpoint, while the present formulation also applies to turbo constructions over arbitrary
finite abelian groups.

\section{Density Evolution}\label{sec:DE}
In this section, we discuss density evolution (DE), which is a standard asymptotic method for analyzing message-passing decoders on sparse or locally tree-like graphs \cite{richardson2008modern}. In the context of decoding binary LDPC codes on qubit classical-quantum channels, Monte Carlo based DE was proposed in \cite{brandsen2022belief}. Later, this was extended to the analysis of binary polar codes \cite{mandal2023belief,mandal2024polar} and binary turbo codes \cite{piveteau2025efficient}. More recently, eigen list based DE was developed for $q$-ary LDPC and polar codes on symmetric PSCs \cite{mandal2026belief}.

In the present framework, DE is based on the same sufficient statistics used for quantum message passing. In particular, each message is represented by the corresponding eigen list indexed by characters of the dual group, together with the associated herald values whenever a local update produces a finite heralded mixture. By Theorem~\ref{thm:qmp-closure}, exact quantum message passing on a tree factor graph preserves this class of messages. Therefore, DE in the present setting requires no new local update rule beyond those already developed in Section~\ref{sec:quantum message passing factor graph lemmas}. When $\cG=\mathbb{Z}_q$, this reduces to the eigen list based DE formulation of \cite{mandal2026belief}. For a general finite abelian group, the same idea extends directly after replacing the cyclic Fourier index by a character of $\hcG$ and using the corresponding abelian-group updates for equality, check, homomorphism, marginalization, and automorphism factors. Hence, for LDPC and polar codes, the DE procedures in \cite{mandal2026belief} generalize directly to the present setting. 
% This is also why Section~\ref{sec:application to decoding} treats those code families only briefly. Once the abelian-group local update rules are in place, both the decoder description and the associated DE formalism follow by the same composition principle as in the $q$-ary case.

Another advantage of this representation is that the performance quantities introduced in Section~\ref{sec:background} can also be tracked during DE. Indeed, once the posterior message on a target symbol is represented by an eigen list, Lemma~\ref{lem:Holevo information and channel fidelity} yields the corresponding symmetric Holevo information, while Lemma~\ref{lem:pgm} yields the corresponding PGM symbol error probability. Therefore, on a tree factor graph, DE can be used to track, at each stage, the symmetric Holevo information and the PGM error rate associated with decoding the root-node symbol. In particular, the posterior message obtained at the root gives the corresponding estimate of the symbol error probability for decoding that symbol.

For convolutional and turbo codes, the constituent decoder is already described in the previous subsection through the finite-state-machine recursion. Therefore, DE is obtained by tracking the extrinsic message on a typical information symbol under the random-interleaver ensemble. In each constituent decoder, the forward and backward state messages are combined with the local systematic and parity observations, and marginalization over the state coordinates yields the posterior message on the information symbol. Omitting the incoming a priori message on the target symbol gives the corresponding extrinsic message. Thus, one DE update for turbo decoding is obtained by sampling incoming a priori eigen lists, applying the constituent forward-backward recursion on a selected window, and recording the resulting extrinsic eigen list on the target symbol. The extrinsic message from one constituent is then interleaved and used as the a priori message for the other constituent, while deinterleaving returns the updated message for the next outer iteration. Since the interleaver is an automorphism of the product group, this step again introduces no new type of message.

Because the exact trellis recursion propagates finite heralded mixtures, a Monte Carlo implementation of DE may sample the herald values generated by the local updates and retain the corresponding outgoing eigen list. This is sufficient for eigen list based DE, where one iteratively updates a population of extrinsic messages and estimates the resulting symbol error probability from the posterior eigen list using the pretty-good measurement. In this sense, DE for turbo codes in the present paper is the direct abelian-group extension of the eigen list based DE viewpoint of \cite{mandal2026belief}, with the constituent update supplied by the finite-state-machine recursion above. In the binary turbo setting, our eigen list based DE implementation recovers the numerical thresholds reported in \cite{piveteau2025efficient}.

Consider the turbo code which is characterized via constituent convolutional codes with the function $G(D)=p(D)/q(D)$ and the PSC $W$ with eigen list of the form $\blambda=[\lambda_0,\frac{q-\lambda_0}{q-1},\dots, \frac{q-\lambda_0}{q-1}]$ with $q=3$.
In Fig.~\ref{plot:turbo_threshold_57_rate1-3_q3}, we obtain the DE threshold for rate $\frac{1}{3}$ turbo code with $G(D)=\frac{1+D^2}{1+D+D^2}$. In this case, we vary $\lambda_0$ to obtain the DE threshold which is $\lambda_{\text{DE}}=2.641$ while the Holevo threshold is $\lambda_{H}=2.7287$.
% \begin{figure}[ht]
% \centering
  
%     \includegraphics[width=0.7\linewidth]{pdf_figs/turbo_threshold_57_rate1-3_q3.pdf}
%     \caption{DE Threshold Curve for Rate $\frac{1}{3}$ Turbo Code with $G(D)=\frac{1+D^2}{1+D+D^2}$}
%     \label{plot:turbo_threshold_57_rate1-3_q3}

% \end{figure}

In Fig.~\ref{plot:turbo_threshold_2d_57_rate1-3_q3}, we obtain the two-dimensional heatmap of the DE success probability on the eigen list simplex
$\{\blambda \in \mathbb{R}_{\ge 0}^3 : \lambda_0+\lambda_1+\lambda_2=3\}$
for the rate-$\frac{1}{3}$ turbo code over $\mathbb{Z}_3$ with constituent transfer function
$G(D)=\frac{1+D^2}{1+D+D^2}$.
The solid contour indicates the DE boundary, and the dashed contour indicates the Holevo boundary.

Similarly, in Fig.~\ref{plot:turbo_threshold_57_rate1-4_q3} and Fig.~\ref{plot:turbo_threshold_2d_57_rate1-4_q3}, we plot DE threshold curve and two-dimensional heatmap for rate $\frac{1}{4}$ turbo code over $\mathbb{Z}_3$ with each constituent convolutional generator $G(D)=\frac{1+D^2}{1+D+D^2}$.
% \begin{figure}[ht]
% \centering
  
%     \includegraphics[width=0.7\linewidth]{pdf_figs/turbo_threshold_2d_57_rate1-3_q3.pdf}
%     \caption{Two-Dimensional Heatmap for DE Success Probability for Rate $\frac{1}{3}$ Turbo Code with $G(D)=\frac{1+D^2}{1+D+D^2}$}
%     \label{plot:turbo_threshold_2d_57_rate1-3_q3}

% \end{figure}

% \begin{figure}[ht]
% \centering
  
%     \includegraphics[width=0.7\linewidth]{pdf_figs/turbo_threshold_57_rate1-4_q3.pdf}
%     \caption{DE Threshold Curve for Rate $\frac{1}{4}$ Turbo Code with $G(D)=\frac{1+D^2}{1+D+D^2}$}
%     \label{plot:turbo_threshold_57_rate1-4_q3}

% \end{figure}

\begin{figure}[t]
    
    \subfloat[Rate $\frac{1}{3}$ Turbo Code \label{plot:turbo_threshold_57_rate1-3_q3}]{%
        \includegraphics[width=0.5\linewidth]{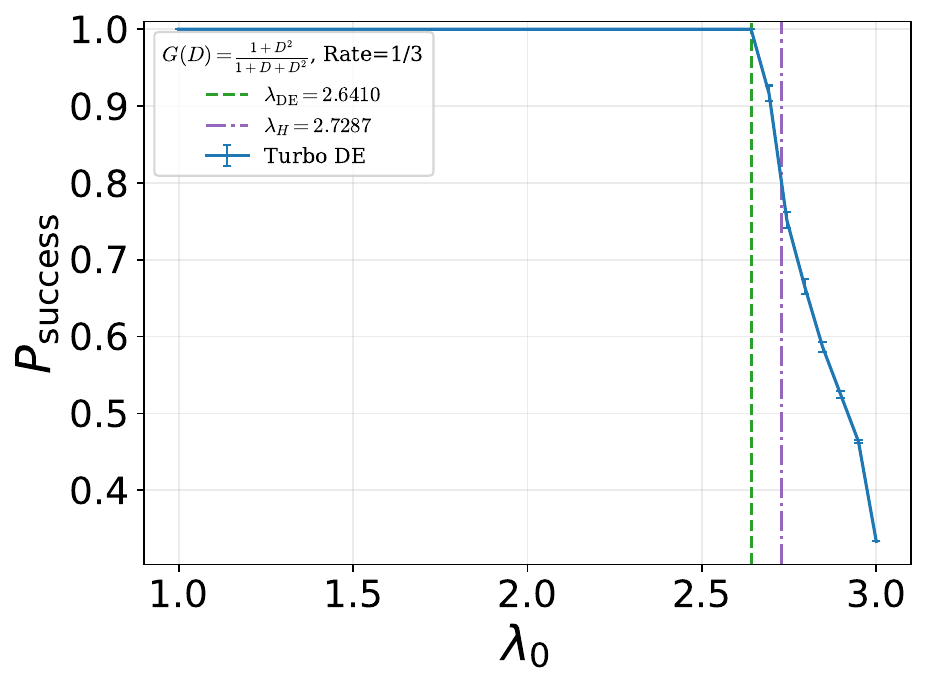}}
    \hfill
    \subfloat[Rate $\frac{1}{4}$ Turbo Code \label{plot:turbo_threshold_57_rate1-4_q3}]{%
        \includegraphics[width=0.5\linewidth]{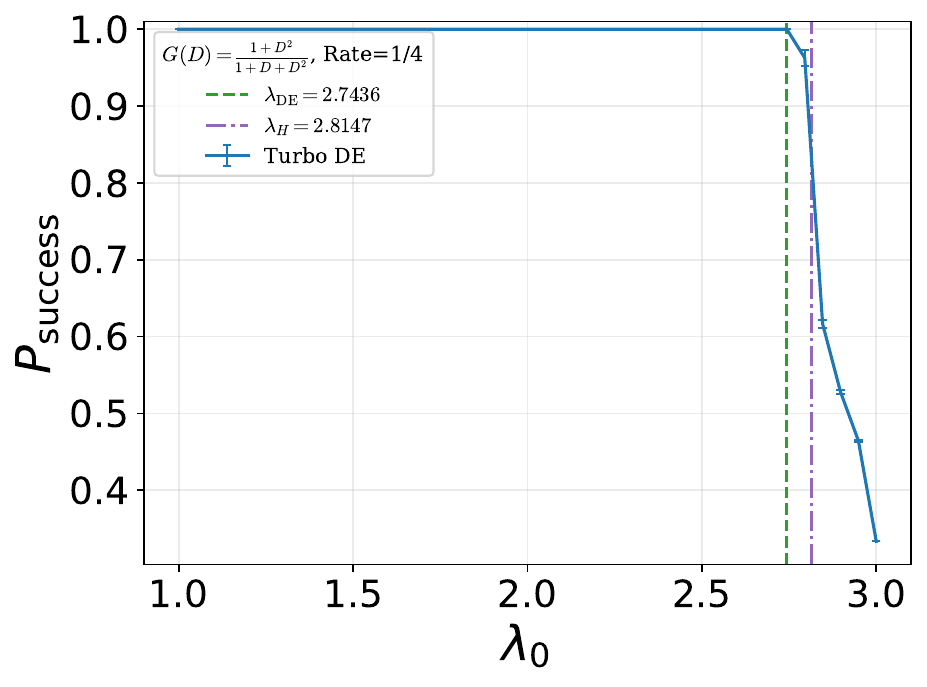}}
    
    \caption{DE Threshold Curve for Turbo Codes with  each constituent convolutional decoder with $G(D)=\frac{1+D^2}{1+D+D^2}$}
    % \label{}
\end{figure}
% \begin{figure}[ht]
% \centering
  
%     \includegraphics[width=0.7\linewidth]{pdf_figs/turbo_threshold_2d_57_rate1-4_q3.pdf}
%     \caption{Two-Dimensional Heatmap for DE Success Probability for Rate $\frac{1}{4}$ Turbo Code with $G(D)=\frac{1+D^2}{1+D+D^2}$}
%     \label{plot:turbo_threshold_2d_57_rate1-4_q3}

% \end{figure}

\begin{figure}[t]
    \centering
    \subfloat[Rate $\frac{1}{3}$ Turbo Code \label{plot:turbo_threshold_2d_57_rate1-3_q3}]{%
        \includegraphics[width=0.5\linewidth]{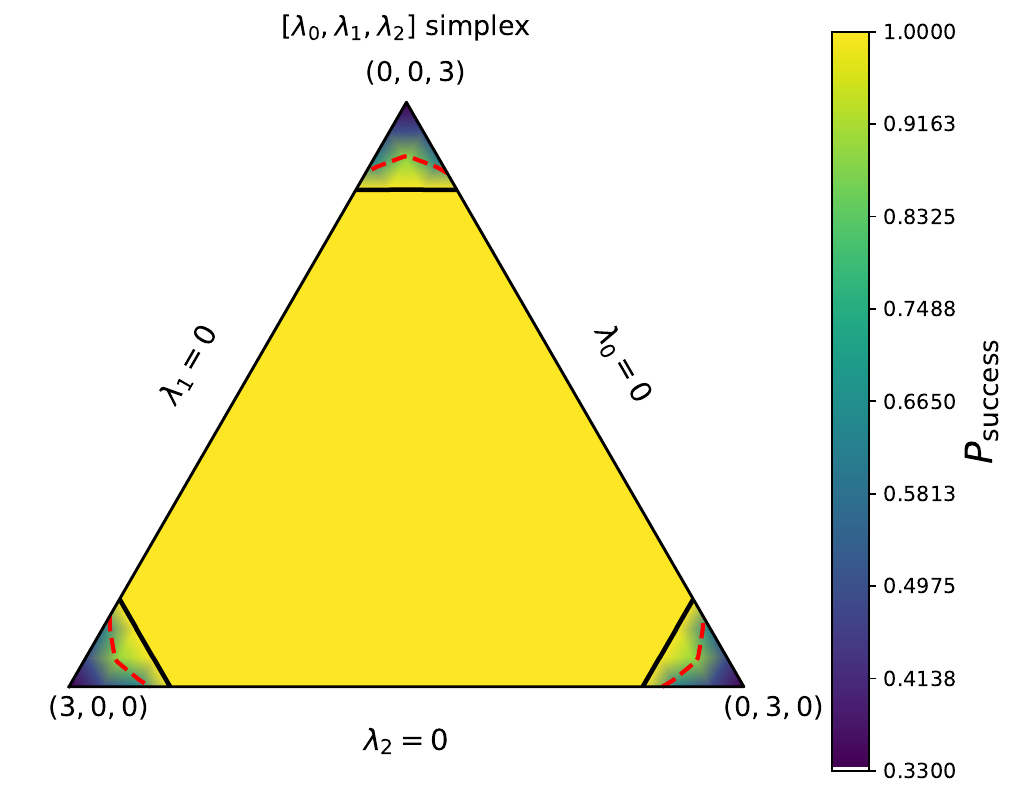}}
    \hfill
    \subfloat[Rate $\frac{1}{4}$ Turbo Code \label{plot:turbo_threshold_2d_57_rate1-4_q3}]{%
        \includegraphics[width=0.5\linewidth]{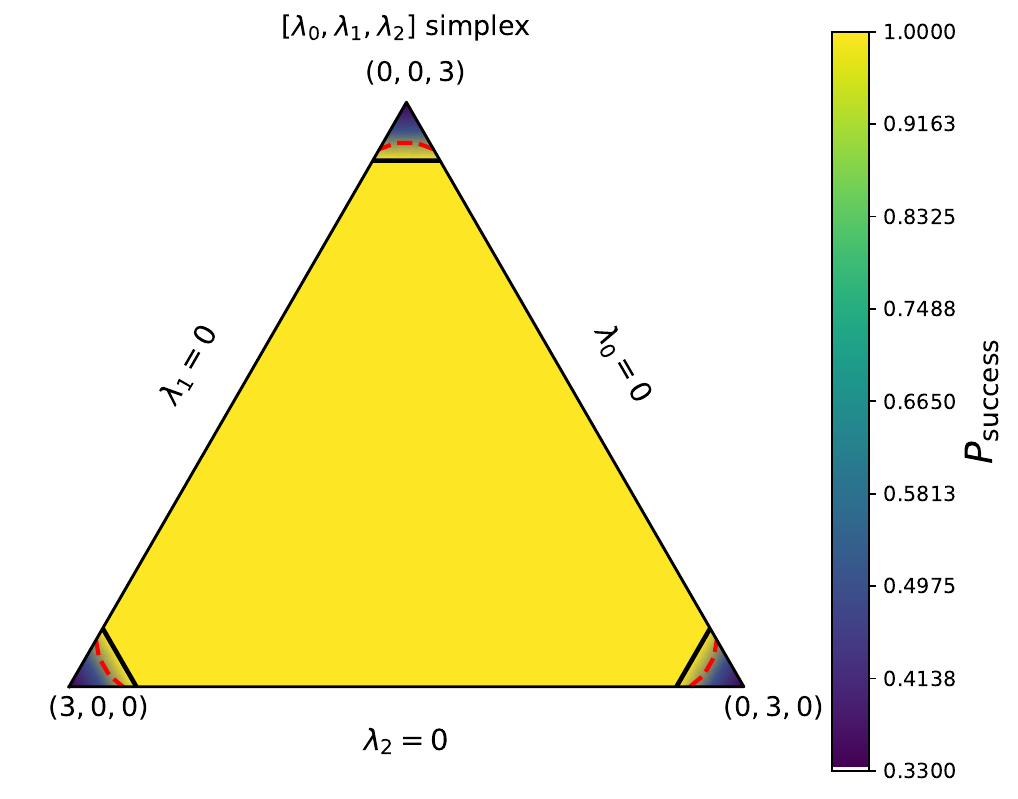}}
    
    \caption{Two-Dimensional Heatmap for DE Success Probability for Turbo Code with each constituent convolutional decoder with $G(D)=\frac{1+D^2}{1+D+D^2}$}
    % \label{}
\end{figure}
\clearpage
\printbibliography
\clearpage
\appendix
\renewcommand{\thesection}{{\color{black}Appendix} \Alph{section}} % Forces the word "Appendix"

\section{Example for Check-Factor Quantum Message Passing }\label{appendix:check factor example}
   \begin{example}
       
Consider $\cG=\mathbb{Z}_{3}\times \mathbb{Z}_{2}$, written additively, and define the characters
\begin{align*}
    \chi_{u,v}(a,b)=\omega^{ua}(-1)^{vb},
\end{align*}
where $\omega=e^{2\pi i/3}$ and $(u,v),(a,b)\in \mathbb{Z}_{3}\times \mathbb{Z}_{2}$. Then $\hcG=\{\chi_{u,v}:u\in \mathbb{Z}_{3},v\in \mathbb{Z}_{2}\}$ and
\begin{align*}
    (\chi_{u,v})^{-1}=\chi_{-u,v}.
\end{align*}
Index the eigen lists in the order
\begin{align*}
    \chi_{0,0},\chi_{1,0},\chi_{2,0},\chi_{0,1},\chi_{1,1},\chi_{2,1}.
\end{align*}
Let
\begin{align*}
    \blambda_{1}=(2,1,0,2,1,0),\qquad
    \blambda_{2}=(2,0,1,1,0,2).
\end{align*}
Then $\sum_{\chi\in \hcG}\lambdaa_{\chi}=\sum_{\chi\in \hcG}\lambdab_{\chi}=6=|\cG|$.

For $\chi=\chi_{1,0}$, Lemma~\ref{lem:check node abelian} gives
\begin{align*}
    p_{\chi_{1,0}}^{\cnop}
    & =\frac{1}{|\cG|^{2}}\sum_{\chi'\in \hcG}\lambdaa_{\chi_{1,0}\chi'}\lambdab_{\chi'}\\
    & =\frac{1}{36}\Big(
    \lambdaa_{\chi_{1,0}}\lambdab_{\chi_{0,0}}
    +\lambdaa_{\chi_{2,0}}\lambdab_{\chi_{1,0}}
    +\lambdaa_{\chi_{0,0}}\lambdab_{\chi_{2,0}}\\
    & \hspace{1.4cm}
    +\lambdaa_{\chi_{1,1}}\lambdab_{\chi_{0,1}}
    +\lambdaa_{\chi_{2,1}}\lambdab_{\chi_{1,1}}
    +\lambdaa_{\chi_{0,1}}\lambdab_{\chi_{2,1}}
    \Big)\\
    & =\frac{1}{36}(2+0+2+1+0+4)=\frac{1}{4}.
\end{align*}
Hence
\begin{align*}
    \blambda_{\chi_{1,0}}^{\cnop}
   & =\left\{\lambda_{\chi'}^{(\cnop,\chi_{1,0})}\right\}_{\chi'\in\hcG}\\
   &  =\frac{1}{6p_{\chi_{1,0}}^{\cnop}}
    \left(
    \lambdaa_{\chi_{1,0}}\lambdab_{\chi_{0,0}},
    \lambdaa_{\chi_{2,0}}\lambdab_{\chi_{1,0}},
    \lambdaa_{\chi_{0,0}}\lambdab_{\chi_{2,0}},
    \lambdaa_{\chi_{1,1}}\lambdab_{\chi_{0,1}},
    \lambdaa_{\chi_{2,1}}\lambdab_{\chi_{1,1}},
    \lambdaa_{\chi_{0,1}}\lambdab_{\chi_{2,1}}
    \right)\\
    &  =\left(\frac{4}{3},0,\frac{4}{3},\frac{2}{3},0,\frac{8}{3}\right).
\end{align*}

Similarly, one obtains
\begin{align*}
    p_{\chi_{0,0}}^{\cnop}=p_{\chi_{0,1}}^{\cnop}=\frac{1}{6},\qquad
    \blambda_{\chi_{0,0}}^{\cnop}=\blambda_{\chi_{0,1}}^{\cnop}
    =(4,0,0,2,0,0),
\end{align*}
\begin{align*}
    p_{\chi_{1,0}}^{\cnop}=p_{\chi_{1,1}}^{\cnop}=\frac{1}{4},\qquad
    \blambda_{\chi_{1,0}}^{\cnop}=\blambda_{\chi_{1,1}}^{\cnop}
    =\left(\frac{4}{3},0,\frac{4}{3},\frac{2}{3},0,\frac{8}{3}\right),
\end{align*}
and
\begin{align*}
    p_{\chi_{2,0}}^{\cnop}=p_{\chi_{2,1}}^{\cnop}=\frac{1}{12},\qquad
    \blambda_{\chi_{2,0}}^{\cnop}=\blambda_{\chi_{2,1}}^{\cnop}
    =(0,0,2,0,0,4).
\end{align*}
   \end{example}
\section{Example of Equality-Factor Quantum Message Passing} \label{appendix:equality factor example}
\begin{example}
Consider $\cG=\mathbb{Z}_{3}\times \mathbb{Z}_{2}$, written additively, and define the characters
\begin{align*}
    \chi_{u,v}(a,b)=\omega^{ua}(-1)^{vb},
\end{align*}
where $\omega=e^{2\pi i/3}$ and $(u,v),(a,b)\in \mathbb{Z}_{3}\times \mathbb{Z}_{2}$. Then $\hcG=\{\chi_{u,v}:u\in \mathbb{Z}_{3},v\in \mathbb{Z}_{2}\}$ and
\begin{align*}
    (\chi_{u,v})^{-1}=\chi_{-u,v}.
\end{align*}
Index the eigen lists in the order
\begin{align*}
    \chi_{0,0},\chi_{1,0},\chi_{2,0},\chi_{0,1},\chi_{1,1},\chi_{2,1}.
\end{align*}
Let
\begin{align*}
    \blambda_{1}=(2,1,0,2,1,0),\qquad
    \blambda_{2}=(2,0,1,1,0,2).
\end{align*}
Then $\sum_{\chi\in \hcG}\lambdaa_{\chi}=\sum_{\chi\in \hcG}\lambdab_{\chi}=6=|\cG|$.

For $\chi=\chi_{1,0}$, Lemma~\ref{lem:bit node abelian} gives
\begin{align*}
    \lambda_{\chi_{1,0}}^{\vnop}
    & =\frac{1}{|\cG|}\sum_{\chi'\in \hcG}\lambdaa_{\chi'}\lambdab_{\chi_{1,0}(\chi')^{-1}}\\
    & =\frac{1}{6}\Big(
    \lambdaa_{\chi_{0,0}}\lambdab_{\chi_{1,0}}
    +\lambdaa_{\chi_{1,0}}\lambdab_{\chi_{0,0}}
    +\lambdaa_{\chi_{2,0}}\lambdab_{\chi_{2,0}}\\
    & \hspace{1.2cm}
    +\lambdaa_{\chi_{0,1}}\lambdab_{\chi_{1,1}}
    +\lambdaa_{\chi_{1,1}}\lambdab_{\chi_{0,1}}
    +\lambdaa_{\chi_{2,1}}\lambdab_{\chi_{2,1}}
    \Big)\\
    & =\frac{1}{6}(0+2+0+0+1+0)=\frac{1}{2}.
\end{align*}
Similarly,
\begin{align*}
    \lambda_{\chi_{0,0}}^{\vnop}
    & =\frac{1}{6}\Big(
    \lambdaa_{\chi_{0,0}}\lambdab_{\chi_{0,0}}
    +\lambdaa_{\chi_{1,0}}\lambdab_{\chi_{2,0}}
    +\lambdaa_{\chi_{2,0}}\lambdab_{\chi_{1,0}}\\
    & \hspace{1.2cm}
    +\lambdaa_{\chi_{0,1}}\lambdab_{\chi_{0,1}}
    +\lambdaa_{\chi_{1,1}}\lambdab_{\chi_{2,1}}
    +\lambdaa_{\chi_{2,1}}\lambdab_{\chi_{1,1}}
    \Big)\\
    & =\frac{1}{6}(4+1+0+2+2+0)=\frac{3}{2},
\end{align*}
and
\begin{align*}
    \lambda_{\chi_{2,0}}^{\vnop}
    & =\frac{1}{6}\Big(
    \lambdaa_{\chi_{0,0}}\lambdab_{\chi_{2,0}}
    +\lambdaa_{\chi_{1,0}}\lambdab_{\chi_{1,0}}
    +\lambdaa_{\chi_{2,0}}\lambdab_{\chi_{0,0}}\\
    & \hspace{1.2cm}
    +\lambdaa_{\chi_{0,1}}\lambdab_{\chi_{2,1}}
    +\lambdaa_{\chi_{1,1}}\lambdab_{\chi_{1,1}}
    +\lambdaa_{\chi_{2,1}}\lambdab_{\chi_{0,1}}
    \Big)\\
    & =\frac{1}{6}(2+0+0+4+0+0)=1.
\end{align*}
Also,
\begin{align*}
    \lambda_{\chi_{0,1}}^{\vnop}=\frac{3}{2},\qquad
    \lambda_{\chi_{1,1}}^{\vnop}=\frac{1}{2},\qquad
    \lambda_{\chi_{2,1}}^{\vnop}=1.
\end{align*}
Hence
\begin{align*}
    \blambda^{\vnop}
    =\left\{\lambda_{\chi}^{\vnop}\right\}_{\chi\in\hcG}
    =\left(\frac{3}{2},\frac{1}{2},1,\frac{3}{2},\frac{1}{2},1\right).
\end{align*}
\end{example}
\section{Examples for Homomorphism Factor Quantum Message Passing}\label{appendix:homomorphism factor example}
\begin{example}
Consider
\begin{align*}
    \cG_{1}=\mathbb{Z}_{4}\times \mathbb{Z}_{3}\times \mathbb{Z}_{2},
    \qquad
    \cG_{2}=\mathbb{Z}_{4}\times \mathbb{Z}_{3},
\end{align*}
written additively. Define the characters of $\hcG_{1}$ by
\begin{align*}
    \chi_{u,v,w}(a,b,c)=i^{ua}\omega^{vb}(-1)^{wc},
\end{align*}
where $u\in\mathbb{Z}_{4}$, $v\in\mathbb{Z}_{3}$, $w\in\mathbb{Z}_{2}$, and $\omega=e^{2\pi i/3}$. Define the characters of $\hcG_{2}$ by
\begin{align*}
    \xi_{r,s}(x,y)=i^{rx}\omega^{sy},
\end{align*}
where $r\in\mathbb{Z}_{4}$ and $s\in\mathbb{Z}_{3}$.

Let the input channel have eigen list $\lambda_{1}=\{\lambda_{\chi}^{(1)}\}_{\chi\in\hcG_{1}}$ given by
\begin{align*}
    \lambda_{\chi_{u,v,w}}^{(1)}=
    \begin{cases}
        2, & (u,w)\in\{(0,0),(1,1)\},\\
        1, & (u,w)\in\{(2,0),(0,1),(2,1),(3,1)\},\\
        0, & (u,w)\in\{(1,0),(3,0)\},
    \end{cases}
\end{align*}
for all $v\in\mathbb{Z}_{3}$. Then
\begin{align*}
    \sum_{\chi\in\hcG_{1}}\lambda_{\chi}^{(1)}=24=|\cG_{1}|.
\end{align*}

Now consider the surjective homomorphism
\begin{align*}
    \phi(a,b,c)=(a+2c,b).
\end{align*}
For $\xi_{r,s}\in\hcG_{2}$, the dual map satisfies
\begin{align*}
    \hat{\phi}(\xi_{r,s})(a,b,c)
    =\xi_{r,s}(\phi(a,b,c))
    =i^{r(a+2c)}\omega^{sb}
    =\chi_{r,s,r \bmod 2}(a,b,c).
\end{align*}
Hence
\begin{align*}
    \mathrm{Im}\,\hat{\phi}
    =\{\chi_{u,v,w}\in\hcG_{1}: w=u \bmod 2\}.
\end{align*}
Take
\begin{align*}
    \mathcal{T}=\{\chi_{0,0,0},\chi_{0,0,1}\}.
\end{align*}
Then
\begin{align*}
    \hcG_{1}
    =\chi_{0,0,0}\,\mathrm{Im}\,\hat{\phi}
    \,\sqcup\,
    \chi_{0,0,1}\,\mathrm{Im}\,\hat{\phi}.
\end{align*}

For $\eta=\chi_{0,0,0}$, Lemma~\ref{lem:surjective homomorphism} gives
\begin{align*}
    p_{\chi_{0,0,0}}
    &=\frac{1}{|\cG_{1}|}\sum_{\chi\in \chi_{0,0,0}\mathrm{Im}\,\hat{\phi}}\lambda_{\chi}^{(1)} \\
    &=\frac{1}{24}\sum_{v\in\mathbb{Z}_{3}}
    \Big(
    \lambda_{\chi_{0,v,0}}^{(1)}
    +\lambda_{\chi_{1,v,1}}^{(1)}
    +\lambda_{\chi_{2,v,0}}^{(1)}
    +\lambda_{\chi_{3,v,1}}^{(1)}
    \Big) \\
    &=\frac{1}{24}\cdot 3(2+2+1+1)=\frac{3}{4}.
\end{align*}
Moreover,
\begin{align*}
    \lambda_{\xi_{r,s}}^{\chi_{0,0,0}}
    &=
    \frac{|\cG_{2}|}{|\cG_{1}|\,p_{\chi_{0,0,0}}}
    \lambda_{\chi_{0,0,0}\hat{\phi}(\xi_{r,s})}^{(1)} \\
    &=
    \frac{12}{24(3/4)}\lambda_{\chi_{r,s,r \bmod 2}}^{(1)}
    =
    \begin{cases}
        \frac{4}{3}, & r=0,1,\\[1mm]
        \frac{2}{3}, & r=2,3.
    \end{cases}
\end{align*}

For $\eta=\chi_{0,0,1}$, we obtain
\begin{align*}
    p_{\chi_{0,0,1}}
    &=\frac{1}{24}\sum_{v\in\mathbb{Z}_{3}}
    \Big(
    \lambda_{\chi_{0,v,1}}^{(1)}
    +\lambda_{\chi_{1,v,0}}^{(1)}
    +\lambda_{\chi_{2,v,1}}^{(1)}
    +\lambda_{\chi_{3,v,0}}^{(1)}
    \Big) \\
    &=\frac{1}{24}\cdot 3(1+0+1+0)=\frac{1}{4},
\end{align*}
and
\begin{align*}
    \lambda_{\xi_{r,s}}^{\chi_{0,0,1}}
    &=
    \frac{|\cG_{2}|}{|\cG_{1}|\,p_{\chi_{0,0,1}}}
    \lambda_{\chi_{0,0,1}\hat{\phi}(\xi_{r,s})}^{(1)} \\
    &=
    \frac{12}{24(1/4)}\lambda_{\chi_{r,s,1+r \bmod 2}}^{(1)}
    =
    \begin{cases}
        2, & r=0,2,\\[1mm]
        0, & r=1,3.
    \end{cases}
\end{align*}

Ordering the characters of $\hcG_{2}$ as
\begin{align*}
    \xi_{0,0},\xi_{0,1},\xi_{0,2},
    \xi_{1,0},\xi_{1,1},\xi_{1,2},
    \xi_{2,0},\xi_{2,1},\xi_{2,2},
    \xi_{3,0},\xi_{3,1},\xi_{3,2},
\end{align*}
the two output eigen lists are
\begin{align*}
    \lambda^{\chi_{0,0,0}}
    =
    \left(
    \frac{4}{3},\frac{4}{3},\frac{4}{3},
    \frac{4}{3},\frac{4}{3},\frac{4}{3},
    \frac{2}{3},\frac{2}{3},\frac{2}{3},
    \frac{2}{3},\frac{2}{3},\frac{2}{3}
    \right),
\end{align*}
and
\begin{align*}
    \lambda^{\chi_{0,0,1}}
    =
    \left(
    2,2,2,0,0,0,2,2,2,0,0,0
    \right).
\end{align*}
\end{example}

\begin{example}
Consider
\begin{align*}
    \phi(a,b,c)=(2a+2c,b),
\end{align*}
as a homomorphism from $\cG_{1}=\mathbb{Z}_{4}\times \mathbb{Z}_{3}\times \mathbb{Z}_{2}$ into $\mathbb{Z}_{4}\times \mathbb{Z}_{3}$. Its image is
\begin{align*}
    \mathrm{Im}\,\phi=\{0,2\}\times \mathbb{Z}_{3}\cong \mathbb{Z}_{2}\times \mathbb{Z}_{3}.
\end{align*}
Thus, to apply the previous construction, it is enough to consider the surjective homomorphism
\begin{align*}
    \tilde{\phi}(a,b,c)=(a+c,b)
\end{align*}
from $\cG_{1}$ onto $\mathbb{Z}_{2}\times \mathbb{Z}_{3}$.

Let $\xi_{r,s}\in\widehat{\mathbb{Z}_{2}\times \mathbb{Z}_{3}}$ be given by
\begin{align*}
    \xi_{r,s}(x,y)=(-1)^{rx}\omega^{sy},
\end{align*}
where $r\in\mathbb{Z}_{2}$ and $s\in\mathbb{Z}_{3}$. Then
\begin{align*}
    \hat{\tilde{\phi}}(\xi_{r,s})(a,b,c)
    =\xi_{r,s}(\tilde{\phi}(a,b,c))
    =(-1)^{r(a+c)}\omega^{sb}
    =\chi_{2r,s,r}(a,b,c).
\end{align*}
Hence
\begin{align*}
    \mathrm{Im}\,\hat{\tilde{\phi}}
    =\{\chi_{2r,s,r}:r\in\mathbb{Z}_{2},\,s\in\mathbb{Z}_{3}\}.
\end{align*}
Take
\begin{align*}
    \mathcal{T}
    =
    \{\chi_{0,0,0},\chi_{1,0,0},\chi_{0,0,1},\chi_{1,0,1}\}.
\end{align*}

For $\eta=\chi_{0,0,0}$, we have
\begin{align*}
    p_{\chi_{0,0,0}}
    &=\frac{1}{24}\sum_{s\in\mathbb{Z}_{3}}
    \Big(
    \lambda_{\chi_{0,s,0}}^{(1)}
    +\lambda_{\chi_{2,s,1}}^{(1)}
    \Big) \\
    &=\frac{1}{24}\cdot 3(2+1)=\frac{3}{8},
\end{align*}
and
\begin{align*}
    \lambda_{\xi_{r,s}}^{\chi_{0,0,0}}
    &=
    \frac{6}{24(3/8)}
    \lambda_{\chi_{0,0,0}\hat{\tilde{\phi}}(\xi_{r,s})}^{(1)} \\
    &=
    \begin{cases}
        \frac{4}{3}, & r=0,\\[1mm]
        \frac{2}{3}, & r=1.
    \end{cases}
\end{align*}
Thus
\begin{align*}
    \lambda^{\chi_{0,0,0}}
    =
    \left(
    \frac{4}{3},\frac{4}{3},\frac{4}{3},
    \frac{2}{3},\frac{2}{3},\frac{2}{3}
    \right).
\end{align*}

For $\eta=\chi_{1,0,0}$, we obtain
\begin{align*}
    p_{\chi_{1,0,0}}
    &=\frac{1}{24}\sum_{s\in\mathbb{Z}_{3}}
    \Big(
    \lambda_{\chi_{1,s,0}}^{(1)}
    +\lambda_{\chi_{3,s,1}}^{(1)}
    \Big) \\
    &=\frac{1}{24}\cdot 3(0+1)=\frac{1}{8},
\end{align*}
and
\begin{align*}
    \lambda_{\xi_{r,s}}^{\chi_{1,0,0}}
    &=
    \frac{6}{24(1/8)}
    \lambda_{\chi_{1,0,0}\hat{\tilde{\phi}}(\xi_{r,s})}^{(1)} \\
    &=
    \begin{cases}
        0, & r=0,\\[1mm]
        2, & r=1.
    \end{cases}
\end{align*}
Hence
\begin{align*}
    \lambda^{\chi_{1,0,0}}=(0,0,0,2,2,2).
\end{align*}

For $\eta=\chi_{0,0,1}$, we have
\begin{align*}
    p_{\chi_{0,0,1}}
    &=\frac{1}{24}\sum_{s\in\mathbb{Z}_{3}}
    \Big(
    \lambda_{\chi_{0,s,1}}^{(1)}
    +\lambda_{\chi_{2,s,0}}^{(1)}
    \Big) \\
    &=\frac{1}{24}\cdot 3(1+1)=\frac{1}{4},
\end{align*}
and
\begin{align*}
    \lambda_{\xi_{r,s}}^{\chi_{0,0,1}}
    =
    \frac{6}{24(1/4)}
    \lambda_{\chi_{0,0,1}\hat{\tilde{\phi}}(\xi_{r,s})}^{(1)}
    =1.
\end{align*}
Therefore
\begin{align*}
    \lambda^{\chi_{0,0,1}}=(1,1,1,1,1,1).
\end{align*}

Finally, for $\eta=\chi_{1,0,1}$,
\begin{align*}
    p_{\chi_{1,0,1}}
    &=\frac{1}{24}\sum_{s\in\mathbb{Z}_{3}}
    \Big(
    \lambda_{\chi_{1,s,1}}^{(1)}
    +\lambda_{\chi_{3,s,0}}^{(1)}
    \Big) \\
    &=\frac{1}{24}\cdot 3(2+0)=\frac{1}{4},
\end{align*}
and
\begin{align*}
    \lambda_{\xi_{r,s}}^{\chi_{1,0,1}}
    &=
    \frac{6}{24(1/4)}
    \lambda_{\chi_{1,0,1}\hat{\tilde{\phi}}(\xi_{r,s})}^{(1)} \\
    &=
    \begin{cases}
        2, & r=0,\\[1mm]
        0, & r=1.
    \end{cases}
\end{align*}
Hence
\begin{align*}
    \lambda^{\chi_{1,0,1}}=(2,2,2,0,0,0).
\end{align*}
\end{example}
\begin{example}
Consider
\begin{align*}
    \cG_{1}=\mathbb{Z}_{4}\times \mathbb{Z}_{3}\times \mathbb{Z}_{2},
    \qquad
    \cG_{2}=\mathbb{Z}_{4}\times \mathbb{Z}_{3},
\end{align*}
written additively, and define the surjective homomorphism
\begin{align*}
    \phi(a,b,c)=(a+2c,b).
\end{align*}
Its kernel is
\begin{align*}
    \ker\phi=\{(0,0,0),(2,0,1)\}.
\end{align*}
Let the characters of $\widehat{\cG_{1}}$ be
\begin{align*}
    \chi_{u,v,w}(a,b,c)=i^{ua}\omega^{vb}(-1)^{wc},
\end{align*}
where $u\in\mathbb{Z}_{4}$, $v\in\mathbb{Z}_{3}$, $w\in\mathbb{Z}_{2}$, and $\omega=e^{2\pi i/3}$. Let the characters of $\widehat{\cG_{2}}$ be
\begin{align*}
    \xi_{r,s}(x,y)=i^{rx}\omega^{sy},
\end{align*}
where $r\in\mathbb{Z}_{4}$ and $s\in\mathbb{Z}_{3}$. Then
\begin{align*}
    \hat{\phi}(\xi_{r,s})=\chi_{r,s,r \bmod 2},
\end{align*}
and hence
\begin{align*}
    \image\hat{\phi}
    =
    \{\chi_{u,v,w}\in \widehat{\cG_{1}}: w=u \bmod 2\}.
\end{align*}

Now consider the input eigen list $\blambda_{1}=\{\lambdaa_{\chi}\}_{\chi\in \widehat{\cG_{1}}}$ defined by
\begin{align*}
    \lambdaa_{\chi_{u,v,w}}=
    \begin{cases}
        2, & (u,w)=(0,0),\\
        1, & (u,w)=(1,1),\\
        3, & (u,w)=(2,0),\\
        2, & (u,w)=(3,1),\\
        0, & w\neq u \bmod 2,
    \end{cases}
\end{align*}
for all $v\in\mathbb{Z}_{3}$. Then
\begin{align*}
    \sum_{\chi\in \widehat{\cG_{1}}}\lambdaa_{\chi}
    =
    3(2+1+3+2)
    =
    24
    =
    |\cG_{1}|,
\end{align*}
so $\blambda_{1}$ is a valid eigen list. Also,
\begin{align*}
    \lambdaa_{\chi}=0,\qquad \forall \chi\notin \image(\hat{\phi}).
\end{align*}
Hence the hypothesis of Lemma~\ref{lem:surjective homomorphism special} is satisfied.

Next, let
\begin{align*}
    k=(2,0,1)\in\ker\phi.
\end{align*}
If $\chi_{u,v,w}\in \image(\hat{\phi})$, then $w=u \bmod 2$, and therefore
\begin{align*}
    \chi_{u,v,w}(k)
    =
    i^{2u}(-1)^w
    =
    (-1)^u(-1)^w
    =
    1.
\end{align*}
Thus, for every $(a,b,c)\in \cG_{1}$,
\begin{align*}
    \ket{\psia_{(a,b,c)+(2,0,1)}}
    &=
    \frac{1}{\sqrt{|\cG_{1}|}}
    \sum_{\chi\in \widehat{\cG_{1}}}
    \sqrt{\lambdaa_{\chi}}\,
    \chi\big((a,b,c)+(2,0,1)\big)\ket{\chi}\\
    &=
    \frac{1}{\sqrt{|\cG_{1}|}}
    \sum_{\chi\in \image(\hat{\phi})}
    \sqrt{\lambdaa_{\chi}}\,
    \chi(a,b,c)\chi(2,0,1)\ket{\chi}\\
    &=
    \frac{1}{\sqrt{|\cG_{1}|}}
    \sum_{\chi\in \image(\hat{\phi})}
    \sqrt{\lambdaa_{\chi}}\,
    \chi(a,b,c)\ket{\chi}\\
    &=
    \ket{\psia_{(a,b,c)}}.
\end{align*}
This shows that the states are unchanged along $\ker\phi$.

By Lemma~\ref{lem:surjective homomorphism special}, the induced channel $W_{2}$ on $\cG_{2}$ is a $\cG_{2}$-covariant PSC with eigen list $\blambda_{2}=\{\lambdab_{\xi}\}_{\xi\in \widehat{\cG_{2}}}$ satisfying
\begin{align*}
    \lambdab_{\xi_{r,s}}
    =
    \frac{|\cG_{2}|}{|\cG_{1}|}\lambdaa_{\hat{\phi}(\xi_{r,s})}
    =
    \frac{1}{2}\lambdaa_{\chi_{r,s,r \bmod 2}}.
\end{align*}
Hence
\begin{align*}
    \lambdab_{\xi_{r,s}}
    =
    \begin{cases}
        1, & r=0,\\[1mm]
        \frac{1}{2}, & r=1,\\[1mm]
        \frac{3}{2}, & r=2,\\[1mm]
        1, & r=3,
    \end{cases}
\end{align*}
independent of $s$. Ordering the characters of $\widehat{\cG_{2}}$ as
\begin{align*}
    \xi_{0,0},\xi_{0,1},\xi_{0,2},
    \xi_{1,0},\xi_{1,1},\xi_{1,2},
    \xi_{2,0},\xi_{2,1},\xi_{2,2},
    \xi_{3,0},\xi_{3,1},\xi_{3,2},
\end{align*}
the output eigen list is
\begin{align*}
    \blambda_{2}
    =
    \left(
    1,1,1,
    \frac{1}{2},\frac{1}{2},\frac{1}{2},
    \frac{3}{2},\frac{3}{2},\frac{3}{2},
    1,1,1
    \right).
\end{align*}

This example shows that, under the support condition in Lemma~\ref{lem:surjective homomorphism special}, the induced channel depends only on $\phi(a,b,c)$, even though the original label lies in $\cG_{1}$.
\end{example}

\end{document}